\begin{document}
\title{Are Generalized Cut-Set Bounds Tight for the Deterministic Interference Channel?}
\author{\IEEEauthorblockN{Mehrdad Kiamari and A. Salman Avestimehr}\\
\IEEEauthorblockA{Department of Electrical Engineering, \\University of Southern California\\
Emails: kiamari@usc.edu and avestimehr@ee.usc.edu}}

\maketitle
\begin{abstract}
We propose the idea of \emph{extended networks}, which is constructed by replicating the users in the two-user deterministic interference channel (DIC) and designing the interference structure among them, such that any rate that can be achieved by each user in the original network can also be achieved \emph{simultaneously} by all replicas of that user in the extended network. We demonstrate that by carefully designing extended networks and applying the generalized cut-set (GCS) bound to them, we can derive a tight converse for the two-user DIC. Furthermore, we generalize our techniques to the three-user DIC, and demonstrate that the proposed approach also results in deriving a tight converse for the three-user DIC in the symmetric case.
\end{abstract}
\begin{IEEEkeywords}
Deterministic Interference Channel, Generalized Cut-Set Bound, Converse.
\end{IEEEkeywords}

\section{Introduction} \label{sec1}
\IEEEPARstart{A}
 ~common tool for deriving outer bounds on the capacity region of communication networks is the cut-set bound \cite{cover}. It is well-known that the cut-set bound is tight in several general problems, such as the unicast and multicast wireline networks \cite{adet}-\cite{netcod}. Furthermore, it can also approximate the capacity of Gaussian relay networks to within a constant gap \cite{Gaussian}. However, once we go beyond the unicast and multicast problems, the cut-set bound is typically loose, even for the simplest case of the two-user deterministic interference channel (DIC) \cite{elgamal}. Therefore, to make progress in these problems, researchers have focused on obtaining new techniques to develop tighter outer bounds, such as the genie-aided bounds (see, e.g., \cite{genie}). Although these techniques enable progress, they are typically tailored to the specific problem setting that is considered, and unlike the cut-set bound they cannot be systematically applied to other problems.

As a result, there have recently been several efforts to generalize the cut-set bound in order to systematically obtain tighter outer bounds on the capacity region of communication networks, in particular the generalized network sharing (GNS) bound for wireline networks \cite{tse_sharing}, the generalized cut-set (GCS) bound for deterministic networks \cite{gcs}, and its extension to noisy networks \cite{yang_sharing}. The key idea behind the GCS bound \cite{gcs}, which is the main focus of this paper, is to construct a serial concatenation of a network and derive a tighter bound by applying the cut-set bound to the concatenated network, with a special restriction on its allowed transmit signal distributions. This bound has resulted in development of several new outer bounds on the deterministic $k$-unicast two-hop networks \cite{gcs}.

In this paper, we focus on the two-user DIC and investigate whether GCS bound can be systematically utilized for deriving a tight converse for this problem. While directly applying GCS bound on the two-user DIC does not lead to tight outer bounds, we introduce the idea of \emph{extended networks}, and show that, quite interestingly, by carefully designing extended networks and applying the GCS bound to them, we are able to derive tight outer bounds on the capacity region of the two-user DIC. The main idea behind extended networks is to create several copied versions of each user in the two-user DIC, with an interference structure among them, such that all copied users in the extended network can \emph{simultaneously} achieve the same rate as the original user by deploying the same coding scheme as the one used in the original network. Therefore, any sum-rate bound on the extended network can lead to a weighted sum-rate bound on the original network, where the coefficient of the corresponding rate weighted sum-rate bound depends on the number of copied versions of each user in the extended network. We demonstrate that by carefully designing the structure of extended networks and applying GCS bound to them, we can systematically derive all outer bounds not the capacity region of the two-user DIC.

The idea of extended networks is general, and can also be utilized to derive outer bounds on interference channels with more than two users. We consider the symmetric three-user DIC \cite{jaafaar}, and demonstrate that all outer bounds on the capacity region of this problem can also be recovered by carefully constructing the corresponding extended network for each outer bounds and applying the GCS to it.

We start the paper by describing the system model of the two-user DIC and its capacity region in Section~\ref{sec2}, and providing an overview of the generalized cut-set (GCS) bound \cite{gcs} in Section~\ref{sec3}. We then formally introduce the idea of extended networks in Section~\ref{sec4}, and demonstrate how all bounds on the capacity region of the  two-user DIC can be derived by constructing extended networks and applying the GCS bound to them. We finally consider the three-user DIC in Section~\ref{sec5}, and demonstrate how the ideas proposed in this paper can be generalized beyond two users to derive all outer bounds on the capacity region of this problem in the symmetric case.
\section{System Model of the Two-User DIC} \label{sec2}
The general system model of the two-user DIC is depicted in Fig. \ref{fig1}. This network, which is denoted by $\mathcal N$, consists of two source nodes $S_1$ and $S_2$ producing $W_1 \in {\{1,2,...,M_1\}}$ and $W_2 \in {\{1,2,...,M_2\}}$ respectively. Then, encoder 1 and encoder 2 maps their corresponding messages, i.e. $W_1$ and $W_2$, into ${\bf X}_{1}^n=(X_{1}[1],...,X_{1}[n])$ and ${\bf X}_{2}^n=(X_{2}[1],...,X_{2}[n])$ respectively. The received signals at destination 1 and destination 2, i.e. $Y_1$ and $Y_2$ are deterministic functions of transmitted signals $X_1$, $X_2$, and interference signals $V_1$ and $V_2$:
\begin{figure}[htb]
\centering
\begin{center}
\includegraphics[width=90 mm]{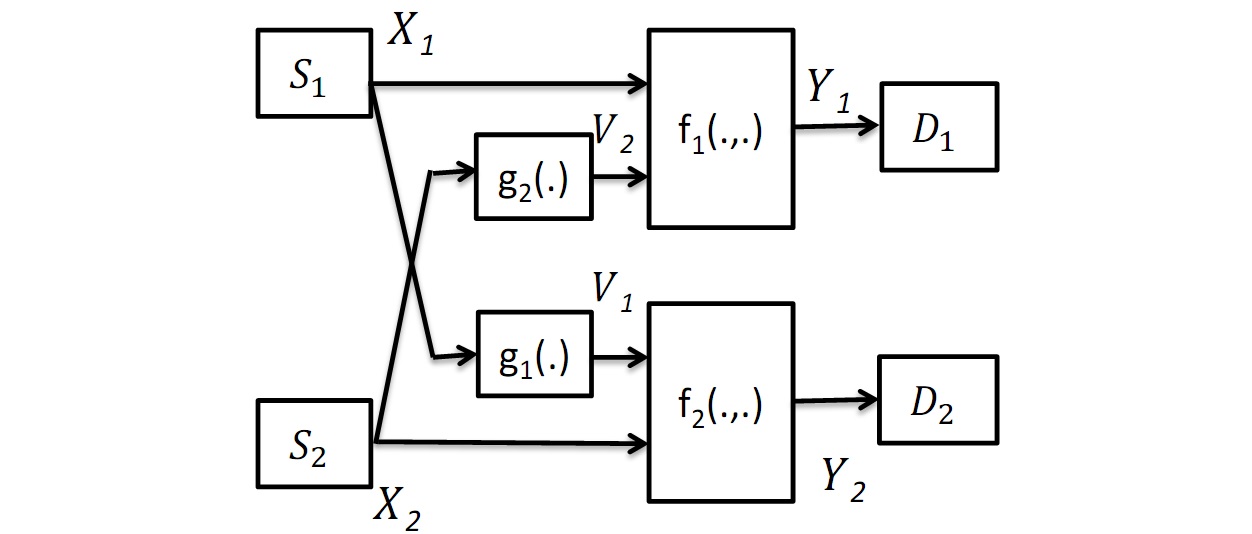} 
\caption{System model of the two-user DIC.}
\label{fig1}
\end{center}
\end{figure}
\small
\begin{equation} \label{eq0203}
\begin{aligned}
Y_1&=f_1(X_1,V_2)
\\Y_2&=f_2(X_2,V_1)
\\V_1&=g_1(X_1)
\\V_2&=g_2(X_2)
\end{aligned}
\end{equation}
\normalsize
where $g_1(.)$ and $g_2(.)$ are not generally invertible functions and $f_1(.,.)$ and $f_2(.,.)$ satisfy the following conditions which is the same as \cite{elgamal}
\small
\begin{equation} \label{eq0204}
\begin{aligned}
H(Y_1|X_1)&=H(V_2)
\\H(Y_2|X_2)&=H(V_1)
\end{aligned}
\end{equation}
\normalsize
these conditions are equivalent to
\small
\begin{equation} \label{eq0205}
\begin{aligned}
V_1&=h_2(X_2,Y_2)
\\V_2&=h_1(X_1,Y_1)
\end{aligned}
\end{equation}
\normalsize

\makeatletter
   \xpatchcmd{\@thm}{\fontseries\mddefault\upshape}{}{}{} 
\makeatother
\newtheorem{theorem}{Theorem}

\begin{theorem}
The capacity region $C$ of the two-user DIC is the union of all set of rate tuple $(R_1,R_2)$ satisfying
\small
\begin{subequations} \label{eq0206}
\begin{align}
R_1&\leq H(Y_1|V_2)
\\R_2&\leq H(Y_2|V_1)
\\R_1+R_2 &\leq H(Y_1)+H(Y_2|V_1,V_2)
\\R_1+R_2 &\leq H(Y_2)+H(Y_1|V_1,V_2)
\\R_1+R_2 &\leq H(Y_1|V_1)+H(Y_2|V_2)
\\2R_1+R_2 &\leq H(Y_1)+H(Y_2|V_2)+H(Y_1|V_1,V_2)
\\2R_2+R_1 &\leq H(Y_2)+H(Y_1|V_1)+H(Y_2|V_1,V_2)
\end{align}
\end{subequations}
\normalsize
over all product probability distributions $p(x_1)p(x_2)$.
\end{theorem}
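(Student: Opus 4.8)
The plan is to prove the characterization in the usual two directions, achievability and converse, and I expect essentially all of the difficulty to lie in the converse (indeed, this is precisely the part the paper later re-derives through extended networks and the GCS bound).

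\textbf{Achievability.} I would run a Han--Kobayashi scheme adapted to the deterministic structure. Split each message $W_i$ into a private and a common part, and let the common codeword carry exactly the interference $V_i=g_i(X_i)$ that the other receiver sees; because $V_i$ is a deterministic function of $X_i$, no auxiliary variable beyond $V_i$ is needed. Each receiver jointly decodes its own two message parts together with the common part of the interferer. Collecting the resulting rate constraints and eliminating the rate-split variables by Fourier--Motzkin yields the seven inequalities in (\ref{eq0206}); the single-letter input law is a product $p(x_1)p(x_2)$ since the encoders act independently.

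\textbf{Converse.} Fano's inequality gives $nR_i\le I(X_i^n;Y_i^n)+n\epsilon_n$, and the independence of the two messages makes $X_1^n,X_2^n$ (hence $V_1^n,V_2^n$) independent. For the single-rate bounds I would hand the genie signal $V_2^n$ to receiver $1$; since $X_1^n\perp V_2^n$ and $Y_1$ is a function of $(X_1,V_2)$,
\[
nR_1\le I(X_1^n;Y_1^n,V_2^n)+n\epsilon_n = I(X_1^n;Y_1^n|V_2^n)+n\epsilon_n = H(Y_1^n|V_2^n)+n\epsilon_n,
\]
which single-letterizes to $R_1\le H(Y_1|V_2)$, and symmetrically for $R_2$. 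For the sum-rate bounds I would use the channel conditions (\ref{eq0204}), i.e. $H(Y_i^n|X_i^n)=H(V_j^n)$, together with two elementary moves: the inequality $H(Y^n)-H(V^n)\le H(Y^n|V^n)$, and revealing $V_1^n$ to receiver $2$ for free (legitimate because $V_1^n\perp X_2^n$ and $Y_2$ is a function of $(X_2,V_1)$). Combining $nR_1\le H(Y_1^n)-H(V_2^n)$ with $nR_2\le H(Y_2^n|V_1^n)$ and then pulling $V_2^n$ into the conditioning produces $R_1+R_2\le H(Y_1)+H(Y_2|V_1,V_2)$; the mirror-image choice gives the bound with indices swapped, and applying the first move to both users directly yields $R_1+R_2\le H(Y_1|V_1)+H(Y_2|V_2)$.

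\textbf{The hard part} will be the weighted bounds $2R_1+R_2$ and $2R_2+R_1$. There I would pit two copies of the receiver-$1$ chain --- one plain and one conditioned on $(V_1,V_2)$ --- against a single receiver-$2$ chain conditioned on $V_2$, and the delicate point is to select the genie side information and to order the chain-rule expansions so that every negative term $-H(V_j^n)$ cancels against a conditioned $Y$-entropy with no residual cross terms surviving. Once all seven bounds are in single-block form, I would single-letterize by introducing a uniform time-sharing index $Q$ and setting $X_i=X_i[Q]$; conditioned on $Q$ the inputs remain independent, so each inequality holds for the memoryless channel under a product law, and folding $Q$ into the union over $p(x_1)p(x_2)$ via the standard convexity argument matches the achievable region and closes the proof.
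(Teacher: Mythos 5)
Your proposal is, in substance, the classical Han--Kobayashi achievability plus the El Gamal--Costa genie-aided converse, and that is exactly what the paper's own proof of this theorem consists of: the paper proves Theorem 1 by citation to those two works and does not spell out the details either. The steps you do write out are correct; the individual-rate and sum-rate manipulations (revealing $V_2^n$ to receiver 1, using $H(Y_i^n|X_i^n)=H(V_{\bar i}^n)$ and $H(Y^n)-H(V^n)\le H(Y^n|V^n)$, and exploiting the independence of $X_1^n$ and $X_2^n$) are the standard ones and they go through. The one place where you stop at a plan rather than a proof is the weighted bounds (4f)--(4g). For the record, the cancellation you are worried about does close with the three ingredients $nR_1\le H(Y_1^n)-H(V_2^n)+n\epsilon_n$, $nR_1\le H(V_1^n)+H(Y_1^n|V_1^n,V_2^n)+n\epsilon_n$ (genie $(V_1^n,V_2^n)$, using $I(X_1^n;V_1^n,V_2^n)=H(V_1^n)$), and $nR_2\le H(V_2^n)+H(Y_2^n|V_2^n)-H(V_1^n)+n\epsilon_n$ (genie $V_2^n$, using $H(Y_2^n|X_2^n)=H(V_1^n)$); summing these, the $\pm H(V_1^n)$ and $\pm H(V_2^n)$ terms cancel exactly and leave $H(Y_1^n)+H(Y_1^n|V_1^n,V_2^n)+H(Y_2^n|V_2^n)$, which single-letterizes to (4f). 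So the gap is one of execution, not of idea.

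Where your route genuinely diverges from the paper is that the paper's actual contribution is a different converse altogether: it recovers (4c)--(4d) by applying the generalized cut-set bound directly to the original network, and each remaining bound by applying the GCS bound to a purpose-built extended network in which users are replicated (Lemma 1 guarantees all replicas achieve the original rates simultaneously), with the weighting coefficients emerging from the replica counts and, for (4a), (4b), (4e), a $k\to\infty$ limit. What the paper's approach buys is systematization --- the ad hoc genie is replaced by a choice of nested cuts on a designed network, which is the stated point of the paper; what your approach buys is economy --- it is shorter and self-contained, needing neither Lemma 1 nor the GCS machinery, but it is tailored to this specific channel, which is precisely the limitation the paper sets out to overcome.
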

\begin{proof}[{Proof}]
The achievable scheme for this problem is based on Han-Kobayashi (HK) \cite{achiev}, and the converse has been derived by El Gamal and Costa \cite{elgamal}.
\end{proof}
As it is mentioned in introduction, our goal is to demonstrate all outer bounds (4a)-(4g) can be systematically derived by the idea of extended network and applying GCS bound to it. To that end, we next introduce GCS bound and extended networks.
A simple version of extended networks idea, called \emph{side-way concatenation}, was first introduced in \cite{thesis}. However, with this side-way concatenation, one can only recover bounds (4f) and (4g)(See section 7.3.1 of \cite{thesis}). The main contribution of this paper is to demonstrate that all bounds (4a)-(4f) can be recovered by the idea of extended networks and applying the GCS bound to them.
Not only can this technique be utilized for the two-user DIC, but it can be applied for three-user DIC as well.

\section{Overview of Generalized Cut Set Bound} \label{sec3}
Generalized Cut-Set bound is a generalization of classical cut-set bound introduced by \cite{gcs} which allows for deriving upper bound on capacity region.

In particular, consider a K-unicast memoryless network $\mathcal N$ including a set of nodes $\mathcal V$ which has $K$ sources $\mathcal S=[S_1,....,S_K]$ and $K$ destinations $\mathcal D=[D_1,....,D_K] $. At each time step $t$, $X_v [t]$ is the transmitted symbol from $v \in \mathcal V$ and $Y_v [t]$ is the received symbol at $v \in \mathcal V$. In a deterministic network, the received signals at time step $t$ is a function of transmitted signals which means $Y_{\mathcal V}[t]=F(X_{\mathcal V}[t])$. We assume that source nodes do not receive any symbols and destination nodes do not transmit any symbols.

The coding scheme $\mathcal C_n$ with rate tuple $(R_1,...,R_K)$ can be expressed as follow
\begin{enumerate}
  \item Encoder $E_i : \{1,2,...,2^{nR_{i}}\} \longrightarrow {\mathcal X}_{S_i}^{n}$ for $i=1,...,K $
  \item Relaying node $r_v^{(t)} : r_v^{(t-1)} \longrightarrow {\mathcal X}_v$ for $v \in \mathcal V \setminus \{ \mathcal S \cup \mathcal D \}$ and $t=1,...,n$
  \item Decoder $D_i : {\mathcal Y}_{D_i}^{n}  \longrightarrow \{1,2,...,2^{nR_{i}}\}$ for $i=1,...,K $
\end{enumerate}
and error probability $P_e(\mathcal C_n)$ can be written as
\begin{equation} \label{eq01001}
\begin{aligned}
P_e(\mathcal C_n)&= Prob\{W_i\neq g_i(Y_{D_i}[1],...,Y_{D_i}[n])\\&~~~~~~~~~~ for~some~i~ \in {1,...,K}\}.
\end{aligned}
\end{equation}

A rate tuple $(R_1,...,R_K)$ is said to be achievable if for any $\epsilon >0$, there exists a coding scheme such that $P_e(\mathcal C_n) \leq \epsilon$ for some large $n$. The closure of all achievable rate tuples would specify the capacity region $C \subset {\mathbb{R}}_{+}^K$.
\cite{gcs} has derived an upper bound for sum-rate on $\mathcal N$ in the case of deterministic networks as follow

\begin{theorem}[{\cite{gcs}}]
If a rate tuple $(R_1,...,R_K)$ is achievable on a K-unicast deterministic network, then it can be shown that there exists a joint distribution $p(x_V)$ on sources such that
\begin{equation} \label{eq0201}
\begin{aligned}
\sum_{i=1}^K {R_i} \leq \sum_{j=1}^l {I(X_{{\Omega}_j};Y_{{\Omega}_j^c}|X_{{\Omega}_j^c},Y_{{\Omega}_{j-1}^c})}
\end{aligned}
\end{equation}
for all choices of l node subsets ${\Omega}_1,....,{\Omega}_l$ such that $\mathcal V ={\Omega}_{0} \supseteq {\Omega}_1 \supseteq ... \supseteq{\Omega}_{l+1}=\emptyset$, and $d_i \in {\Omega}_j \Longleftrightarrow s_i \in {\Omega}_{j+1}$ for $i=1,...,K$ and $j=0,...,l$.
\label{th2}
\end{theorem}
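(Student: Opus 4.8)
The plan is to derive the bound by a genie-aided, layered cut-set argument and then single-letterize. First I would invoke Fano's inequality: since the rate tuple $(R_1,\dots,R_K)$ is achievable, for each pair $i$ we have $nR_i \le I(W_i;\mathbf{Y}_{d_i}^n)+n\epsilon_n$ with $\epsilon_n\to 0$. Because $W_i \to \mathbf{X}_{s_i}^n \to \mathbf{Y}_{d_i}^n$ forms a Markov chain and the network is deterministic, summing over $i$ reduces the task to upper-bounding $\sum_{i=1}^K I(\mathbf{X}_{s_i}^n;\mathbf{Y}_{d_i}^n)$, up to the vanishing term, by the stated right-hand side.

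Next I would exploit the nested structure. Since $\Omega_0 \supseteq \cdots \supseteq \Omega_{l+1}$ is a decreasing chain, every node lies in exactly one ``shell'' $\Omega_{j-1}\setminus\Omega_j$. The key is the compatibility condition $d_i\in\Omega_j \Leftrightarrow s_i\in\Omega_{j+1}$: I would use it to show that message $i$ is cut by $\Omega_j$ (that is, $s_i\in\Omega_j$ and $d_i\notin\Omega_j$) precisely when $d_i\in\Omega_{j-1}\setminus\Omega_j$, and that no destination lies in $\Omega_l$ (since $d_i\in\Omega_l \Leftrightarrow s_i\in\Omega_{l+1}=\emptyset$ is impossible). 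Hence each of the $K$ messages is separated by exactly one of the cuts $\Omega_1,\dots,\Omega_l$, so $\sum_i R_i$ reorganizes as a sum over the $l$ levels, with level $j$ collecting exactly the messages whose source is inside $\Omega_j$ and whose destination is outside.

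The main step is the telescoping genie argument. I would process the cuts from $\Omega_l$ outward to $\Omega_1$, and at level $j$ grant the decoders of the messages separated by $\Omega_j$ the received signals $\mathbf{Y}_{\Omega_{j-1}^c}^n$ across the previous cut as side information, which only increases mutual information and hence remains a valid outer bound. A standard cut-set manipulation, using that $\mathbf{Y}_V^n$ is a deterministic function of $\mathbf{X}_V^n$, then bounds the level-$j$ contribution by $I(\mathbf{X}_{\Omega_j}^n;\mathbf{Y}_{\Omega_j^c}^n\mid \mathbf{X}_{\Omega_j^c}^n,\mathbf{Y}_{\Omega_{j-1}^c}^n)$. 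The compatibility condition is exactly what makes the side information introduced at level $j$ match the cut used at level $j-1$, so the chain of conditionings lines up and the per-level bounds assemble into $\sum_{j=1}^l I(\mathbf{X}_{\Omega_j}^n;\mathbf{Y}_{\Omega_j^c}^n\mid \mathbf{X}_{\Omega_j^c}^n,\mathbf{Y}_{\Omega_{j-1}^c}^n)$ without accumulating slack. Finally I would single-letterize by introducing a time-sharing index $Q$ uniform on $\{1,\dots,n\}$, setting $X_v \treq X_v[Q]$, and applying the chain rule together with the fact that conditioning reduces entropy to pass from the $n$-letter expression to its single-letter counterpart; the resulting joint law of $(X_v)_{v\in\mathcal V}$ is the promised $p(x_V)$, and letting $n\to\infty$ removes $\epsilon_n$.

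I expect the telescoping step to be the principal obstacle. The difficulty is not any individual inequality but verifying that the nested conditioning on $\mathbf{Y}_{\Omega_{j-1}^c}^n$ is simultaneously a legitimate genie consistent with the deterministic constraints and precisely aligned across consecutive levels by the biconditional $d_i\in\Omega_j \Leftrightarrow s_i\in\Omega_{j+1}$, so that the layered cut-set bounds telescope exactly. Securing this alignment, rather than merely summing loose single-cut bounds, is what distinguishes the generalized cut-set bound from the classical one, and is where the careful bookkeeping resides.
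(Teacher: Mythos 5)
The paper itself contains no proof of this theorem --- it is imported verbatim from \cite{gcs}, and the only trace of its derivation here is the multi-letter inequality invoked as step $(a)$ throughout Sections IV and V. Your outline (Fano, assigning each message to the unique cut $\Omega_j$ that separates it via the biconditional $d_i\in\Omega_j \Leftrightarrow s_i\in\Omega_{j+1}$, the telescoping genie conditioning on $Y_{\Omega_{j-1}^c}^n$, and single-letterization through a time-sharing index that produces the joint law $p(x_{\mathcal V})$) is precisely the argument of that reference, so it is correct and takes the same route; the only part left implicit is the chain-rule bookkeeping inside the telescoping step, which you rightly identify as the crux.
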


It should be noted that in the case of deterministic network, (\ref{eq0201}) can be written as follow
\begin{equation} \label{eq0202}
\begin{aligned}
\sum_{i=1}^K {R_i} \leq \sum_{j=1}^l {H(Y_{{\Omega}_j^c}|X_{{\Omega}_j^c},Y_{{\Omega}_{j-1}^c})}.
\end{aligned}
\end{equation}

As an example of GCS bound, we demonstrate that (4c) and (4d) can be derived by applying Theorem 2 on the two-user DIC. It is easy to verify that (4c) and (4d) cannot be found through classical cut-set bounds. If a rate tuple $(R_1,R_2)$ is achievable on $\mathcal N$, then there exists a joint distribution on the sources of $\mathcal N$ such that
\small
\begin{subequations} \label{eq01001}
\begin{align}
R_1+R_2 &\leq H(Y_1)+H(Y_2|V_1,V_2)
\\R_1+R_2 &\leq H(Y_2)+H(Y_1|V_1,V_2)
\end{align}
\end{subequations}
\normalsize

By utilizing GCS bound on the original network with the cuts depicted in Fig. \ref{fig2}, we have
\small
\begin{equation} \label{eq402}
\begin{aligned}
R_1+R_2 &\leq H(Y_{2})+H(Y_{1}|X_2,Y_2)
\\&=H(Y_{2})+H(Y_1|X_2,Y_2,g_2(X_2),h_2(X_2,Y_2))
\\&=H(Y_{2})+H(Y_1|X_2,Y_2,V_2,V_1)
\\&\leq H(Y_2)+H(Y_1|V_1,V_2).
\end{aligned}
\end{equation}
\normalsize

Regarding bound (4d), it suffices to consider the original network. By applying GCS bound on the original network with the cuts depicted in Fig. \ref{fig3}, we have
\small
\begin{equation} \label{eq403}
\begin{aligned}
R_1+R_2 &\leq H(Y_{1})+H(Y_{2}|X_1,Y_1)
\\&=H(Y_{1})+H(Y_2|X_1,Y_1,g_1(X_1),h_1(X_1,Y_1))
\\&\leq H(Y_1)+H(Y_2|V_1,V_2).
\end{aligned}
\end{equation}
\normalsize

\begin{figure}[h]
\centering
\includegraphics[width=90 mm]{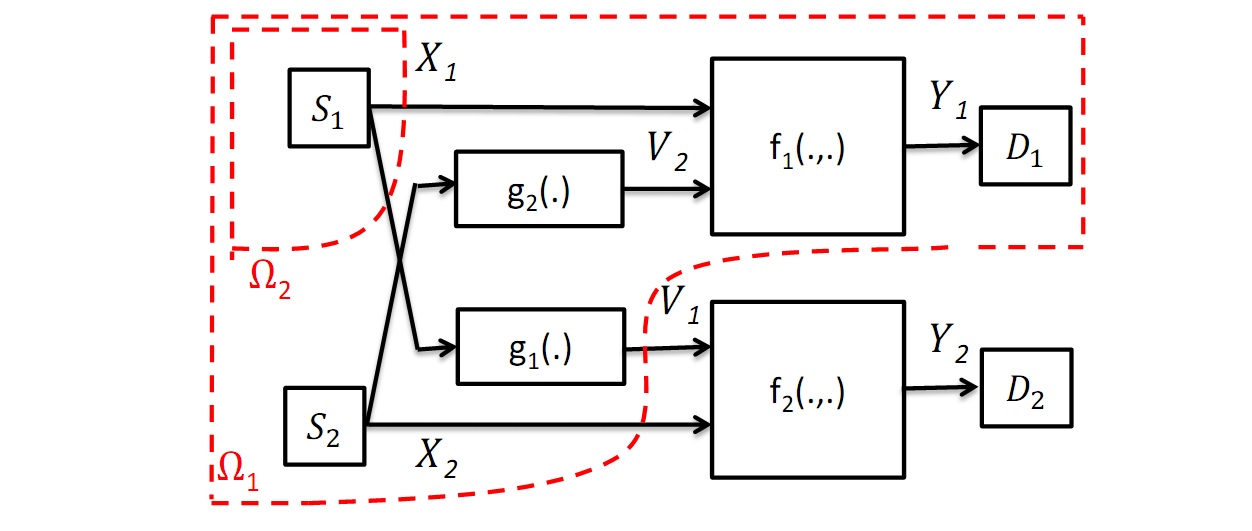}
\caption{Depicting the cuts for deriving bound (4c) of the two-user DIC}
\label{fig2}
\end{figure}

\begin{figure}[htb]
\centering
\begin{center}
\includegraphics[width=90 mm]{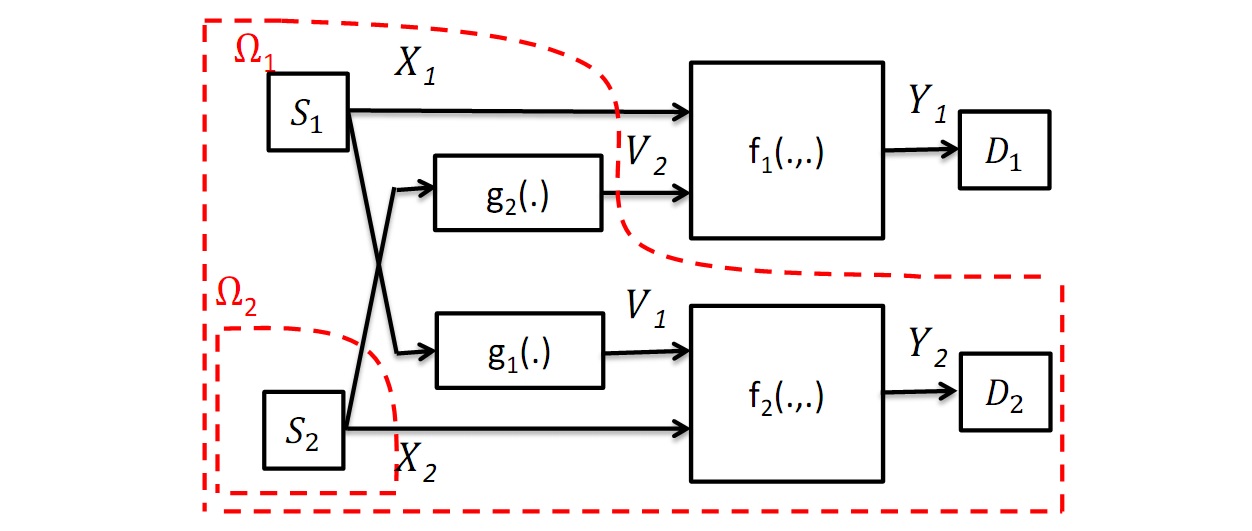} 
\caption{Depicting the cuts for deriving bound (4d) of the two-user DIC}
\label{fig3}
\end{center}
\end{figure}

Although GCS method recovers two bounds, i.e. (4c)-(4d), it fails to recover the remaining five bounds. The question raised here is that whether we can obtain the rest bounds or not. We will show that performing GCS on extended network rather than the original network is a solution for finding the remaining bounds.
\section{Converse for Two-User DIC} \label{sec4}
We start by formally defining an \emph{extended network}.
\newenvironment{definition}[1][Definition]{\begin{trivlist}
\item[\hskip \labelsep {\bfseries #1}]}{\end{trivlist}}
\begin{definition}
Consider a two-user DIC, denoted by $\mathcal N$, as defined in section II. A $(k_1,k_2)$ extended version of $\mathcal N$, denoted by $N_{k_1,k_2}$ is constructed by considering $k_1$ and $k_2$ copied versions of user 1 and user 2, respectively, as depicted in Fig. \ref{fig4}. In this extended network, each receiver gets interference only from one of the copied versions of the other user. It should be noted that interference function, i.e. $g_i(.)$, and function $f_i(,.,)$ for the copied versions of user $i$ are the same as the functions considered for the original user $i$ on the two-user DIC. The choice of interference pattern depends on the outer bound which is supposed to be derived. In Fig. \ref{fig4}, $S_{i}^{(j)}$ represents the source node of $j$th copied version of $i$th user. The message corresponded to source $S_{{i}^{(j)}}$ is denoted by $W_{{i}^{(j)}}\in [1:2^{nR_i}]$. It should be noted that the messages of different sources are independent from each other even if the same coding scheme is utilized for them. In this extended network, encoder function $E_i: {[1:2^{nR_i}]}\rightarrow {\mathcal X}_{S_i}^{n}$ and decoder function $D_i: {\mathcal Y}_{D_i}^{n} \rightarrow {[1:2^{nR_i}]}$ for $i=1,2$ are applied to the copied versions of $i$th user. Then, encoder function $E_i$ maps the corresponding message $W_{{i}^{(j)}}$ into $X_{i^{(j)}}^n=[X_{i^{(j)}}{[1]},...,X_{i^{(j)}}{[n]}]$. Furthermore, the received signal at each copy is a deterministic function of transmitted signal of its corresponding source and interference signal coming from the copied versions of the other user.
\end{definition}

\newtheorem{lem}{Lemma}
\begin{lem}
\label{lemext}
If a rate tuple $(R_1,R_2)$ is achievable on the two-user DIC, $\mathcal N$, then rate tuple $\{{\underbrace{R_1,...,R_1}_{k_1}},{\underbrace{R_2,...,R_2}_{k_2}}\}$ is achievable on any $(k_1,k_2)$ extended version of the network, $N_{k_1,k_2}$ by applying the same coding scheme as $S_i$ in $\mathcal N (i=1,2)$ to all $S_i^{(j)}$'s $(i=1,2, j=1,...,k_i)$ in $\mathcal N_{k_1,k_2}$.
\end{lem}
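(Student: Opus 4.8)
The plan is to verify that the prescribed scheme — each copy $S_i^{(j)}$ uses the encoder $E_i$ and each receiver uses the decoder $D_i$ of the original network $\mathcal N$ — achieves vanishing error on $N_{k_1,k_2}$. The essential observation I would build the argument on is that decoding at copy $i^{(j)}$ depends only on its own received block $Y_{i^{(j)}}^n$, so only the joint law of $(W_{i^{(j)}}, Y_{i^{(j)}}^n)$ enters its error analysis. I will show that this law is identical to that of $(W_i, Y_i^n)$ in $\mathcal N$, after which the per-copy error probabilities coincide with the per-user error probabilities of $\mathcal N$ and a union bound closes the argument.

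First I would establish the distributional equivalence at a fixed copy $i^{(j)}$. Let $i'^{(\sigma)}$ denote the unique copy of the other user whose interference reaches it; uniqueness holds by the defining property of the extended network that each receiver sees interference from a single copy of the other user. By construction $X_{i^{(j)}}^n=E_i(W_{i^{(j)}})$ with $W_{i^{(j)}}$ uniform on $[1:2^{nR_i}]$, the incoming interference is $V_{i'^{(\sigma)}}^n=g_{i'}(E_{i'}(W_{i'^{(\sigma)}}))$, and the output is $Y_{i^{(j)}}^n=f_i(X_{i^{(j)}}^n,V_{i'^{(\sigma)}}^n)$, all built from the \emph{same} functions $E_i,E_{i'},g_{i'},f_i$ used in $\mathcal N$. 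Because the messages across copies are mutually independent, $W_{i^{(j)}}$ is independent of $W_{i'^{(\sigma)}}$, which exactly reproduces the product structure $p(x_1)p(x_2)$ assumed in $\mathcal N$. Hence the triple $(W_{i^{(j)}},X_{i^{(j)}}^n,V_{i'^{(\sigma)}}^n)$ has the same joint distribution as $(W_i,X_i^n,V_{i'}^n)$, and passing through the identical deterministic map $f_i$ yields that $(W_{i^{(j)}},Y_{i^{(j)}}^n)$ and $(W_i,Y_i^n)$ are identically distributed.

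From here the conclusion is immediate. Applying the common decoder $D_i$ to $Y_{i^{(j)}}^n$ therefore incurs exactly the per-user error probability $P_{e,i}$ of $\mathcal N$. Writing the total error of the scheme on $N_{k_1,k_2}$ as a union over all copies and applying the union bound gives $P_e(N_{k_1,k_2})\le k_1 P_{e,1}+k_2 P_{e,2}$. Since $(R_1,R_2)$ is achievable on $\mathcal N$, for every $\epsilon>0$ there is a block length $n$ with $P_{e,1},P_{e,2}\le \epsilon/(k_1+k_2)$, so the right-hand side is at most $\epsilon$; as $k_1,k_2$ are fixed finite integers, this establishes achievability of the repeated tuple $\{R_1,\dots,R_1,R_2,\dots,R_2\}$ on $N_{k_1,k_2}$.

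The step requiring the most care — and the only place the hypotheses are genuinely used — is the distributional equivalence above, resting on two facts: that each receiver is fed by a single interfering copy, and that the messages are independent across copies. I would stress that a single copy of one user may in general supply interference to \emph{several} copies of the other user, inducing correlations among the outputs of those receivers. This causes no difficulty: each decoder observes only its own output block, so only the marginal law at each receiver enters the error analysis, and the union bound over the finitely many copies is insensitive to any such cross-receiver correlation. Thus the apparent obstacle of shared interference is sidestepped entirely, and the marginal-law argument suffices.
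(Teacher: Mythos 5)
Your proof is correct and follows essentially the same route as the paper's: both rest on the three structural facts (identical encoders/decoders, identical functions $f_i,g_i$, and a single interfering copy per receiver) to conclude that each copy's situation is statistically indistinguishable from the original user's. If anything, your version is more complete --- the paper merely asserts $I(X_{i^{(j)}}^n;Y_{i^{(j)}}^n)=I(X_i^n;Y_i^n)$ and stops, whereas you correctly identify that what is needed is equality of the joint law of $(W_{i^{(j)}},Y_{i^{(j)}}^n)$ with that of $(W_i,Y_i^n)$, and you finish with the union bound over the $k_1+k_2$ copies that the paper leaves implicit.
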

\begin{proof}[{Proof}]
Consider a coding scheme that results in transmission of $X_1^n, X_2^n$ and reception of $Y_1^n, Y_2^n$ in $\mathcal N$. Consider a $(k_1,k_2)$ extended version of the network, $\mathcal N_{k_1,k_2}$, and assume the same coding scheme as $S_i$ in $\mathcal N (i=1,2)$ is applied to all $S_i^{(j)}$'s $(i=1,2, j=1,...,k_i)$ in $\mathcal N_{k_1,k_2}$. Now, note the following three points;
\begin{enumerate}
  \item The coding scheme utilized on the $j$th copied version of user $i$ on the extended network is the same as the coding scheme used on user $i$ on the original network
  \item The interference function, i.e. $g_i(.)$, and function $f_i(.,.)$ in the extended network are the same as the functions considered for user $i$ in the original network
  \item user $i$ gets interference \emph{only} from one of copied versions of user $\bar i$ for $i=1,2$($\bar i =3-i$)
\end{enumerate}
It is easy to see that these three points imply $I(X_{i^{(j)}}^n,Y_{i^{(j)}}^n)=I(X_{i}^n,Y_{i}^n) ~~~\forall j=1,...,k_i , i=1,2$. Therefore, any rate tuple $\{{\underbrace{R_1,...,R_1}_{k_1}},{\underbrace{R_2,...,R_2}_{k_2}}\}$ is achievable on any $(k_1,k_2)$ extended version of the network, $N_{k_1,k_2}$.
\end{proof}
Therefore, any sum-rate upper bound on $\mathcal N_{k_1,k_2}$ would result in an upper bound on $k_1R_1+k_2R_2$ in the original network. In particular, by applying GCS bound on $N_{k_1,k_2}$, we can systematically derive a bound for $k_1R_1+k_2R_2$ on $\mathcal N$.

In the remaining parts, we derive all remaining outer bounds, i.e. (4a)$, $(4b)$, $(4e)$, $(4f), and (4g), by designing specific extended network for each of them and applying GCS to them.
\begin{figure}[htb]
\centering
\begin{center}
\includegraphics[width=90 mm]{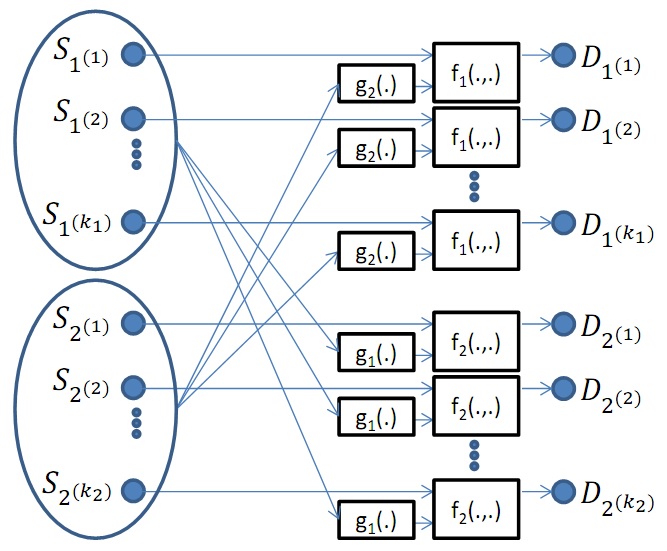} 
\caption{System model of the extended network $\mathcal N_{k_1,k_2}$.}
\label{fig4}
\end{center}
\end{figure}
\subsection{Deriving Bounds (4a) and (4b)} \label{sec4_1}
In this section, we demonstrate how to derive individual rate bound (4a) by applying GCS bound to an appropriate extended network. Due to symmetry, we only need to find the bound for user 1, i.e. (4a).

We utilize extended network $\mathcal N_{k,1}$, as depicted in Fig. \ref{fig5} for this bound. As you can note, in this network, second user imposes interference on all copied versions of first user.

According to Lemma\ref{lemext}, if a rate tuple $(R_1,R_2)$ is achievable on $\mathcal N$, then rate tuple $({\underbrace{R_1,...,R_1}_{k}},R_2)$ is achievable on $\mathcal N_{k,1}$. Consider ${\{\mathcal C_n\}}$ as a sequence of coding scheme with block length $n$ that achieves sum rate $R_{\Sigma}$ on $\mathcal N_{k,1}$. Therefore, by applying GCS bound on the extended network $\mathcal N_{k,1}$ and picking cuts $\Omega_{1},...,\Omega_{k+1}$ depicted in Fig. \ref{fig5}, we have
\small
\begin{equation} \label{eq0205}
\begin{aligned}
nR_{\Sigma}&=n(R_1+R_2+(k-1)R_1)
\\& \overset{(a)}{\leq} I(W_{\mathcal S};Y_{{\Omega}_1^c}^n)\\&+\sum_{l=2}^{k+1}{I(W_{\mathcal S\cap{{\Omega}_l}};Y_{{\Omega}_l^c\bigcap {\Omega}_{l-1}}^n|W_{\mathcal S\backslash{{\Omega}_l}},Y_{{\Omega}_{l-1}^c}^n)}+n\epsilon_n
\\&\overset{(b)}{=}H(Y_{{\Omega}_1^c}^n)+\sum_{l=2}^{k+1}{H(Y_{{\Omega}_l^c\bigcap {\Omega}_{l-1}}^n|W_{\mathcal S\backslash{{\Omega}_l}},Y_{{\Omega}_{l-1}^c}^n)}+n\epsilon_n
\\& \overset{(c)}{\leq}\sum_{i=1}^{n}{H(Y_{1^{(1)}}[i])}+(k-1)\sum_{i=1}^{n}{H(Y_{1^{(1)}}[i]|V_{2^{(1)}}[i])}
\\&~~~+\sum_{i=1}^{n}{H(Y_{2^{(1)}}[i]|V_{2^{(1)}}[i],V_{1^{(1)}}[i])}+n\epsilon_n
\end{aligned}
\end{equation}
\normalsize
where $(a)$ follows from Theorem 1 of \cite{gcs} (in particular, inequality (6) in \cite{gcs} in the proof of Theorem 1) on the extended network shown in Fig. \ref{fig5}. Step $(b)$ follows from considering deterministic model. Finally, step $(c)$ follows from proof the presented in Appendix A.

By letting $k$ goes to infinity, we have
\small
\begin{equation} \label{eq0205}
\begin{aligned}
n(R_1) &\leq \sum_{i=1}^{n}{H(Y_{1^{(1)}}[i]|V_{2^{(1)}}[i])}+n\epsilon_n
\end{aligned}
\end{equation}
\normalsize
\begin{figure}[h]
\centering
\includegraphics[trim = 2in 0in 2.2in 0in, clip,width=0.4\textwidth]{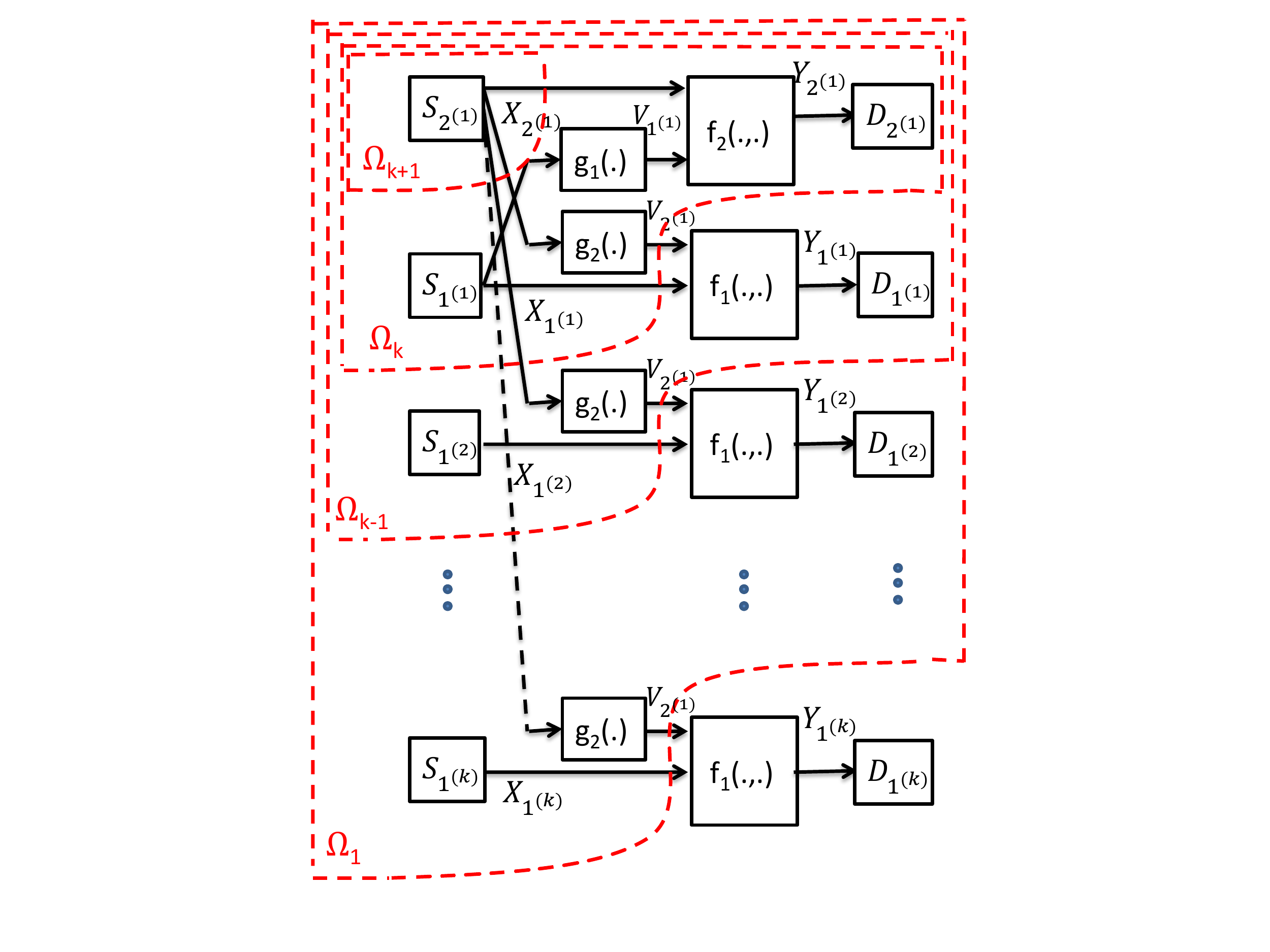}
\caption{Extended network and the cuts for deriving bound (4a) of the two-user DIC}
\label{fig5}
\end{figure}
which would result in bound (4a).

\subsection{Deriving Bound (4e)} \label{sec4_2}
In this section, we design the extended network $\mathcal N_{k,k}$ illustrated in Fig. \ref{fig6} to derive bound (4e). Based on Lemma \ref{lemext}, we know that if a rate tuple $(R_1,R_2)$ is achievable on $\mathcal N$, then rate tuple $({\underbrace{R_1,...,R_1}_{k}},{\underbrace{R_2,...,R_2}_{k}})$ is achievable on $\mathcal N_{k,k}$. Assume ${\{\mathcal C_n\}}$ as a sequence of coding scheme with block length $n$ that achieves sum rate $R_{\Sigma}$ on $\mathcal N_{k,k}$. Therefore, by applying GCS bound on the extended network $\mathcal N_{k,k}$ and picking cuts $\Omega_{1},...,\Omega_{2k}$ depicted in Fig. \ref{fig6}, we have
\small
\begin{equation} \label{eq0205}
\begin{aligned}
nR_{\Sigma}&=n(R_1+R_2+(k-1)(R_1+R_2))
\\& \overset{(a)}{\leq} I(W_{\mathcal S};Y_{{\Omega}_1^c}^n)\\&+\sum_{l=2}^{2k}{I(W_{\mathcal S\cap{{\Omega}_l}};Y_{{\Omega}_l^c\bigcap {\Omega}_{l-1}}^n|W_{\mathcal S\backslash{{\Omega}_l}},Y_{{\Omega}_{l-1}^c}^n)}+n\epsilon_n
\\&\overset{(b)}{=}H(Y_{{\Omega}_1^c}^n)+\sum_{l=2}^{2k}{H(Y_{{\Omega}_l^c\bigcap {\Omega}_{l-1}}^n|W_{\mathcal S\backslash{{\Omega}_l}},Y_{{\Omega}_{l-1}^c}^n)}+n\epsilon_n \nonumber
\end{aligned}
\end{equation}
\begin{equation} \label{eq343413}
\begin{aligned}
\\&\overset{(c)}{\leq} \sum_{i=1}^{n}{H(Y_{2^{(1)}}[i])}
\\&~~~+k\sum_{i=1}^{n}{H(Y_{1^{(1)}}[i]|V_{1^{(1)}}[i])}
\\&~~~+(k-1)\sum_{i=1}^{n}{H(Y_{2^{(1)}}[i]|V_{2^{(1)}}[i])}+n\epsilon_n
\end{aligned}
\end{equation}
\normalsize
where $(a)$ follows from Theorem 1 of \cite{gcs} (in particular, inequality (6) in \cite{gcs}) on the extended network depicted in Fig. \ref{fig6}. Step $(b)$ follows from considering deterministic model. Finally, step $(c)$ follow from the proof presented in Appendix A.

As $k$ goes to infinity, we have
\small
\begin{equation} \label{eq0205}
\begin{aligned}
n(R_1+R_2) &\leq \sum_{i=1}^{n}{H(Y_{1^{(1)}}[i]|V_{1^{(1)}}[i])}
\\&~~~+\sum_{i=1}^{n}{H(Y_{2^{(1)}}[i]|V_{2^{(1)}}[i])}+n\epsilon_n
\end{aligned}
\end{equation}
\normalsize
which would result in bound (4e).
\begin{figure}[h]
\centering
\includegraphics[trim = 2.5in 0in 2.2in 0in, clip,width=0.4\textwidth]{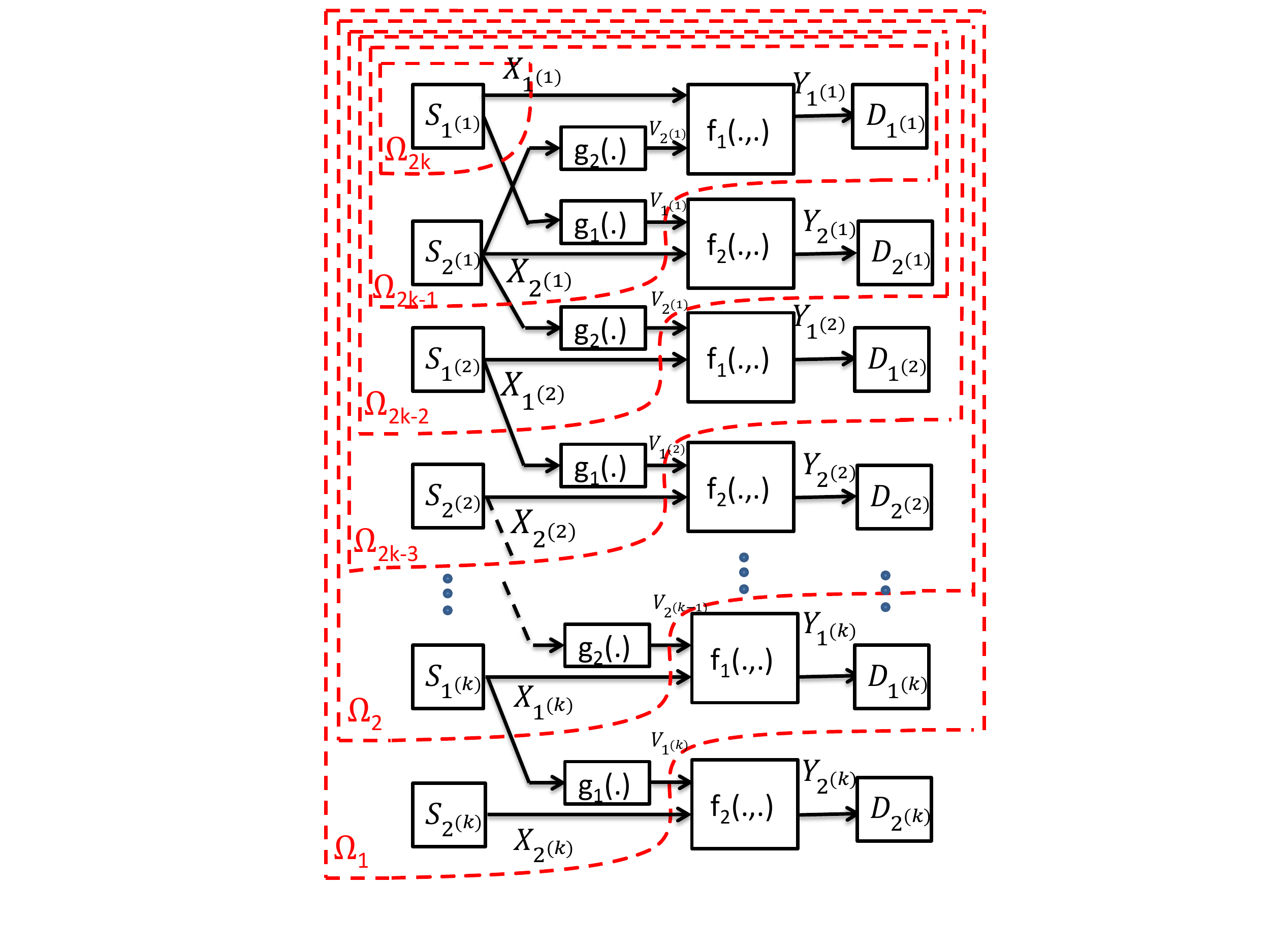}
\caption{Extended network and the cuts for deriving bound (4e) of the two-user DIC}
\label{fig6}
\end{figure}


\subsection{Deriving Bounds (4f) and (4g)} \label{sec4_3}
In this part, the extended network $\mathcal N_{2,1}$, depicted in Fig. \ref{fig7}, is utilized to derive bound (4f). By applying GCS bound on this extended network and picking cuts $\Omega_{1}$, $\Omega_{2}$ and $\Omega_{3}$ depicted in Fig. \ref{fig7}, we have
\small
\begin{equation} \label{eq0205}
\begin{aligned}
nR_{\Sigma}&=n(R_1+R_2+R_1)
\\& \overset{(a)}{\leq} I(W_{\mathcal S};Y_{{\Omega}_1^c}^n)+I(W_{\mathcal S\cap{{\Omega}_2}};Y_{{\Omega}_2^c\bigcap {\Omega}_1}^n|W_{\mathcal S\backslash{{\Omega}_2}},Y_{{\Omega}_1^c}^n)
\\&~~~+I(W_{\mathcal S\cap{{\Omega}_3}};Y_{{\Omega}_3^c\bigcap {\Omega}_2}^n|W_{\mathcal S\backslash{{\Omega}_3}},Y_{{\Omega}_2^c}^n)+n{\epsilon}_n
\\&\overset{(b)}{=}H(Y_{{\Omega}_1^c}^n)+H(Y_{{\Omega}_2^c\bigcap {\Omega}_1}^n|W_{\mathcal S\backslash{{\Omega}_2}},Y_{{\Omega}_1^c}^n)
\\&~~~+H(Y_{{\Omega}_3^c\bigcap {\Omega}_2}^n|W_{\mathcal S\backslash{{\Omega}_3}},Y_{{\Omega}_2^c}^n)+n\epsilon_n
\\& \overset{(c)}{\leq}\sum_{i=1}^{n}{H(Y_{1^{(1)}}[i])}+\sum_{i=1}^{n}{H(Y_{2^{(1)}}[i]|V_{2^{(1)}}[i])}
\\&~~~+\sum_{i=1}^{n}{H(Y_{1^{(1)}}[i]|V_{2^{(1)}}[i],V_{1^{(1)}}[i])}+n\epsilon_n
\end{aligned}
\end{equation}
\normalsize
where $(a)$ follows from Theorem 1 of \cite{gcs} (in particular, inequality (6) in \cite{gcs}) on the extended network depicted in Fig. \ref{fig7}. Step $(b)$ follows from considering deterministic model. Finally, step $(c)$ follow from the proof presented in Appendix A.
\begin{figure}[htb]
\centering
\begin{center}
\includegraphics[width=90 mm]{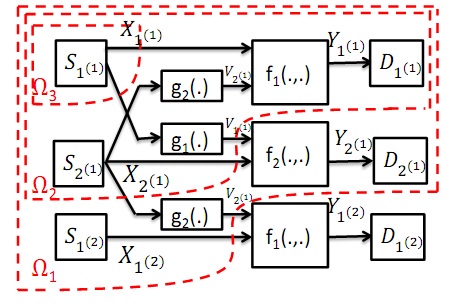} 
\caption{Extended network and the cuts for deriving bound (4f) of the two-user DIC}
\label{fig7}
\end{center}
\end{figure}

Bound (4g) can be derived by considering its corresponding extended network, i.e. $\mathcal N_{1,2}$, and following the similar procedure.

Now, we focus on obtaining the outer bounds of the symmetric three-user DIC.
\section{Extension to the Symmetric Three-user DIC} \label{sec5}
In this section, we consider the symmetric three-user DIC whose capacity region is found in \cite{jaafaar}. We demonstrate that all outer bounds can be systematically derived by the idea of the extended network and applying GCS bound to it. We start the derivation of outer bounds by defining the system model of the symmetric three-user DIC.
\subsection{System Model of the Symmetric Three-user DIC} \label{sub_5_1}
 The system model of the symmetric three-user DIC is illustrated in Fig. \ref{fig8}. In this section, the outer bounds of the symmetric three-user DIC are obtained by applying GCS on different extended networks. Since the same type of choosing cuts have been utilized, the procedure of finding outer bounds are similar to the case of finding the outer bounds of the two-user DIC. The received signals at destination 1, destination 2, and destination 3, i.e. $Y_1$, $Y_2$, and $Y_3$ are deterministic functions of transmitted signals $X_1$, $X_2$, $X_3$, and interference signals $V_1$, $V_2$ and $V_3$ as follow:
\begin{figure}[h10]
\centering
\includegraphics[trim = 1.5in 4.5in 2.5in 0in, clip,width=0.4\textwidth]{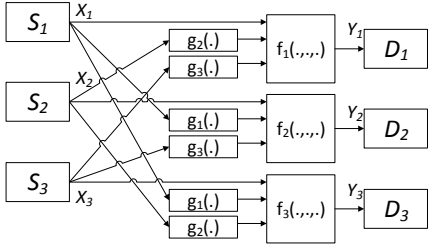}
\caption{System model of the symmetric three-user DIC}
\label{fig8}
\end{figure}
\small
\begin{equation} \label{eq0203}
\begin{aligned}
Y_1&=f_1(X_1,V_2,V_3)
\\Y_2&=f_2(X_2,V_1,V_3)
\\Y_3&=f_3(X_3,V_1,V_2)
\\V_1&=g_1(X_1)
\\V_2&=g_2(X_2)
\\V_3&=g_3(X_3)
\end{aligned}
\end{equation}
\normalsize
where $g_1(.)$, $g_2(.)$, and $g_3(.)$ are not generally invertible functions and $f_1(.,.)$, $f_2(.,.)$, and $f_3(.,.)$ satisfy the following conditions which is the same as \cite{jaafaar}
\small
\begin{equation} \label{eq0204}
\begin{aligned}
H(Y_1|X_1)&=H(V_2,V_3)
\\H(Y_2|X_2)&=H(V_1,V_3)
\\H(Y_3|X_3)&=H(V_2,V_3)
\end{aligned}
\end{equation}
\normalsize
these conditions are equivalent to
\small
\begin{equation} \label{eq0205}
\begin{aligned}
(V_2,V_3)&=h_1(X_1,Y_1)
\\(V_1,V_3)&=h_2(X_2,Y_2)
\\(V_1,V_2)&=h_3(X_3,Y_3)
\end{aligned}
\end{equation}
\normalsize

The capacity region of the symmetric three-user DIC is characterized in \cite{jaafaar}. It was shown by \cite{jaafaar} that the capacity region is the union of six regions each of which is characterized by 28 linear bounds as follow
\small
\begin{equation} \label{ineq1}
\begin{aligned}
R_1 &\leq H(Y_{1}|V_{2}V_{{3}})
\end{aligned}
\end{equation}
\begin{equation} \label{ineq2}
\begin{aligned}
R_1+R_2 &\leq H(Y_{1}|V_{1}V_{2}V_{{3}})+H(Y_{2}|V_{{3}})
\end{aligned}
\end{equation}
\begin{equation} \label{ineq3}
\begin{aligned}
R_1+R_2 &\leq H(Y_{1}|V_{1}V_{{3}})+H(Y_{2}|V_{2}V_{{3}})
\end{aligned}
\end{equation}
\begin{equation} \label{ineq4}
\begin{aligned}
2R_1+R_2 &\leq H(Y_{1}|V_{3})+H(Y_{1}|V_{1}V_{2}V_{{3}})+H(Y_{2}|V_{2}V_{{3}})
\end{aligned}
\end{equation}
\begin{equation} \label{ineq5}
\begin{aligned}
R_1+R_2+R_3 &\leq H(Y_{1}|V_{1})+H(Y_{2}|V_{2}V_{{3}})+H(Y_{3}|V_{1}V_{2}V_{{3}})
\end{aligned}
\end{equation}
\begin{equation} \label{ineq6}
\begin{aligned}
R_1+R_2+R_3 &\leq H(Y_{1}|V_{1}V_{3})+H(Y_{2}|V_{1}V_{{2}})+H(Y_{3}|V_{2}V_{{3}})
\end{aligned}
\end{equation}
\begin{equation} \label{ineq7}
\begin{aligned}
R_1+R_2+R_3 &\leq H(Y_{1}|V_{1}V_{2}V_{3})+H(Y_{2}|V_{1})+H(Y_{3}|V_{2}V_{{3}})
\end{aligned}
\end{equation}
\begin{equation} \label{ineq8}
\begin{aligned}
R_1+R_2+R_3 &\leq H(Y_{1}|V_{1}V_{2}V_{{3}})+H(Y_{2}|V_{1}V_{2}V_{{3}})+H(Y_{3})
\end{aligned}
\end{equation}
\begin{equation} \label{ineq9}
\begin{aligned}
2R_1+R_2+R_3 &\leq H(Y_{{1}})+H(Y_{1}|V_{1}V_{2}V_{{3}})+H(Y_{{2}}|V_{{2}}V_{{3}})\\&+H(Y_{3}|V_{1}V_{2}V_{{3}})
\end{aligned}
\end{equation}
\begin{equation} \label{ineq10}
\begin{aligned}
2R_1+R_2+R_3 &\leq H(Y_{1}|V_{1})+H(Y_{1}|V_{1}V_{2}V_{3})+H(Y_{2}|V_{2}V_{{3}})\\&+H(Y_{3}|V_{2}V_{{3}})
\end{aligned}
\end{equation}
\begin{equation} \label{ineq11}
\begin{aligned}
2R_1+R_2+R_3 &\leq H(Y_{1}|V_{1}V_{3})+H(Y_{1}|V_{2})+H(Y_{2}|V_{1}V_{2}V_{{3}})\\&+H(Y_{3}|V_{2}V_{{3}})
\end{aligned}
\end{equation}
\begin{equation} \label{ineq12}
\begin{aligned}
2R_1+R_2+R_3 &\leq H(Y_{1}|V_{1}V_{2}V_{3})+H(Y_{1}|V_{3})+H(Y_{2}|V_{1}V_{2})\\&+H(Y_{3}|V_{2}V_{{3}})
\end{aligned}
\end{equation}
\begin{equation} \label{ineq13}
\begin{aligned}
2R_1+R_2+R_3 &\leq H(Y_{1}|V_{1}V_{2}V_{3})+H(Y_{1}|V_{3})+H(Y_{2}|V_{2})\\&+H(Y_{3}|V_{1}V_{2}V_{{3}})
\end{aligned}
\end{equation}
\begin{equation} \label{ineq14}
\begin{aligned}
2R_1+R_2+R_3 &\leq H(Y_{1}|V_{1}V_{2}V_{3})+H(Y_{1}|V_{1}V_{2})+H(Y_{2}|V_{2}V_{3})\\&+H(Y_{3}|V_{{3}})
\end{aligned}
\end{equation}
\begin{equation} \label{ineq15}
\begin{aligned}
2R_1+R_2+R_3 &\leq 2H(Y_{1}|V_{1}V_{2}V_{{3}})+H(Y_{2})+H(Y_{3}|V_2V_3)
\end{aligned}
\end{equation}
\begin{equation} \label{ineq16}
\begin{aligned}
2R_1+R_2+R_3 &\leq 2H(Y_{1}|V_{1}V_{2}V_{3})+H(Y_{2}|V_{2})+H(Y_{3}|V_{{3}})
\end{aligned}
\end{equation}
\begin{equation} \label{ineq17}
\begin{aligned}
3R_1+R_2+R_3 &\leq 2H(Y_{1}|V_{1}V_{2}V_{{3}})+H(Y_1)+H(Y_{2}|V_2V_3)\\&+H(Y_{3}|V_2V_3)
\end{aligned}
\end{equation}
\begin{equation} \label{ineq18}
\begin{aligned}
3R_1+R_2+R_3 &\leq 2H(Y_{1}|V_{1}V_{2}V_{3})+H(Y_{1}|V_{2})+H(Y_{2}|V_{2}V_{3})\\&+H(Y_{3}|V_{{3}})
\end{aligned}
\end{equation}
\begin{equation} \label{ineq19}
\begin{aligned}
2R_1+2R_2+R_3 &\leq H(Y_1)+H(Y_1|V_1V_3)+2H(Y_2|V_1V_2V_3)\\&+H(Y_3|V_2V_3)
\end{aligned}
\end{equation}
\begin{equation} \label{ineq20}
\begin{aligned}
2R_1+2R_2+R_3 &\leq H(Y_1)+H(Y_{1}|V_{1}V_{2}V_{{3}})+H(Y_{2}|V_1V_2V_3)\\&+H(Y_{2}|V_2V_3)+H(Y_{3}|V_1V_3)
\end{aligned}
\end{equation}
\begin{equation} \label{ineq21}
\begin{aligned}
2R_1+2R_2+R_3 &\leq H(Y_1)+H(Y_{1}|V_{1}V_{2}V_{{3}})+2H(Y_{2}|V_1V_2V_3)\\&+H(Y_{3}|V_3)
\end{aligned}
\end{equation}
\begin{equation} \label{ineq22}
\begin{aligned}
2R_1+2R_2+R_3 &\leq H(Y_{1}|V_{1})+H(Y_{1}|V_{1}V_{2}V_{3})+H(Y_{2}|V_{1}V_{2}V_{3})\\&+H(Y_{2}|V_{2}V_{3})+H(Y_{3}|V_{{3}})
\end{aligned}
\end{equation}
\begin{equation} \label{ineq23}
\begin{aligned}
2R_1+2R_2+R_3 &\leq 2H(Y_{1}|V_{1}V_{3})+H(Y_{2}|V_{1}V_{2}V_{3})+H(Y_{2}|V_{2})\\&+H(Y_{3}|V_{{2}}V_{{3}})
\end{aligned}
\end{equation}
\begin{equation} \label{ineq24}
\begin{aligned}
3R_1+2R_2+R_3 &\leq 2H(Y_{1}|V_{1}V_{2}V_{{3}})+H(Y_1)+H(Y_{2}|V_1V_2V_3)\\&+H(Y_{2}|V_2V_3)+H(Y_{3}|V_3)
\end{aligned}
\end{equation}
\begin{equation} \label{ineq25}
\begin{aligned}
3R_1+2R_2+R_3 &\leq 2H(Y_{1}|V_{1}V_{2}V_{{3}})+H(Y_1)+2H(Y_{2}|V_2V_3)\\&+H(Y_{3}|V_1V_3)
\end{aligned}
\end{equation}
\begin{equation} \label{ineq26}
\begin{aligned}
3R_1+2R_2+R_3 &\leq 2H(Y_{1}|V_{1}V_{2}V_{3})+H(Y_{1}|V_{1})+2H(Y_{2}|V_{2}V_{3})\\&+H(Y_{3}|V_{{3}})
\end{aligned}
\end{equation}
\begin{equation} \label{ineq27}
\begin{aligned}
3R_1+2R_2+R_3 &\leq 3H(Y_{1}|V_{1}V_{2}V_{{3}})+H(Y_{2}|V_2V_3)+H(Y_{2})\\&+H(Y_{3}|V_3)
\end{aligned}
\end{equation}
\begin{equation} \label{ineq28}
\begin{aligned}
4R_1+2R_2+R_3 &\leq 3H(Y_{1}|V_{1}V_{2}V_{{3}})+H(Y_{1})+2H(Y_{2}|V_2V_3)\\&+H(Y_{3}|V_3)
\end{aligned}
\end{equation}
\normalsize

In the rest of this section, we prove bounds (\ref{ineq1})-(\ref{ineq28}) by constructing appropriate extended networks and applying GCS on them.
\\

\noindent \textbf{Derivation of bound (\ref{ineq1}).}
To derive this bound, we design the extended network considered in Fig. \ref{figb10}. By applying GCS bound and picking cuts depicted in Fig. \ref{figb10}, we have
\small
\begin{equation} \label{eq3434}
\begin{aligned}
nR_{\Sigma}&=n(R_1+R_2+R_3+kR_1)
\\& \overset{(a)}{\leq}H(Y_{{\Omega}_1^c}^n)+\sum_{l=2}^{k+2}{H(Y_{{\Omega}_l^c\bigcap {\Omega}_{l-1}}^n|W_{\mathcal S\backslash{{\Omega}_l}},Y_{{\Omega}_{l-1}^c}^n)}+n\epsilon_n
\\& \overset{(b)}{\leq}\sum_{i=1}^{n}{kH(Y_{1}[i]|V_{2}[i]V_{3}[i])}+C+n\epsilon_n
\end{aligned}
\end{equation}
\normalsize
where $C\triangleq \sum_{i=1}^{n}{H(Y_{1}[i])+H(Y_{2}[i])+H(Y_{3}[i])}$ and step $(a)$ follows from GCS \cite{gcs} for deterministic networks. Step $(b)$ follows from the proof presented in Appendix B.

By letting $k$ goes to infinity, bound (\ref{ineq1}) will be recovered.
\begin{figure}[h10]
\centering
\includegraphics[trim = 1.5in 5.5in 2.5in 5.5in, clip,width=0.4\textwidth]{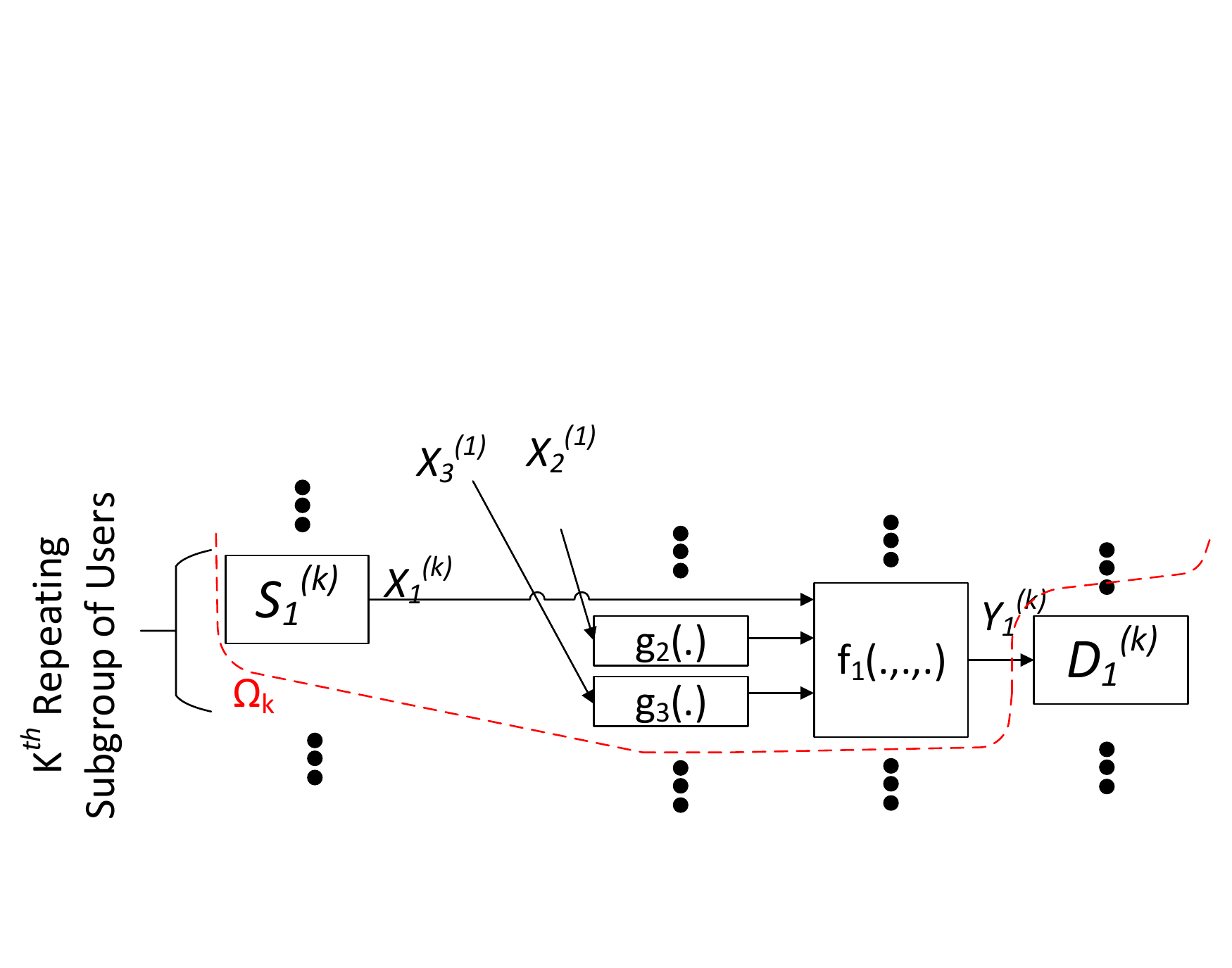}
\caption{Extended network and the cuts for deriving bound (\ref{ineq1}) for the three-user DIC}
\label{figb10}
\end{figure}
\\
\noindent \textbf{Derivation of bound (\ref{ineq2}).}
To derive this bound, we design the extended network considered in Fig. \ref{figb11}. By applying GCS bound and picking cuts depicted in Fig. \ref{figb11}, we have
\small
\begin{equation} \label{eq3434}
\begin{aligned}
nR_{\Sigma}&=n(R_1+R_2+R_3+kR_1+kR_2)
\\& \overset{(a)}{\leq}H(Y_{{\Omega}_1^c}^n)+\sum_{l=2}^{2k+2}{H(Y_{{\Omega}_l^c\bigcap {\Omega}_{l-1}}^n|W_{\mathcal S\backslash{{\Omega}_l}},Y_{{\Omega}_{l-1}^c}^n)}+n\epsilon_n
\\& \overset{(b)}{\leq}\sum_{i=1}^{n}{kH(Y_{1}[i]|V_{1}[i]V_{2}[i]V_{3}[i])+kH(Y_{2}[i]|V_{3}[i])}\\&+C+n\epsilon_n
\end{aligned}
\end{equation}
\normalsize
where $(a)$ follows from GCS \cite{gcs} for deterministic networks and step $(b)$ follows from the proof presented in Appendix B.

By letting $k$ goes to infinity, bound (\ref{ineq2}) will be recovered.
\begin{figure}[h10]
\centering
\includegraphics[trim = 1.5in 5in 2.5in 0in, clip,width=0.4\textwidth]{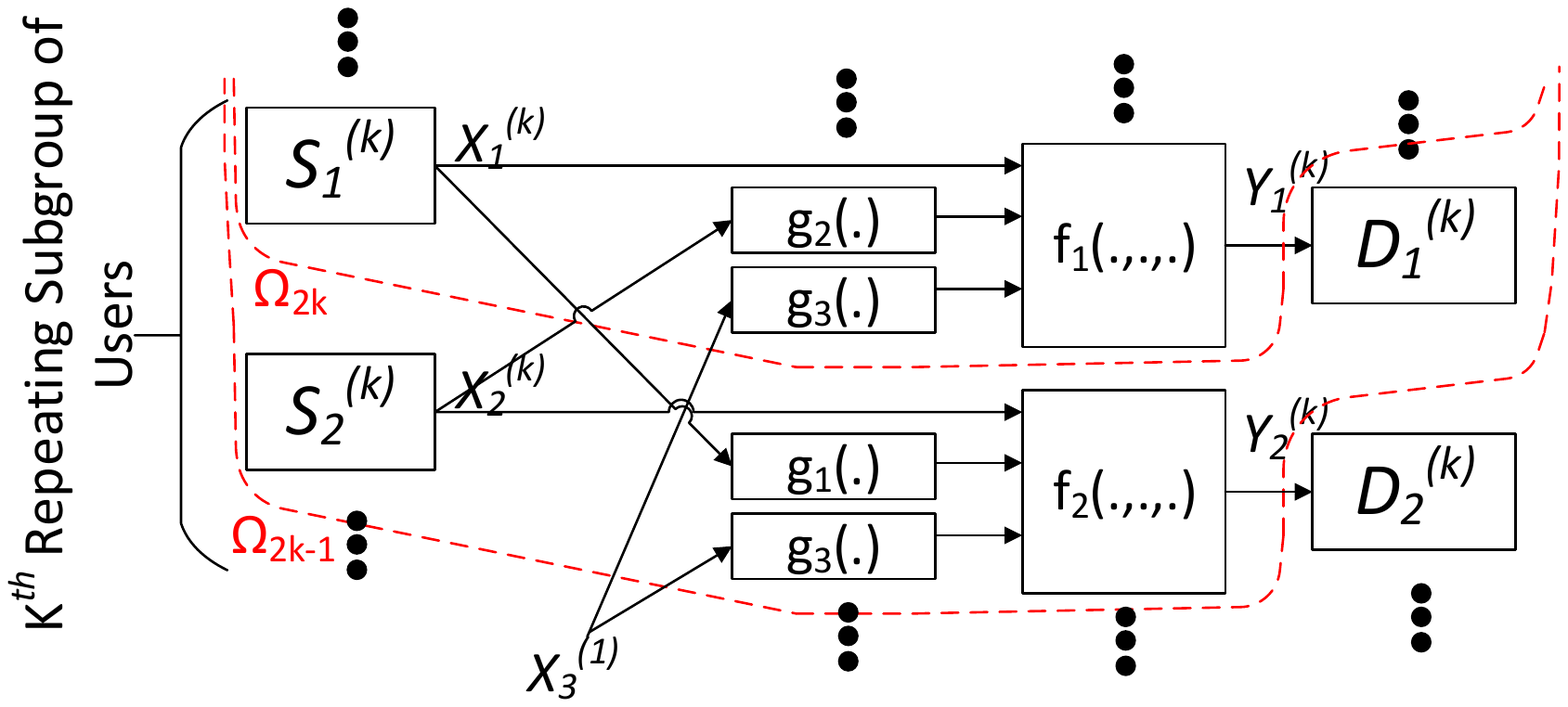}
\caption{Extended network and the cuts for deriving bound (\ref{ineq2}) for the three-user DIC}
\label{figb11}
\end{figure}
\\
\noindent \textbf{Derivation of bound (\ref{ineq3}).}
To derive this bound, we design the extended network considered in Fig. \ref{figb12}. By applying GCS bound and picking cuts depicted in Fig. \ref{figb12}, we have
\small
\begin{equation} \label{eq3434}
\begin{aligned}
nR_{\Sigma}&=n(R_1+R_2+R_3+kR_1+kR_2)
\\& \overset{(a)}{\leq}H(Y_{{\Omega}_1^c}^n)+\sum_{l=2}^{2k+2}{H(Y_{{\Omega}_l^c\bigcap {\Omega}_{l-1}}^n|W_{\mathcal S\backslash{{\Omega}_l}},Y_{{\Omega}_{l-1}^c}^n)}+n\epsilon_n
\\& \overset{(b)}{\leq}\sum_{i=1}^{n}{kH(Y_{1}[i]|V_{1}[i]V_{3}[i])+kH(Y_{2}[i]|V_{2}[i]V_{3}[i])}\\&+C+n\epsilon_n
\end{aligned}
\end{equation}
\normalsize
where $(a)$ follows from GCS \cite{gcs} for deterministic networks and step $(b)$ follows from the proof presented in Appendix B.

By letting $k$ goes to infinity, bound (\ref{ineq3}) will be recovered.
\begin{figure}[h10]
\centering
\includegraphics[trim =1in 4in 2.5in 6in, clip,width=0.4\textwidth]{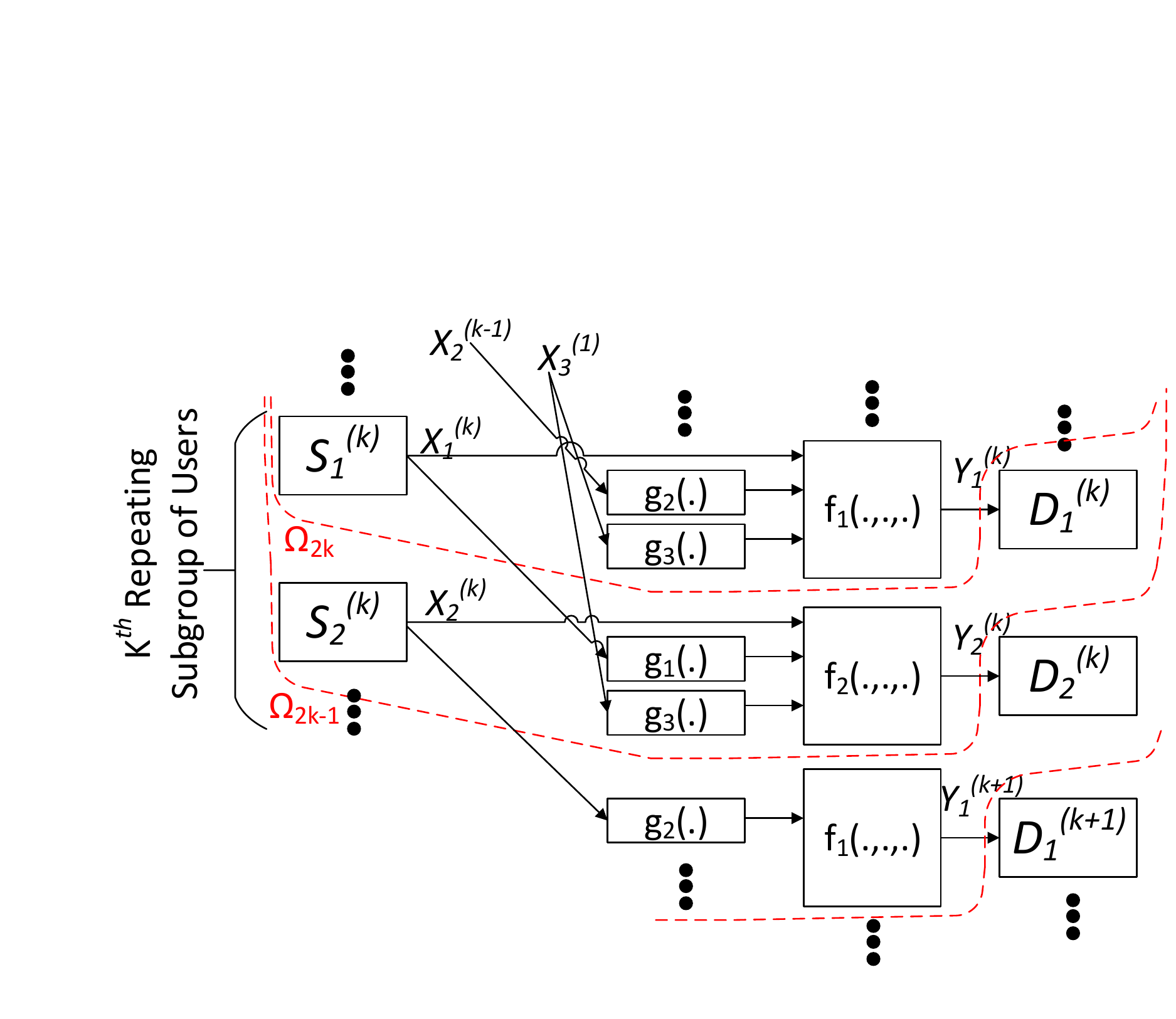}
\caption{Extended network and the cuts for deriving bound (\ref{ineq3}) for the three-user DIC}
\label{figb12}
\end{figure}
\\
\noindent \textbf{Derivation of bound (\ref{ineq4}).}
To derive this bound, we design the extended network considered in Fig. \ref{figb13}. By applying GCS bound and picking cuts depicted in Fig. \ref{figb13}, we have
\small
\begin{equation} \label{eq3434}
\begin{aligned}
nR_{\Sigma}&=n(R_1+R_2+R_3+2kR_1+kR_2)
\\& \overset{(a)}{\leq}H(Y_{{\Omega}_1^c}^n)+\sum_{l=2}^{3k+2}{H(Y_{{\Omega}_l^c\bigcap {\Omega}_{l-1}}^n|W_{\mathcal S\backslash{{\Omega}_l}},Y_{{\Omega}_{l-1}^c}^n)}+n\epsilon_n
\\& \overset{(b)}{\leq}\sum_{i=1}^{n}{kH(Y_{1}[i]|V_{3}[i])+kH(Y_{1}[i]|V_{1}[i]V_{2}[i]V_{3}[i])}\\&+\sum_{i=1}^{n}{kH(Y_{2}[i]|V_{2}[i]V_{3}[i])}\\&+C+n\epsilon_n
\end{aligned}
\end{equation}
\normalsize
where $(a)$ follows from GCS \cite{gcs} for deterministic networks and step $(b)$ follows from the proof presented in Appendix B.

By letting $k$ goes to infinity, bound (\ref{ineq4}) will be recovered.
\begin{figure}[h10]
\centering
\includegraphics[trim = .7in 1.5in 2.5in 0in, clip,width=0.4\textwidth]{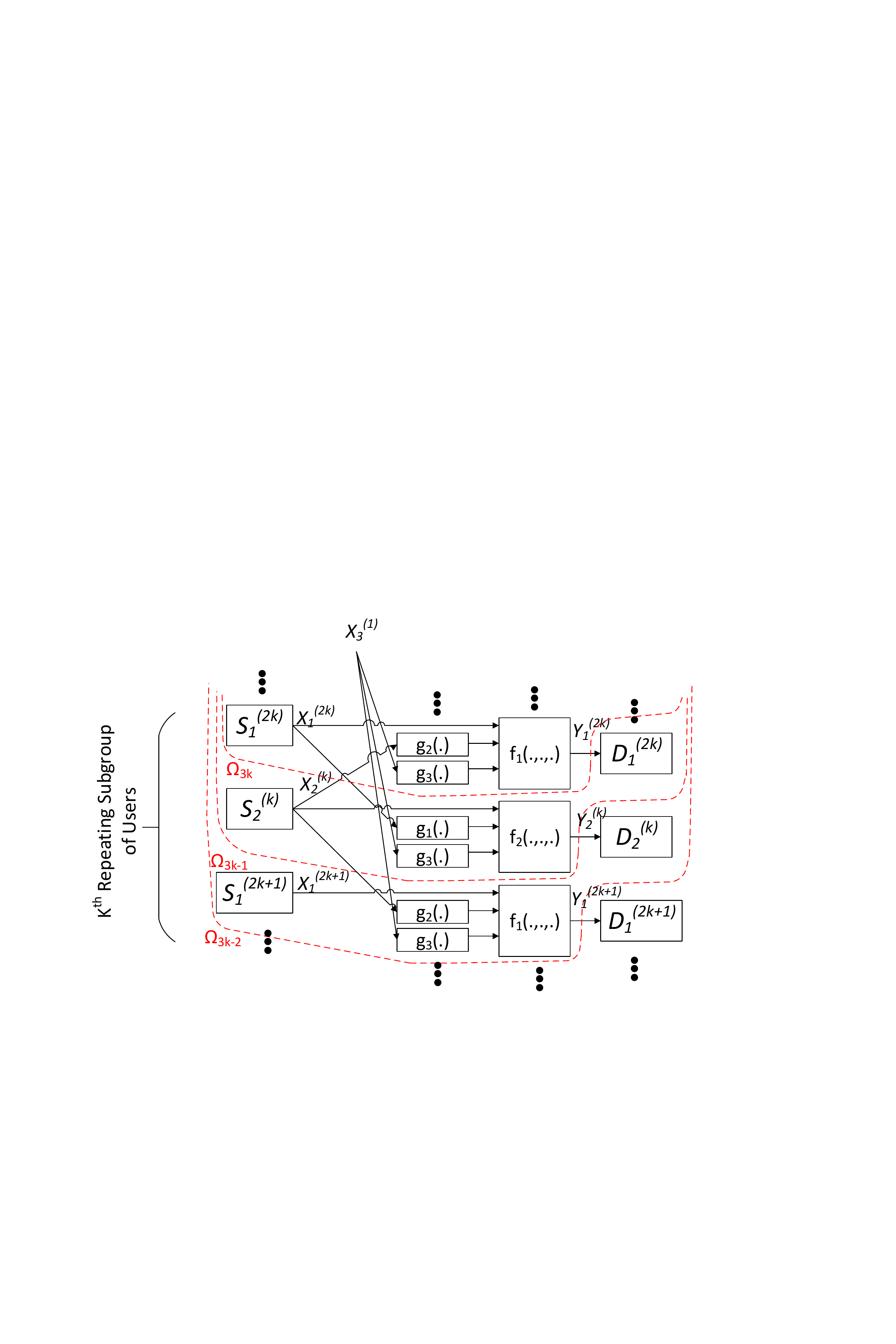}
\caption{Extended network and the cuts for deriving bound (\ref{ineq4}) for the three-user DIC}
\label{figb13}
\end{figure}
\\
\noindent \textbf{Derivation of bound (\ref{ineq5}).}
To derive this bound, we design the extended network considered in Fig. \ref{figb14}. By applying GCS bound and picking cuts depicted in Fig. \ref{figb14}, we have
\small
\begin{equation} \label{eq3434}
\begin{aligned}
nR_{\Sigma}&=n(R_1+R_2+R_3+kR_1+kR_2+kR_3)
\\& \overset{(a)}{\leq}H(Y_{{\Omega}_1^c}^n)+\sum_{l=2}^{3k+2}{H(Y_{{\Omega}_l^c\bigcap {\Omega}_{l-1}}^n|W_{\mathcal S\backslash{{\Omega}_l}},Y_{{\Omega}_{l-1}^c}^n)}+n\epsilon_n
\\& \overset{(b)}{\leq}\sum_{i=1}^{n}{kH(Y_{1}[i]|V_{1}[i])+kH(Y_{2}[i]|V_{2}[i]V_{3}[i])}\\&+\sum_{i=1}^{n}{kH(Y_{3}[i]|V_{1}[i]V_{2}[i]V_{3}[i])}\\&+C+n\epsilon_n
\end{aligned}
\end{equation}
\normalsize
where $(a)$ follows from GCS \cite{gcs} for deterministic networks and step $(b)$ follows from the proof presented in Appendix B.

By letting $k$ goes to infinity, bound (\ref{ineq5}) will be recovered.
\begin{figure}[h10]
\centering
\includegraphics[trim = .7in 2.5in 2.5in 6in, clip,width=0.4\textwidth]{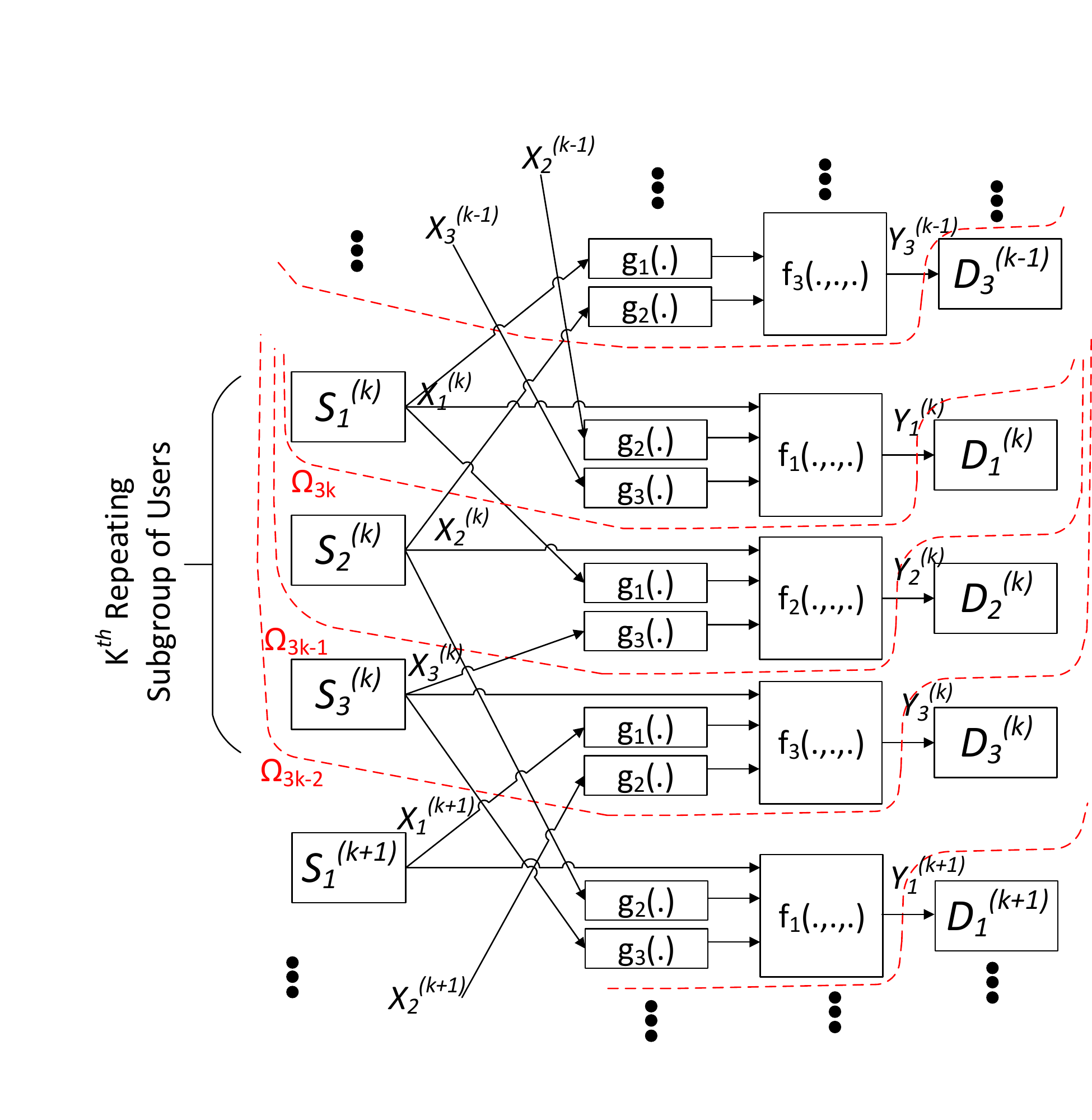}
\caption{Extended network and the cuts for deriving bound (\ref{ineq5}) for the three-user DIC}
\label{figb14}
\end{figure}
\\
\noindent \textbf{Derivation of bound (\ref{ineq6}).}
To derive this bound, we design the extended network considered in Fig. \ref{figb15}. By applying GCS bound and picking cuts depicted in Fig. \ref{figb15}, we have
\small
\begin{equation} \label{eq3434}
\begin{aligned}
nR_{\Sigma}&=n(R_1+R_2+R_3+kR_1+kR_2+kR_3)
\\& \overset{(a)}{\leq}H(Y_{{\Omega}_1^c}^n)+\sum_{l=2}^{3k+2}{H(Y_{{\Omega}_l^c\bigcap {\Omega}_{l-1}}^n|W_{\mathcal S\backslash{{\Omega}_l}},Y_{{\Omega}_{l-1}^c}^n)}+n\epsilon_n
\\& \overset{(b)}{\leq}\sum_{i=1}^{n}{kH(Y_{1}[i]|V_{1}[i]V_{3}[i])+kH(Y_{2}[i]|V_{2}[i]V_{1}[i])}\nonumber
\end{aligned}
\end{equation}
\begin{equation} \label{eq3434}
\begin{aligned}
&+\sum_{i=1}^{n}{kH(Y_{3}[i]|V_{2}[i],V_{3}[i])}+C+n\epsilon_n
\end{aligned}
\end{equation}
\normalsize
where $(a)$ follows from GCS \cite{gcs} for deterministic networks and step $(b)$ follows from the proof presented in Appendix B.

By letting $k$ goes to infinity, bound (\ref{ineq6}) will be recovered.
\begin{figure}[h10]
\centering
\includegraphics[trim = .7in 2.5in 2.5in 6in, clip,width=0.4\textwidth]{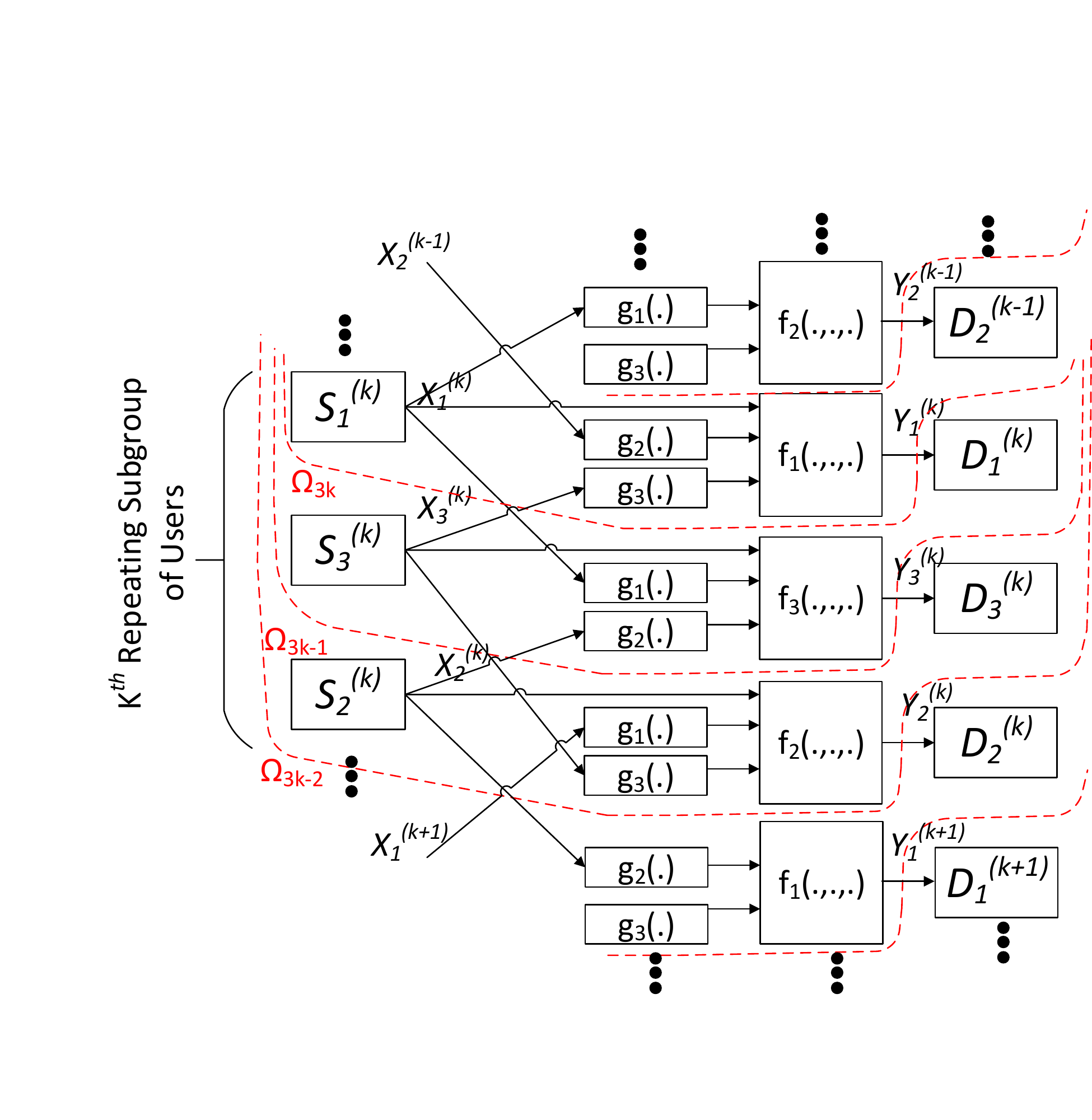}
\caption{Extended network and the cuts for deriving bound (\ref{ineq6}) for the three-user DIC}
\label{figb15}
\end{figure}
\\
\noindent \textbf{Derivation of bound (\ref{ineq7}).}
To derive this bound, we design the extended network considered in Fig. \ref{figb16}. By applying GCS bound and picking cuts depicted in Fig. \ref{figb16}, we have
\small
\begin{equation} \label{eq3434}
\begin{aligned}
nR_{\Sigma}&=n(R_1+R_2+R_3+kR_1+kR_2+kR_3)
\\& \overset{(a)}{\leq} H(Y_{{\Omega}_1^c}^n)+\sum_{l=2}^{3k+2}{H(Y_{{\Omega}_l^c\bigcap {\Omega}_{l-1}}^n|W_{\mathcal S\backslash{{\Omega}_l}},Y_{{\Omega}_{l-1}^c}^n)}+n\epsilon_n
\\& \overset{(b)}{\leq}\sum_{i=1}^{n}{kH(Y_{1}[i]|V_{1}[i]V_{2}[i]V_{3}[i])+kH(Y_{2}[i]|V_{1}[i])}\nonumber
\end{aligned}
\end{equation}
\begin{equation} \label{eq3434}
\begin{aligned}
&+\sum_{i=1}^{n}{kH(Y_{3}[i]|V_{2}[i]V_{3}[i])}+C+n\epsilon_n
\end{aligned}
\end{equation}
\normalsize
where $(a)$ follows from GCS \cite{gcs} for deterministic networks and step $(b)$ follows from the proof presented in Appendix B.

By letting $k$ goes to infinity, bound (\ref{ineq7}) will be recovered.
\begin{figure}[h10]
\centering
\includegraphics[trim = .5in 2.5in 2.5in 0in, clip,width=0.4\textwidth]{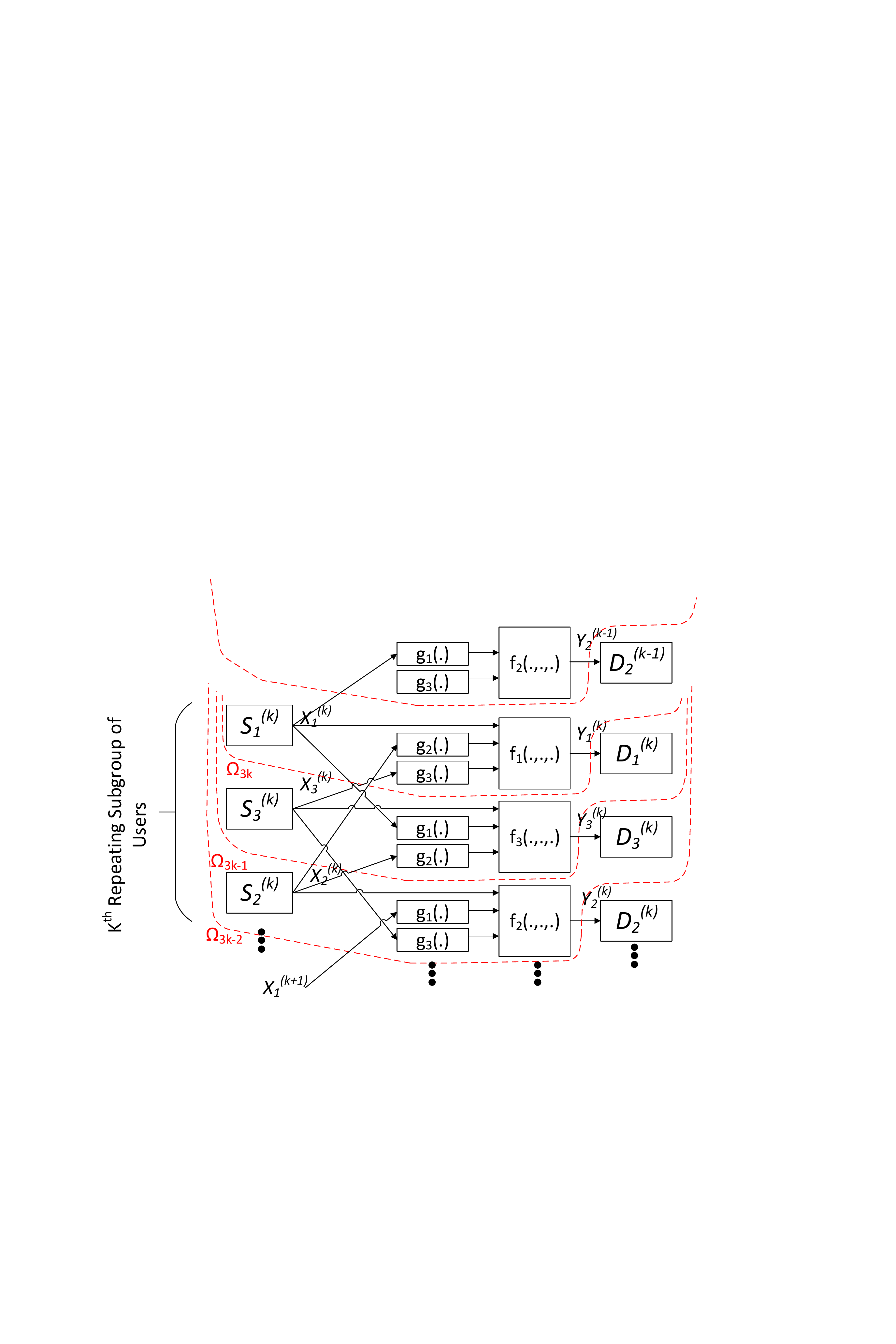}
\caption{Extended network and the cuts for deriving bound (\ref{ineq7}) for the three-user DIC}
\label{figb16}
\end{figure}
\\
\noindent \textbf{Derivation of bound (\ref{ineq8}).}
To derive this bound, we design the original network considered in Fig. \ref{figb17}. By applying GCS bound and picking cuts $\Omega_{1}, \Omega_{2}$, and $\Omega_{3}$ depicted in Fig. \ref{figb17}, we have
\small
\begin{equation} \label{eq3434}
\begin{aligned}
nR_{\Sigma}&=n(R_1+R_2+R_3)
\\& \overset{(a)}{\leq} H(Y_{{\Omega}_1^c}^n)+\sum_{l=2}^{3}{H(Y_{{\Omega}_l^c\bigcap {\Omega}_{l-1}}^n|W_{\mathcal S\backslash{{\Omega}_l}},Y_{{\Omega}_{l-1}^c}^n)}+n\epsilon_n
\\& \overset{(c)}{\leq}\sum_{i=1}^{n}{H(Y_{1}[i]|V_{1}[i]V_{2}[i]V_{3}[i])}+\sum_{i=1}^{n}{H(Y_{2}[i]|V_{1}[i]V_{2}[i]V_{3}[i])}\\&+\sum_{i=1}^{n}{H(Y_{3}[i])}+n\epsilon_n
\end{aligned}
\end{equation}
\normalsize
where $(a)$ follows from GCS \cite{gcs} for deterministic networks and step $(b)$ follows from the proof presented in Appendix B.
\begin{figure}[h10]
\centering
\includegraphics[trim = 1.5in 4.5in 2.5in 0in, clip,width=0.4\textwidth]{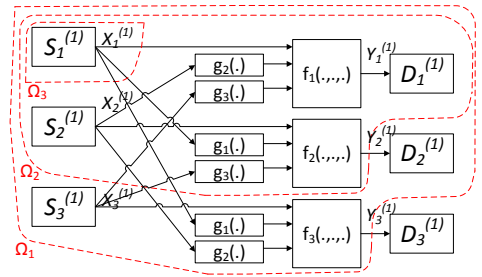}
\caption{Original network and the cuts for deriving bound (\ref{ineq8}) for the three-user DIC}
\label{figb17}
\end{figure}

\noindent \textbf{Derivation of bound (\ref{ineq9}).}
To derive this bound, we design the extended network considered in Fig. \ref{fig9}. Consider ${\{\mathcal C_n\}}$ as a sequence of coding scheme with block length $n$ that achieves sum rate $R_{\Sigma}$ on this extended network. Therefore, by applying GCS bound on this extended network and picking cuts $\Omega_{1}, \Omega_{2}$, $\Omega_{3}$, and $\Omega_{4}$ depicted in Fig. \ref{fig9}, we have
\small
\begin{equation} \label{eq3434}
\begin{aligned}
nR_{\Sigma}&=n(R_1+R_2+R_3+R_1)
\\& \overset{(a)}{\leq} H(Y_{{\Omega}_1^c}^n)+\sum_{l=2}^{4}{H(Y_{{\Omega}_l^c\bigcap {\Omega}_{l-1}}^n|W_{\mathcal S\backslash{{\Omega}_l}},Y_{{\Omega}_{l-1}^c}^n)}+n\epsilon_n
\\& \overset{(b)}{\leq}\sum_{i=1}^{n}{H(Y_{1}[i])}+\sum_{i=1}^{n}{H(Y_{1}[i]|V_{1}[i]V_{2}[i]V_{3}[i])}\\&+\sum_{i=1}^{n}{H(Y_{2}[i]|V_{2}[i]V_{3}[i])}
\\&+\sum_{i=1}^{n}{H(Y_{3}[i]|V_{1}[i]V_{2}[i]V_{3}[i])}+n\epsilon_n
\end{aligned}
\end{equation}
\normalsize
where $(a)$ follows from GCS \cite{gcs} for deterministic networks and step $(b)$ follows from the proof presented in Appendix B.

\begin{figure}[h17]
\centering
\includegraphics[trim = 2in 3.5in 1.2in 0in, clip,width=0.4\textwidth]{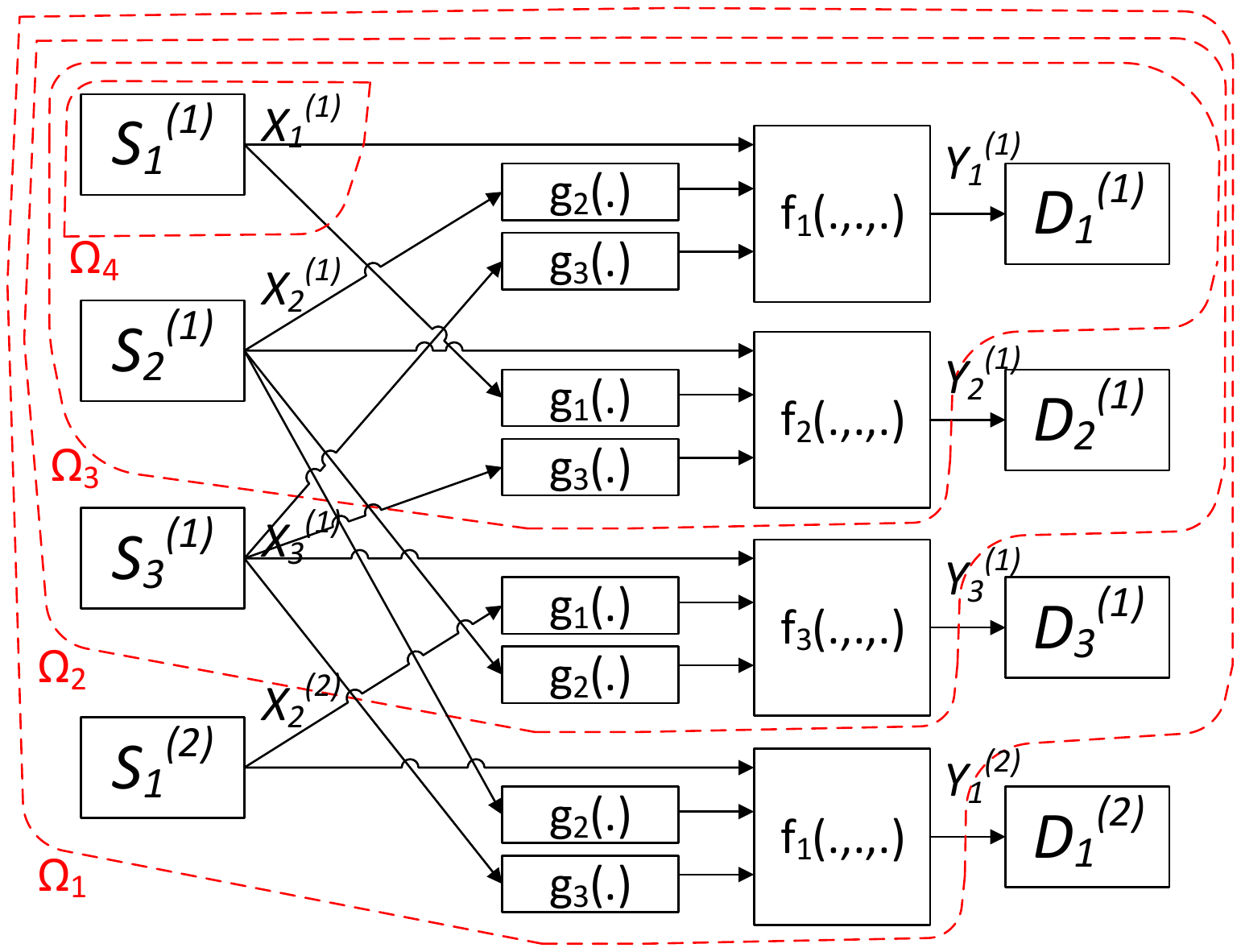}
\caption{Extended network and the cuts for deriving bound (\ref{ineq9}) for the three-user DIC}
\label{fig9}
\end{figure}

\noindent \textbf{Derivation of bound (\ref{ineq10}).}
To derive this bound, we design the extended network considered in Fig. \ref{figb19}. By applying GCS bound and picking cuts depicted in Fig. \ref{figb19}, we have
\small
\begin{equation} \label{eq3434}
\begin{aligned}
nR_{\Sigma}&=n(R_1+R_2+R_3+2kR_1+kR_2+kR_3)
\\& \overset{(a)}{\leq} H(Y_{{\Omega}_1^c}^n)+\sum_{l=2}^{4k+2}{H(Y_{{\Omega}_l^c\bigcap {\Omega}_{l-1}}^n|W_{\mathcal S\backslash{{\Omega}_l}},Y_{{\Omega}_{l-1}^c}^n)}+n\epsilon_n
\\& \overset{(b)}{\leq}\sum_{i=1}^{n}{kH(Y_{1}[i]|V_{1}[i])+kH(Y_{1}[i]|V_{1}[i]V_{2}[i]V_{3}[i])}\\&+\sum_{i=1}^{n}{kH(Y_{2}[i]|V_{2}[i]V_{3}[i])+kH(Y_{3}[i]|V_{2}[i]V_{3}[i])}
\\&+C+n\epsilon_n
\end{aligned}
\end{equation}
\normalsize
where $(a)$ follows from GCS \cite{gcs} for deterministic networks and step $(b)$ follows from the proof presented in Appendix B.

By letting $k$ goes to infinity, bound (\ref{ineq10}) will be recovered.
\begin{figure}[h10]
\centering
\includegraphics[trim = 0.5in 1.5in 2.5in 6in, clip,width=0.4\textwidth]{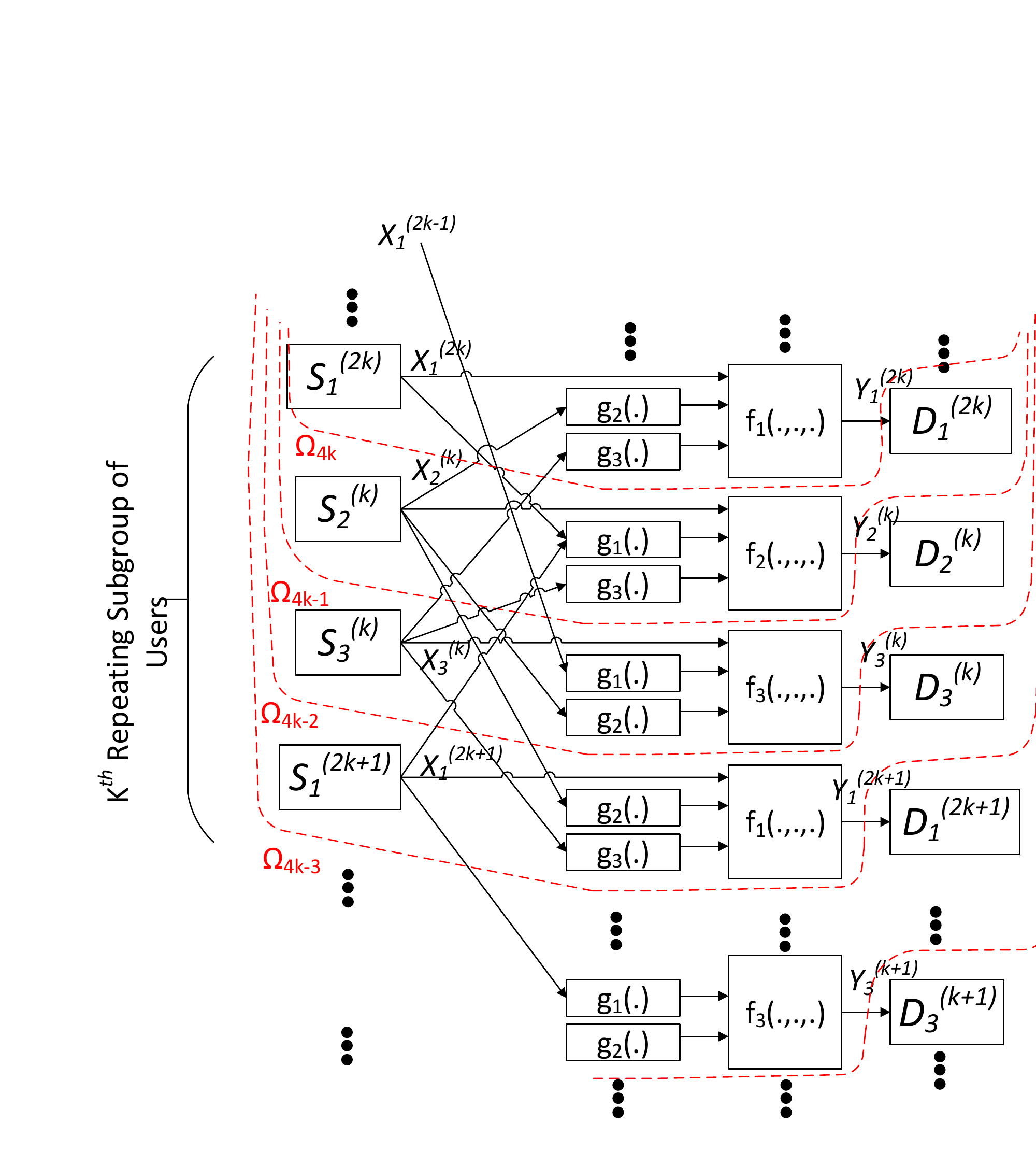}
\caption{Extended network and the cuts for deriving bound (\ref{ineq10}) for the three-user DIC}
\label{figb19}
\end{figure}

\noindent \textbf{Derivation of bound (\ref{ineq11}).}
To derive this bound, we design the extended network considered in Fig. \ref{figb20}. By applying GCS bound and picking cuts depicted in Fig. \ref{figb20}, we have
\small
\begin{equation} \label{eq3434}
\begin{aligned}
nR_{\Sigma}&=n(R_1+R_2+R_3+2kR_1+kR_2+kR_3)
\\& \overset{(a)}{\leq} H(Y_{{\Omega}_1^c}^n)+\sum_{l=2}^{4k+2}{H(Y_{{\Omega}_l^c\bigcap {\Omega}_{l-1}}^n|W_{\mathcal S\backslash{{\Omega}_l}},Y_{{\Omega}_{l-1}^c}^n)}+n\epsilon_n
\\& \overset{(b)}{\leq}\sum_{i=1}^{n}{kH(Y_{1}[i]|V_{1}[i]V_{3}[i])+kH(Y_{1}[i]|V_{2}[i])}\\&+\sum_{i=1}^{n}{kH(Y_{2}[i]|V_{1}[i]V_{2}[i]V_{3}[i])+kH(Y_{3}[i]|V_{2}[i]V_{3}[i])}
\\&+C+n\epsilon_n
\end{aligned}
\end{equation}
\normalsize
where $(a)$ follows from GCS \cite{gcs} for deterministic networks and step $(b)$ follows from the proof presented in Appendix B.

By letting $k$ goes to infinity, bound (\ref{ineq11}) will be recovered.
\begin{figure}[h10]
\centering
\includegraphics[trim = 0.5in 1.5in 2.5in 6in, clip,width=0.4\textwidth]{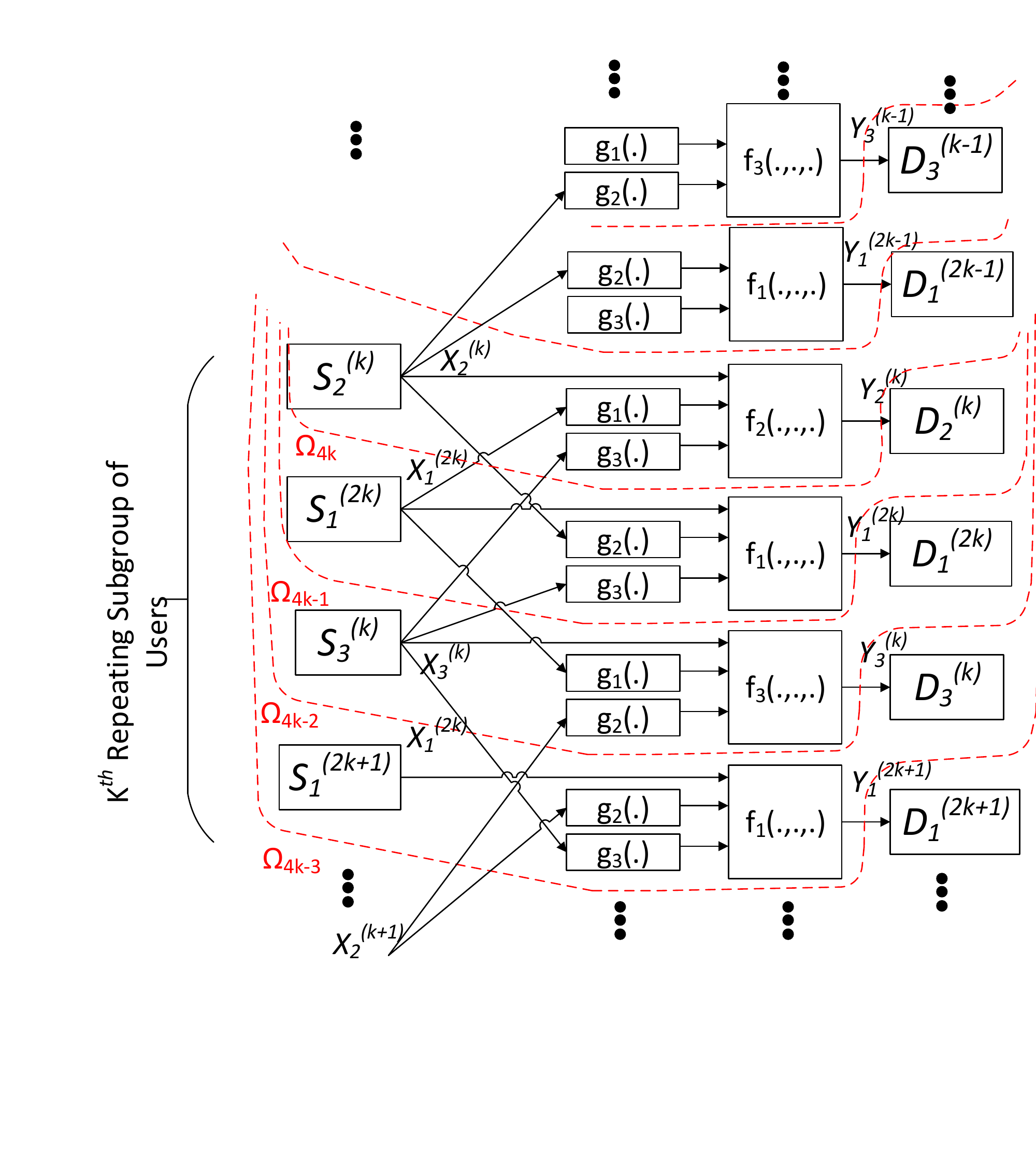}
\caption{Extended network and the cuts for deriving bound (\ref{ineq11}) for the three-user DIC}
\label{figb20}
\end{figure}

\noindent \textbf{Derivation of bound (\ref{ineq12}).}
To derive this bound, we design the extended network considered in Fig. \ref{figb21}. By applying GCS bound and picking cuts depicted in Fig. \ref{figb21}, we have
\small
\begin{equation} \label{eq3434}
\begin{aligned}
nR_{\Sigma}&=n(R_1+R_2+R_3+2kR_1+kR_2+kR_3)
\\& \overset{(a)}{\leq} H(Y_{{\Omega}_1^c}^n)+\sum_{l=2}^{4k+2}{H(Y_{{\Omega}_l^c\bigcap {\Omega}_{l-1}}^n|W_{\mathcal S\backslash{{\Omega}_l}},Y_{{\Omega}_{l-1}^c}^n)}+n\epsilon_n
\\& \overset{(c)}{\leq}\sum_{i=1}^{n}{kH(Y_{1}[i]|V_{1}[i]V_{2}[i]V_{3}[i])+kH(Y_{1}[i]|V_{3}[i])}\\&+\sum_{i=1}^{n}{kH(Y_{2}[i]|V_{1}[i]V_{2}[i])+kH(Y_{3}[i]|V_{2}[i]V_{3}[i])}
\\&+C+n\epsilon_n
\end{aligned}
\end{equation}
\normalsize
where $(a)$ follows from GCS \cite{gcs} for deterministic networks and step $(b)$ follows from the proof presented in Appendix B.

By letting $k$ goes to infinity, bound (\ref{ineq12}) will be recovered.
\begin{figure}[h10]
\centering
\includegraphics[trim = .5in 1.5in 2.5in 6in, clip,width=0.4\textwidth]{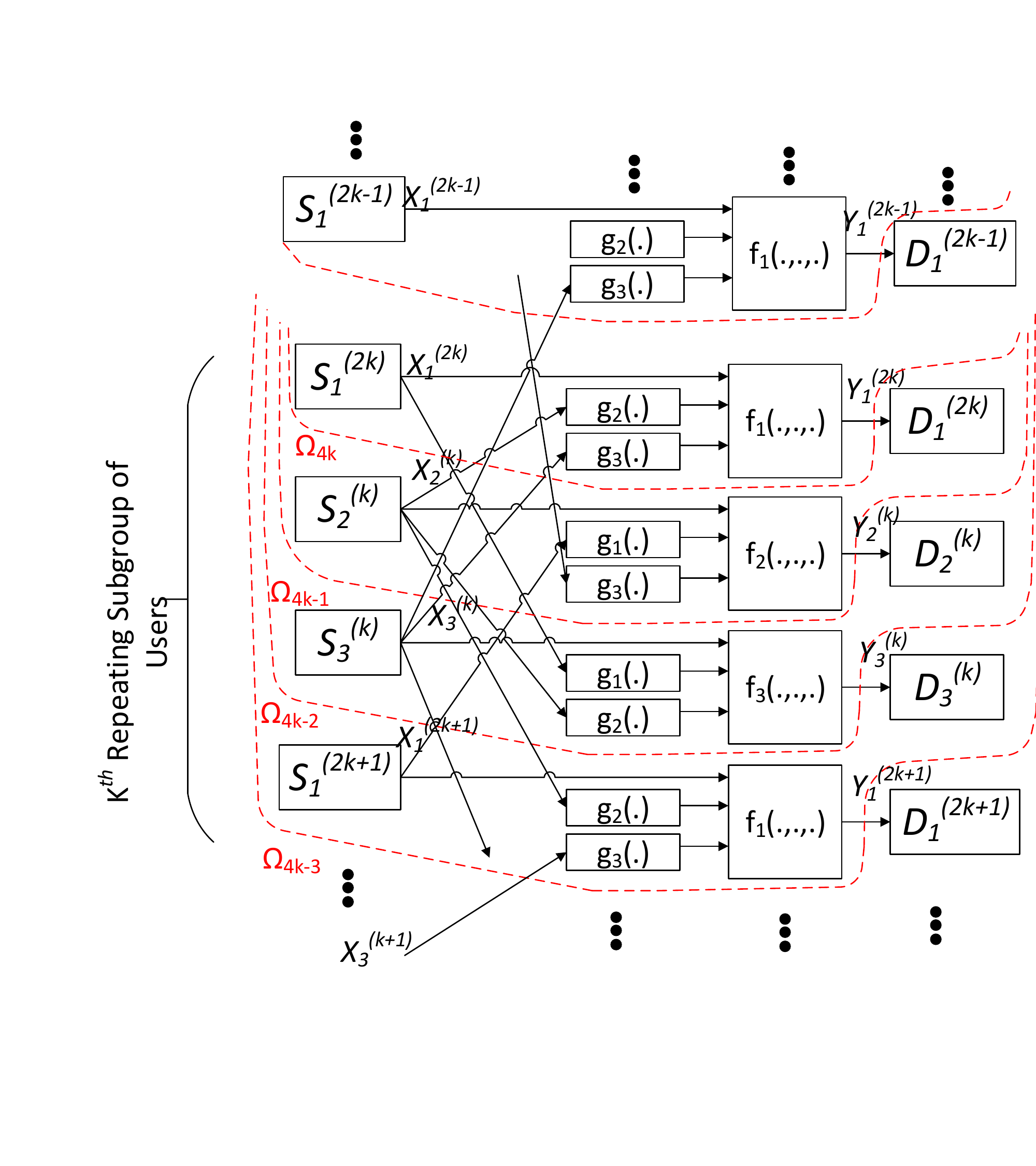}
\caption{Extended network and the cuts for deriving bound (\ref{ineq12}) for the three-user DIC}
\label{figb21}
\end{figure}

\noindent \textbf{Derivation of bound (\ref{ineq13}).}
To derive this bound, we design the extended network considered in Fig. \ref{figb22}. By applying GCS bound and picking cuts depicted in Fig. \ref{figb22}, we have
\small
\begin{equation} \label{eq3434}
\begin{aligned}
nR_{\Sigma}&=n(R_1+R_2+R_3+2kR_1+kR_2+kR_3)
\\& \overset{(a)}{\leq} H(Y_{{\Omega}_1^c}^n)+\sum_{l=2}^{4k+2}{H(Y_{{\Omega}_l^c\bigcap {\Omega}_{l-1}}^n|W_{\mathcal S\backslash{{\Omega}_l}},Y_{{\Omega}_{l-1}^c}^n)}+n\epsilon_n
\\& \overset{(b)}{\leq}\sum_{i=1}^{n}{kH(Y_{1}[i]|V_{1}[i]V_{2}[i]V_{3}[i])+kH(Y_{1}[i]|V_{3}[i])}\\&+\sum_{i=1}^{n}{kH(Y_{2}[i]|V_{2}[i])+kH(Y_{3}[i]|V_{1}[i]V_{2}[i]V_{3}[i])}
\\&+C+n\epsilon_n
\end{aligned}
\end{equation}
\normalsize
where $(a)$ follows from GCS \cite{gcs} for deterministic networks and step $(b)$ follows from the proof presented in Appendix B.

By letting $k$ goes to infinity, bound (\ref{ineq13}) will be recovered.
\begin{figure}[h10]
\centering
\includegraphics[trim = .5in 1.5in 2.5in 6in, clip,width=0.4\textwidth]{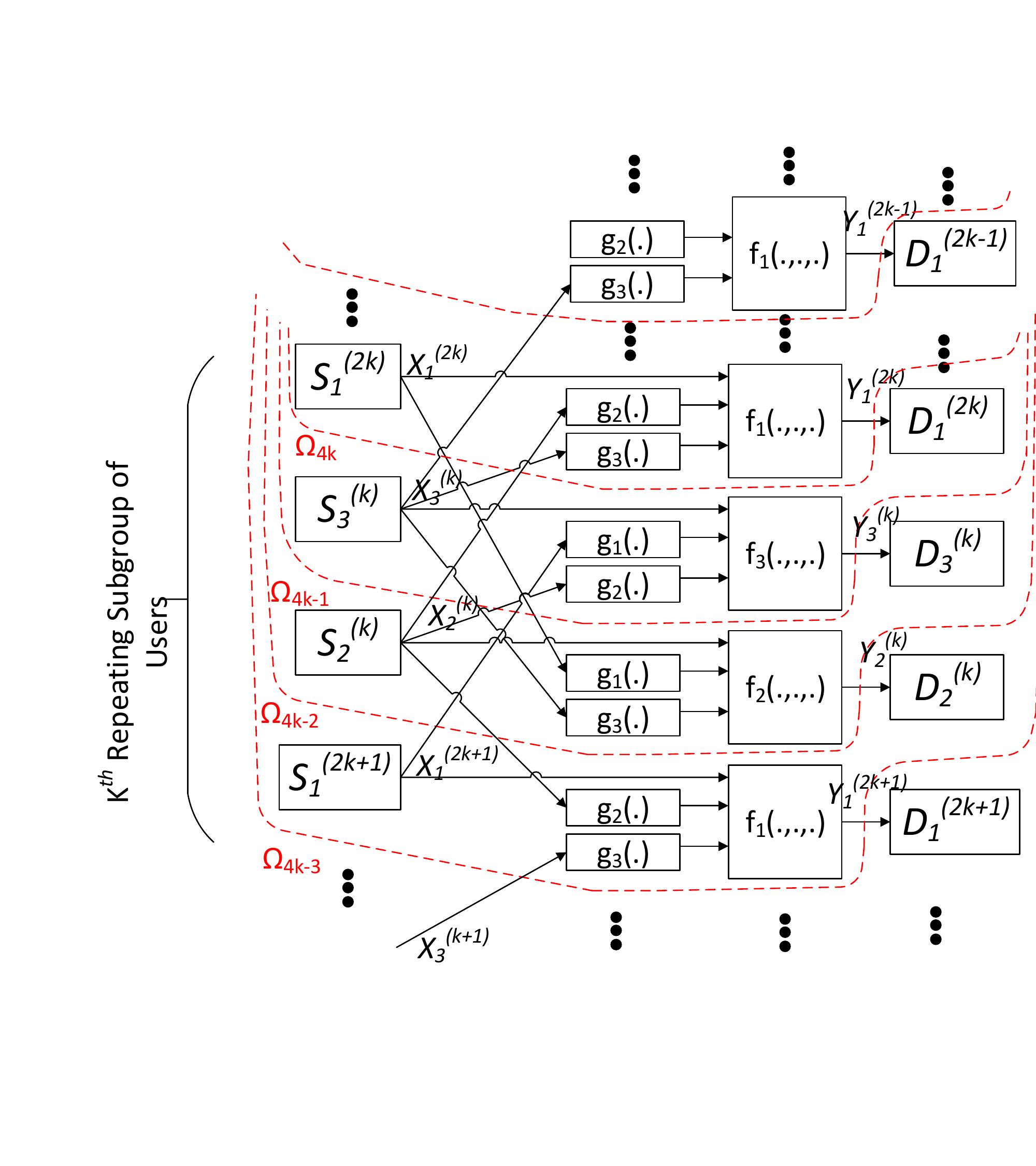}
\caption{Extended network and the cuts for deriving bound (\ref{ineq13}) for the three-user DIC}
\label{figb22}
\end{figure}

\noindent \textbf{Derivation of bound (\ref{ineq14}).}
To derive this bound, we design the extended network considered in Fig. \ref{figb23}. By applying GCS bound and picking cuts depicted in Fig. \ref{figb23}, we have
\small
\begin{equation} \label{eq3434}
\begin{aligned}
nR_{\Sigma}&=n(R_1+R_2+R_3+2kR_1+kR_2+kR_3)
\\& \overset{(a)}{\leq} H(Y_{{\Omega}_1^c}^n)+\sum_{l=2}^{4k+2}{H(Y_{{\Omega}_l^c\bigcap {\Omega}_{l-1}}^n|W_{\mathcal S\backslash{{\Omega}_l}},Y_{{\Omega}_{l-1}^c}^n)}+n\epsilon_n
\\& \overset{(b)}{\leq}\sum_{i=1}^{n}{kH(Y_{1}[i]|V_{1}[i]V_{2}[i]V_{3}[i])+kH(Y_{1}[i]|V_{1}[i]V_{2}[i])}\\&+\sum_{i=1}^{n}{kH(Y_{2}[i]|V_{2}[i]V_{3}[i])+kH(Y_{3}[i]|V_{3}[i])}
\\&+C+n\epsilon_n
\end{aligned}
\end{equation}
\normalsize
where $(a)$ follows from GCS \cite{gcs} for deterministic networks and step $(b)$ follows from the proof presented in Appendix B.

By letting $k$ goes to infinity, bound (\ref{ineq14}) will be recovered.
\begin{figure}[h10]
\centering
\includegraphics[trim =.5in 1.5in 2.5in 6in, clip,width=0.4\textwidth]{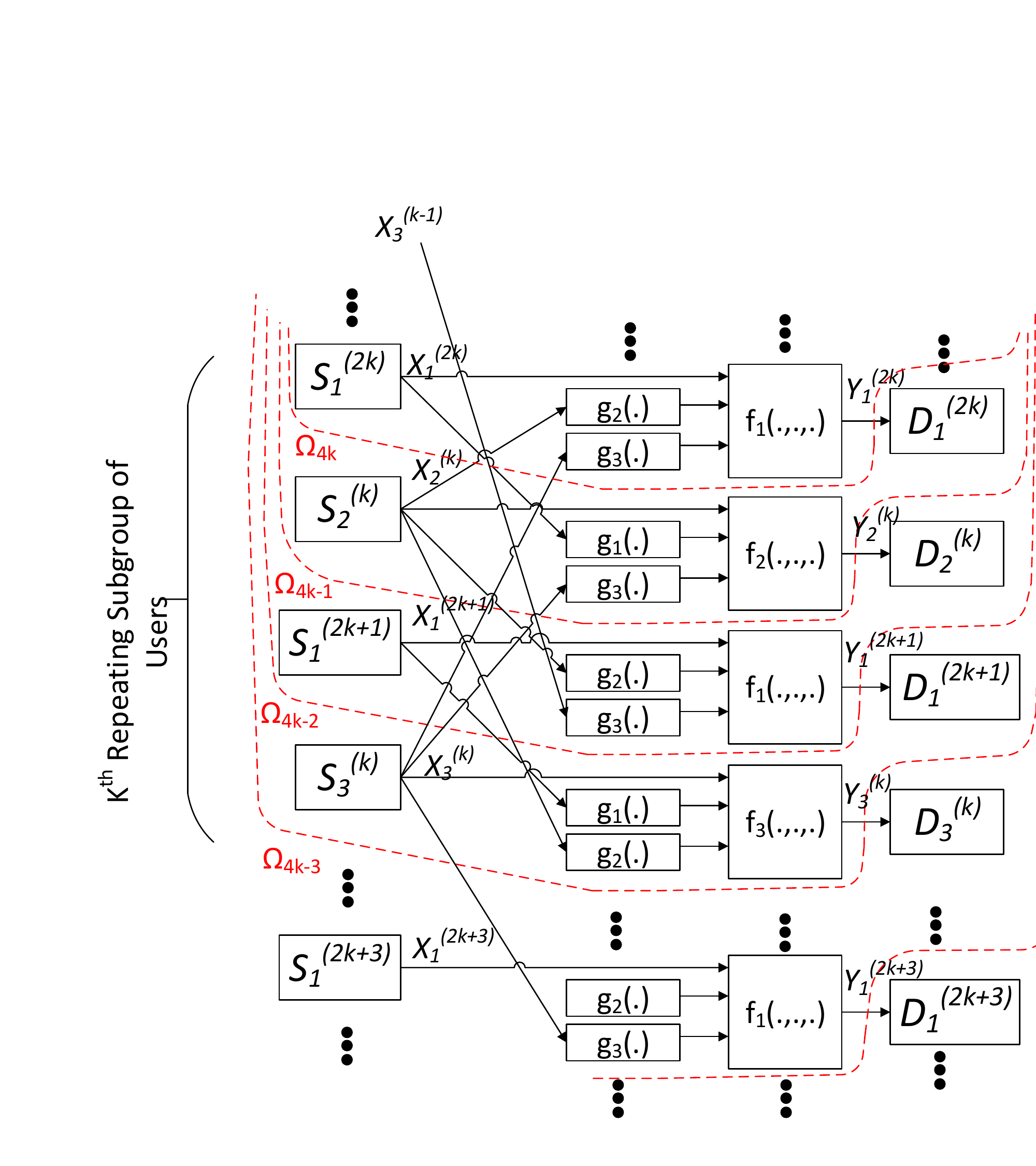}
\caption{Extended network and the cuts for deriving bound (\ref{ineq14}) for the three-user DIC}
\label{figb23}
\end{figure}

\noindent \textbf{Derivation of bound (\ref{ineq15}).}
To derive this bound, we design the extended network considered in Fig. \ref{figb24}. By applying GCS bound and picking cuts $\Omega_{1}, \Omega_{2}, \Omega_{3}$, and $\Omega_{4}$ depicted in Fig. \ref{figb24}, we have
\small
\begin{equation} \label{eq3434}
\begin{aligned}
nR_{\Sigma}&=n(R_1+R_2+R_3+R_1)
\\& \overset{(a)}{\leq} H(Y_{{\Omega}_1^c}^n)+\sum_{l=2}^{4}{H(Y_{{\Omega}_l^c\bigcap {\Omega}_{l-1}}^n|W_{\mathcal S\backslash{{\Omega}_l}},Y_{{\Omega}_{l-1}^c}^n)}+n\epsilon_n
\\& \overset{(b)}{\leq}\sum_{i=1}^{n}{2H(Y_{1}[i]|V_{1}[i]V_{2}[i]V_{3}[i])}+\sum_{i=1}^{n}{H(Y_{2}[i])}\\&+\sum_{i=1}^{n}{H(Y_{3}[i]|V_{2}[i]V_{3}[i])}+n\epsilon_n
\end{aligned}
\end{equation}
\normalsize
where $(a)$ follows from GCS \cite{gcs} for deterministic networks and step $(b)$ follows from the proof presented in Appendix B.
\begin{figure}[h10]
\centering
\includegraphics[trim = 1.5in 3.5in 2.5in 0in, clip,width=0.4\textwidth]{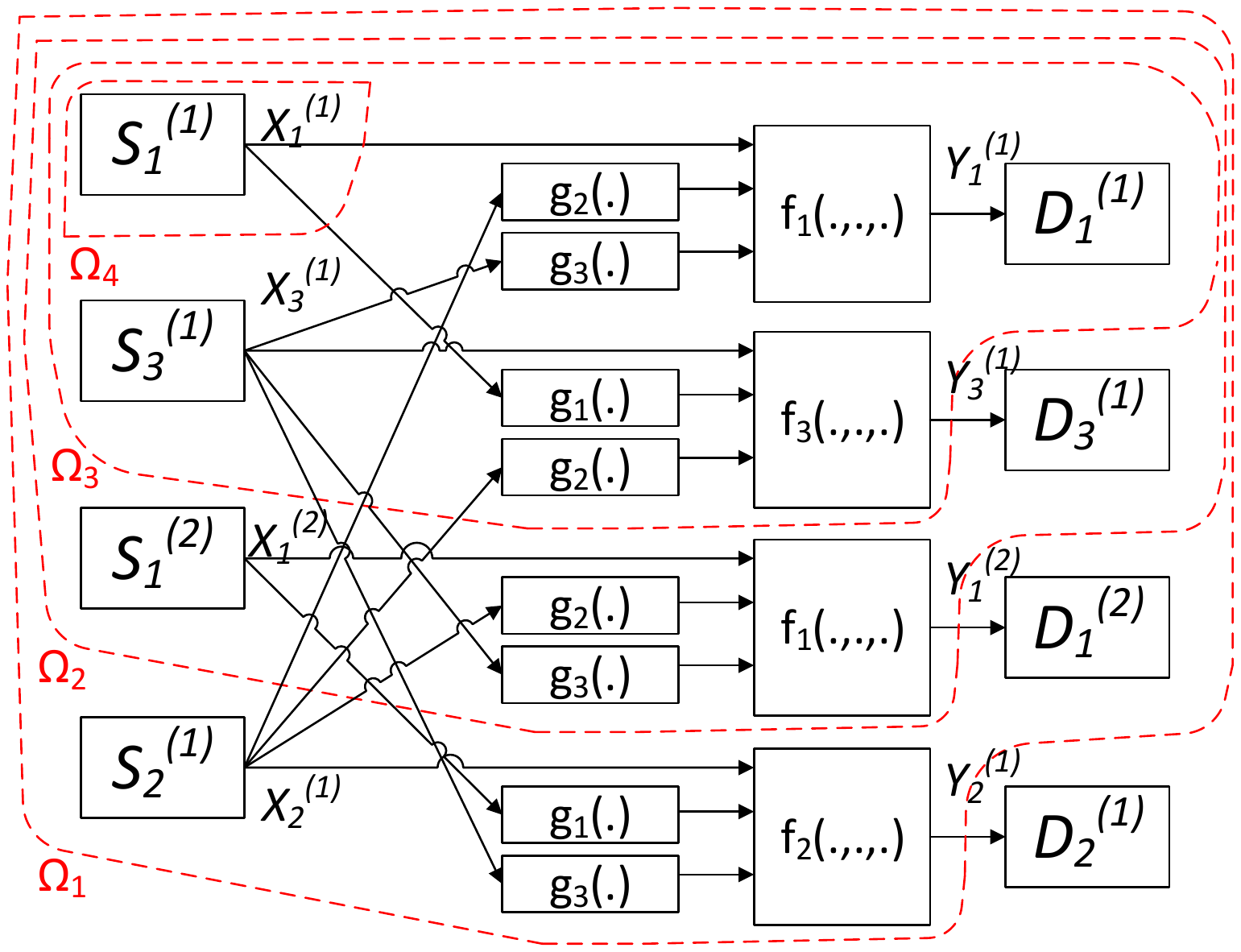}
\caption{Extended network and the cuts for deriving bound (\ref{ineq15}) for the three-user DIC}
\label{figb24}
\end{figure}

\noindent \textbf{Derivation of bound (\ref{ineq16}).}
To derive this bound, we design the extended network considered in Fig. \ref{figb25}. By applying GCS bound and picking cuts depicted in Fig. \ref{figb25}, we have
\small
\begin{equation} \label{eq3434}
\begin{aligned}
nR_{\Sigma}&=n(R_1+R_2+R_3+2kR_1+kR_2+kR_3)
\\& \overset{(a)}{\leq} H(Y_{{\Omega}_1^c}^n)+\sum_{l=2}^{4k+2}{H(Y_{{\Omega}_l^c\bigcap {\Omega}_{l-1}}^n|W_{\mathcal S\backslash{{\Omega}_l}},Y_{{\Omega}_{l-1}^c}^n)}+n\epsilon_n
\\& \overset{(b)}{\leq}\sum_{i=1}^{n}{2kH(Y_{1}[i]|V_{1}[i]V_{2}[i]V_{3}[i])+kH(Y_{2}[i]|V_{2}[i])}\\&+\sum_{i=1}^{n}{kH(Y_{3}[i]|V_{3}[i])}+C+n\epsilon_n
\end{aligned}
\end{equation}
\normalsize
where $(a)$ follows from GCS \cite{gcs} for deterministic networks and step $(b)$ follows from the proof presented in Appendix B.

By letting $k$ goes to infinity, bound (\ref{ineq16}) will be recovered.
\begin{figure}[h10]
\centering
\includegraphics[trim = .5in 1.5in 2.5in 6in, clip,width=0.4\textwidth]{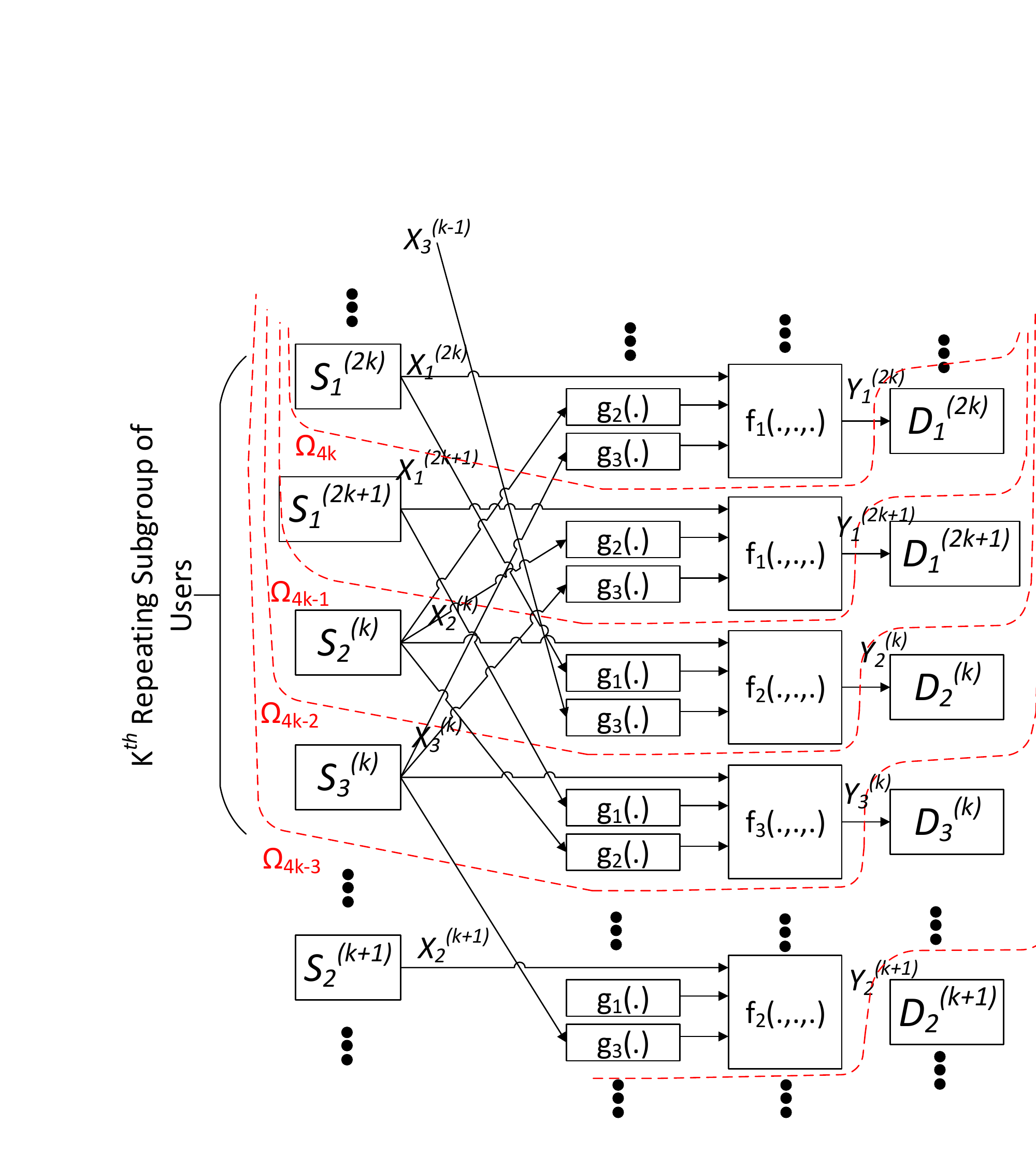}
\caption{Extended network and the cuts for deriving bound (\ref{ineq16}) for the three-user DIC}
\label{figb25}
\end{figure}

\noindent \textbf{Derivation of bound (\ref{ineq17}).}
To derive this bound, we design the extended network considered in Fig. \ref{figb26}. By applying GCS bound and picking cuts depicted in Fig. \ref{figb26}, we have
\small
\begin{equation} \label{eq3434}
\begin{aligned}
nR_{\Sigma}&=n(3R_1+R_2+R_3)
\\& \overset{(a)}{\leq} H(Y_{{\Omega}_1^c}^n)+\sum_{l=2}^{5}{H(Y_{{\Omega}_l^c\bigcap {\Omega}_{l-1}}^n|W_{\mathcal S\backslash{{\Omega}_l}},Y_{{\Omega}_{l-1}^c}^n)}+n\epsilon_n
\\& \overset{(b)}{\leq}\sum_{i=1}^{n}{2H(Y_{1}[i]|V_{1}[i]V_{2}[i]V_{3}[i])+H(Y_{1}[i])}\\&+\sum_{i=1}^{n}{H(Y_{2}[i]|V_{2}[i]V_{3}[i])+H(Y_{3}[i]|V_{2}[i]V_{3}[i])}+n\epsilon_n
\end{aligned}
\end{equation}
\normalsize
where $(a)$ follows from GCS \cite{gcs} for deterministic networks and step $(b)$ follows from the proof presented in Appendix B.

\begin{figure}[h11]
\centering
\includegraphics[trim = 1.5in 1.5in 1.5in 0in, clip,width=0.4\textwidth]{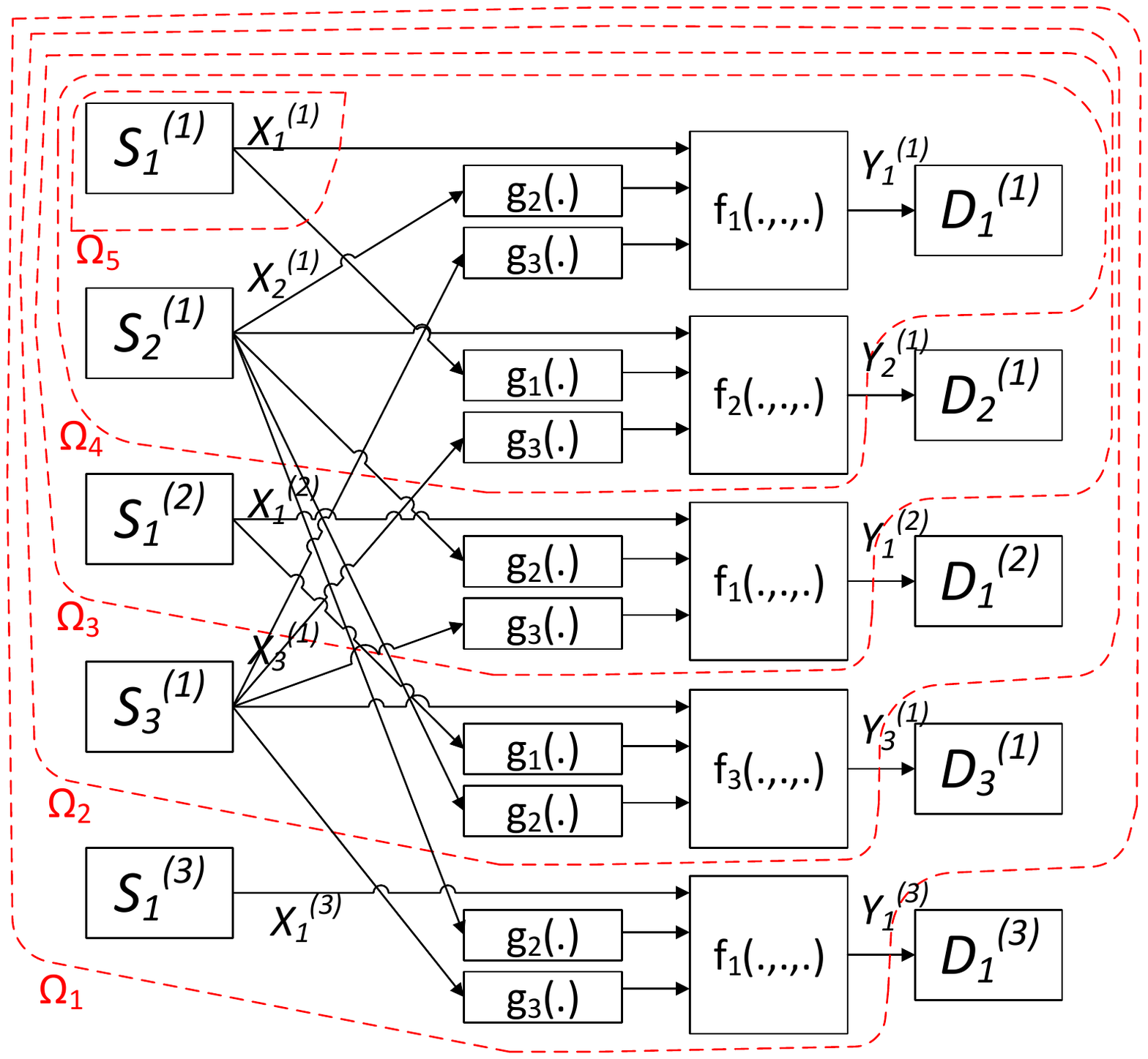}
\caption{Extended network and the cuts for deriving bound (\ref{ineq17}) for the three-user DIC}
\label{figb26}
\end{figure}

\noindent \textbf{Derivation of bound (\ref{ineq18}).}
To derive this bound, we design the extended network considered in Fig. \ref{figb27}. By applying GCS bound and picking cuts depicted in Fig. \ref{figb27}, we have
\small
\begin{equation} \label{eq3434}
\begin{aligned}
nR_{\Sigma}&=n(R_1+R_2+R_3+3kR_1+kR_2+kR_3)
\\& \overset{(a)}{\leq} H(Y_{{\Omega}_1^c}^n)+\sum_{l=2}^{5k+2}{H(Y_{{\Omega}_l^c\bigcap {\Omega}_{l-1}}^n|W_{\mathcal S\backslash{{\Omega}_l}},Y_{{\Omega}_{l-1}^c}^n)}+n\epsilon_n
\\& \overset{(b)}{\leq}\sum_{i=1}^{n}{2kH(Y_{1}[i]|V_{1}[i]V_{2}[i]V_{3}[i])+kH(Y_{1}[i]|V_{2}[i])}
\\&+\sum_{i=1}^{n}{kH(Y_{2}[i]|V_{2}[i]V_{3}[i])+kH(Y_{3}[i]|V_{3}[i])}
\\&+C+n\epsilon_n
\end{aligned}
\end{equation}
\normalsize
where $(a)$ follows from GCS \cite{gcs} for deterministic networks and step $(b)$ follows from the proof presented in Appendix B.

By letting $k$ goes to infinity, bound (\ref{ineq18}) will be recovered.
\begin{figure}[h11]
\centering
\includegraphics[trim = .5in 1in 2.5in 6in, clip,width=0.4\textwidth]{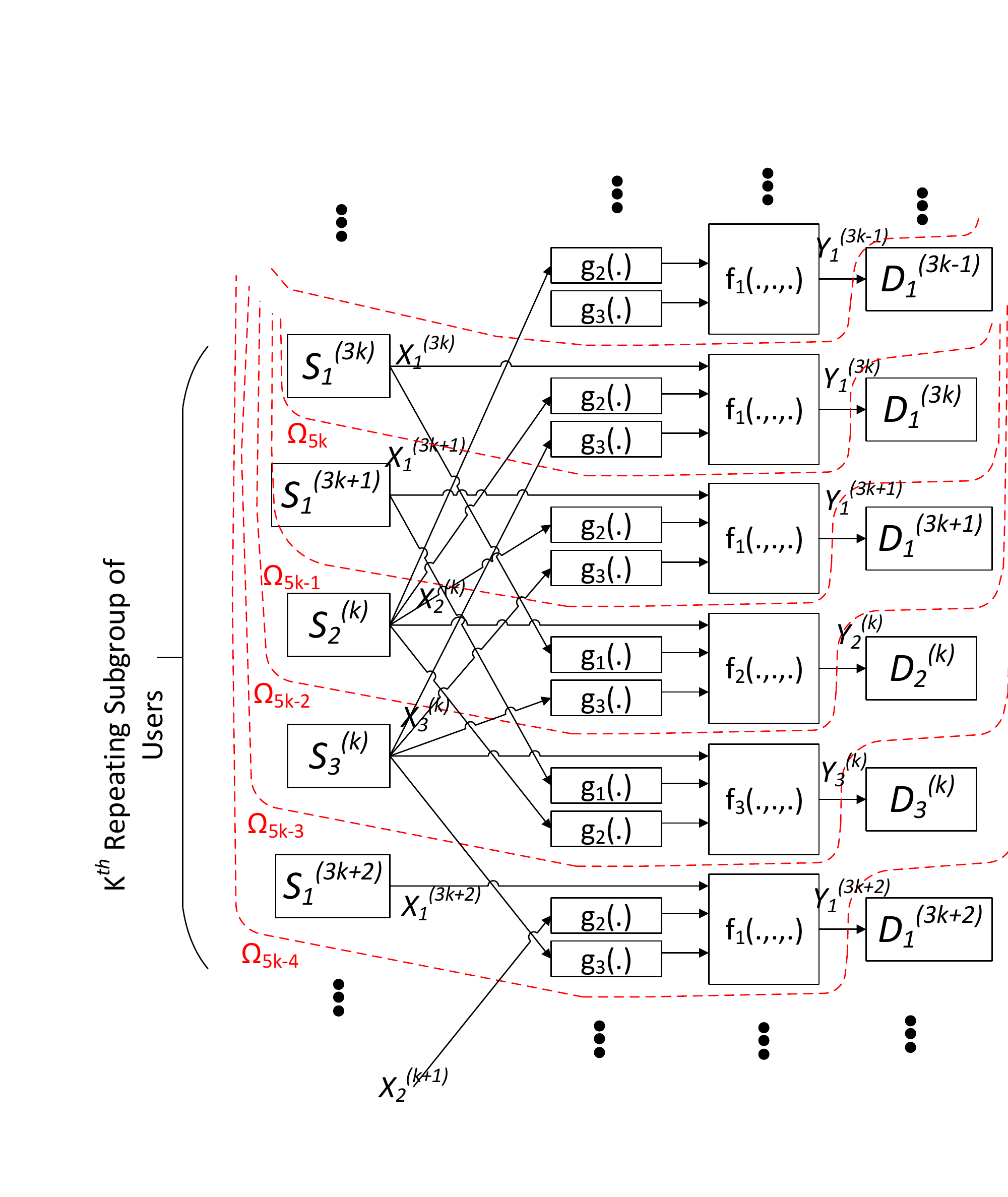}
\caption{Extended network and the cuts for deriving bound (\ref{ineq18}) for the three-user DIC}
\label{figb27}
\end{figure}

\noindent \textbf{Derivation of bound (\ref{ineq19}).}
To derive this bound, we design the extended network considered in Fig. \ref{figb29}.By applying GCS bound and picking cuts $\Omega_{1},...,\Omega_{5}$ depicted in Fig. \ref{figb10}, we have

\small
\begin{equation} \label{eq3434}
\begin{aligned}
nR_{\Sigma}&=n(2R_1+2R_2+R_3)
\\& \overset{(a)}{\leq} H(Y_{{\Omega}_1^c}^n)+\sum_{l=2}^{5}{H(Y_{{\Omega}_l^c\bigcap {\Omega}_{l-1}}^n|W_{\mathcal S\backslash{{\Omega}_l}},Y_{{\Omega}_{l-1}^c}^n)}+n\epsilon_n
\\&\overset{(b)}{\leq}\sum_{i=1}^{n}{Y_{{1^{(1)}}}[i])}+\sum_{i=1}^{n}{H(Y_{{3^{(1)}}}[i]|V_{{3^{(1)}}}[i]V_{{2^{(1)}}}[i])}\\&+\sum_{i=1}^{n}{H(Y_{1^{(1)}}[i]|V_{1^{(1)}}[i]V_{3^{(1)}}[i])}
\\&+2\sum_{i=1}^{n}{H(Y_{2^{(1)}}[i]|V_{2^{(1)}}[i]V_{1^{(1)}}[i]V_{3^{(1)}}[i])}+n\epsilon_n
\end{aligned}
\end{equation}
\normalsize
where $(a)$ follows from GCS \cite{gcs} for deterministic networks and step $(b)$ follows from the proof presented in Appendix B.
\begin{figure}[h28]
\centering
\includegraphics[trim = 2in 2.5in 1.2in 0in, clip,width=0.4\textwidth]{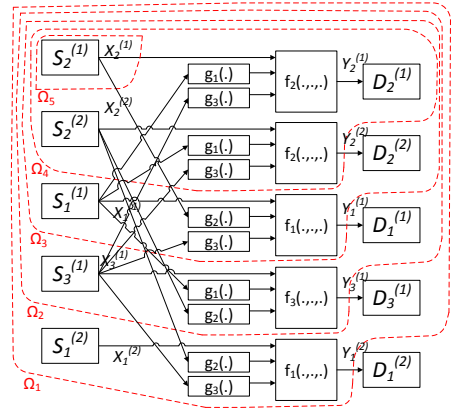}
\caption{Extended network and the cuts for deriving bound (\ref{ineq19}) for the three-user DIC}
\label{fig10}
\end{figure}

\noindent \textbf{Derivation of bound (\ref{ineq20}).}
To derive this bound, we design the extended network considered in Fig. \ref{figb29}. By applying GCS bound and picking cuts $\Omega_{1},..., \Omega_{5}$ depicted in Fig. \ref{figb29}, we have
\small
\begin{equation} \label{eq3434}
\begin{aligned}
nR_{\Sigma}&=n(2R_1+2R_2+R_3)
\\& \overset{(a)}{\leq} H(Y_{{\Omega}_1^c}^n)+\sum_{l=2}^{5}{H(Y_{{\Omega}_l^c\bigcap {\Omega}_{l-1}}^n|W_{\mathcal S\backslash{{\Omega}_l}},Y_{{\Omega}_{l-1}^c}^n)}+n\epsilon_n
\\& \overset{(b)}{\leq}\sum_{i=1}^{n}{H(Y_{1}[i])+H(Y_{1}[i]|V_{1}[i]V_{2}[i]V_{3}[i])}\\&+\sum_{i=1}^{n}{H(Y_{2}[i]|V_{1}[i]V_{2}[i]V_{3}[i])+H(Y_{2}[i]|V_{2}[i]V_{3}[i])}\\&+\sum_{i=1}^{n}{H(Y_{3}[i]|V_{1}[i]V_{3}[i])}+n\epsilon_n
\end{aligned}
\end{equation}
\normalsize
where $(a)$ follows from GCS \cite{gcs} for deterministic networks and step $(b)$ follows from the proof presented in Appendix B.
\begin{figure}[h10]
\centering
\includegraphics[trim = 1.5in 2.5in 2in 0in, clip,width=0.4\textwidth]{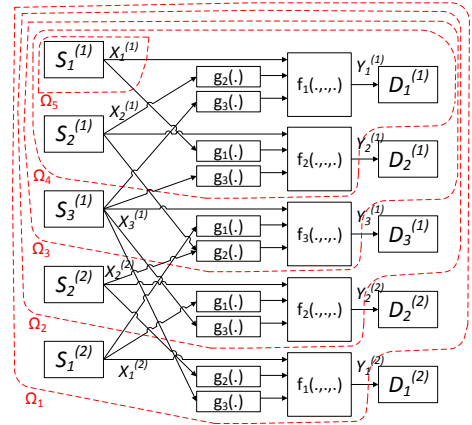}
\caption{Extended network and the cuts for deriving bound (\ref{ineq20}) for the three-user DIC}
\label{figb29}
\end{figure}

\noindent \textbf{Derivation of bound (\ref{ineq21}).}
To derive this bound, we design the extended network considered in Fig. \ref{figb30}. By applying GCS bound and picking cuts $\Omega_{1},..., \Omega_{5}$ depicted in Fig. \ref{figb30}, we have
\small
\begin{equation} \label{eq3434}
\begin{aligned}
nR_{\Sigma}&=n(2R_1+2R_2+R_3)
\\& \overset{(a)}{\leq} H(Y_{{\Omega}_1^c}^n)+\sum_{l=2}^{5}{H(Y_{{\Omega}_l^c\bigcap {\Omega}_{l-1}}^n|W_{\mathcal S\backslash{{\Omega}_l}},Y_{{\Omega}_{l-1}^c}^n)}+n\epsilon_n
\\& \overset{(b)}{\leq}\sum_{i=1}^{n}{H(Y_{1}[i])+H(Y_{1}[i]|V_{1}[i]V_{2}[i]V_{3}[i])}\\&+\sum_{i=1}^{n}{2H(Y_{2}[i]|V_{1}[i]V_{2}[i]V_{3}[i])+H(Y_{3}[i]|V_{3}[i])}+n\epsilon_n
\end{aligned}
\end{equation}
\normalsize
where $(a)$ follows from GCS \cite{gcs} for deterministic networks and step $(b)$ follows from the proof presented in Appendix B.
\begin{figure}[h10]
\centering
\includegraphics[trim = 1.5in 2.5in 2in 0in, clip,width=0.4\textwidth]{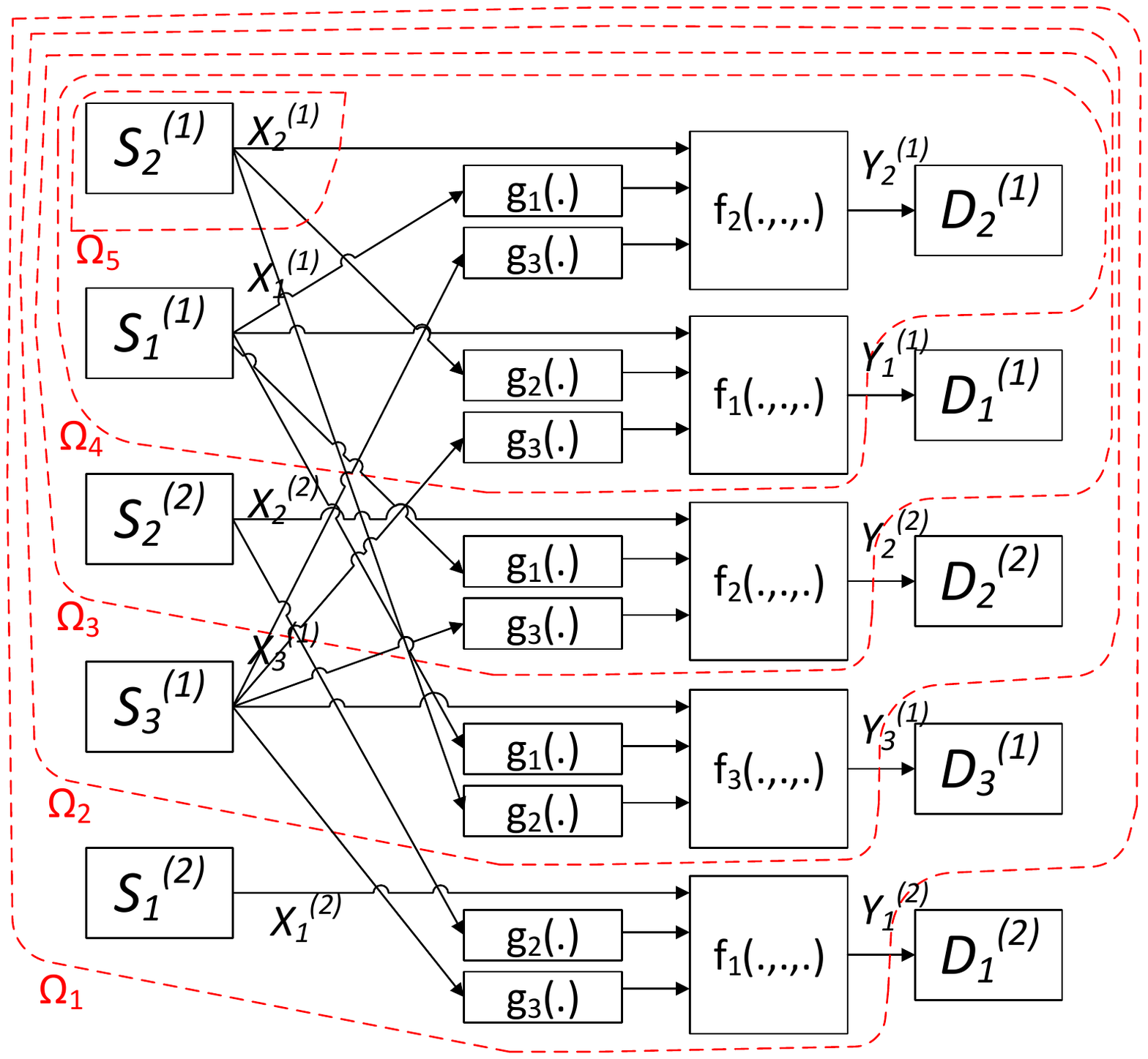}
\caption{Extended network and the cuts for deriving bound (\ref{ineq21}) for the three-user DIC}
\label{figb30}
\end{figure}

\noindent \textbf{Derivation of bound (\ref{ineq22}).}
To derive this bound, we design the extended network considered in Fig. \ref{figb31}. By applying GCS bound and picking cuts depicted in Fig. \ref{figb31}, we have
\small
\begin{equation} \label{eq3434}
\begin{aligned}
nR_{\Sigma}&=n(R_1+R_2+R_3+2kR_1+2kR_2+kR_3)
\\& \overset{(a)}{\leq} H(Y_{{\Omega}_1^c}^n)+\sum_{l=2}^{5k+2}{H(Y_{{\Omega}_l^c\bigcap {\Omega}_{l-1}}^n|W_{\mathcal S\backslash{{\Omega}_l}},Y_{{\Omega}_{l-1}^c}^n)}+n\epsilon_n
\\& \overset{(b)}{\leq}\sum_{i=1}^{n}{kH(Y_{1}[i]|V_{1}[i])+kH(Y_{1}[i]|V_{1}[i]V_{2}[i]V_{3}[i])}\\&+\sum_{i=1}^{n}{kH(Y_{2}[i]|V_{1}[i]V_{2}[i]V_{3}[i])}
\\&+\sum_{i=1}^{n}{kH(Y_{2}[i]|V_{2}[i]V_{3}[i])+kH(Y_{3}[i]|V_{3}[i])}\\&+C+n\epsilon_n
\end{aligned}
\end{equation}
\normalsize
where $(a)$ follows from GCS \cite{gcs} for deterministic networks and step $(b)$ follows from the proof presented in Appendix B.

By letting $k$ goes to infinity, bound (\ref{ineq22}) will be recovered.
\begin{figure}[h11]
\centering
\includegraphics[trim =0.5in 1in 2.5in 6in, clip,width=0.4\textwidth]{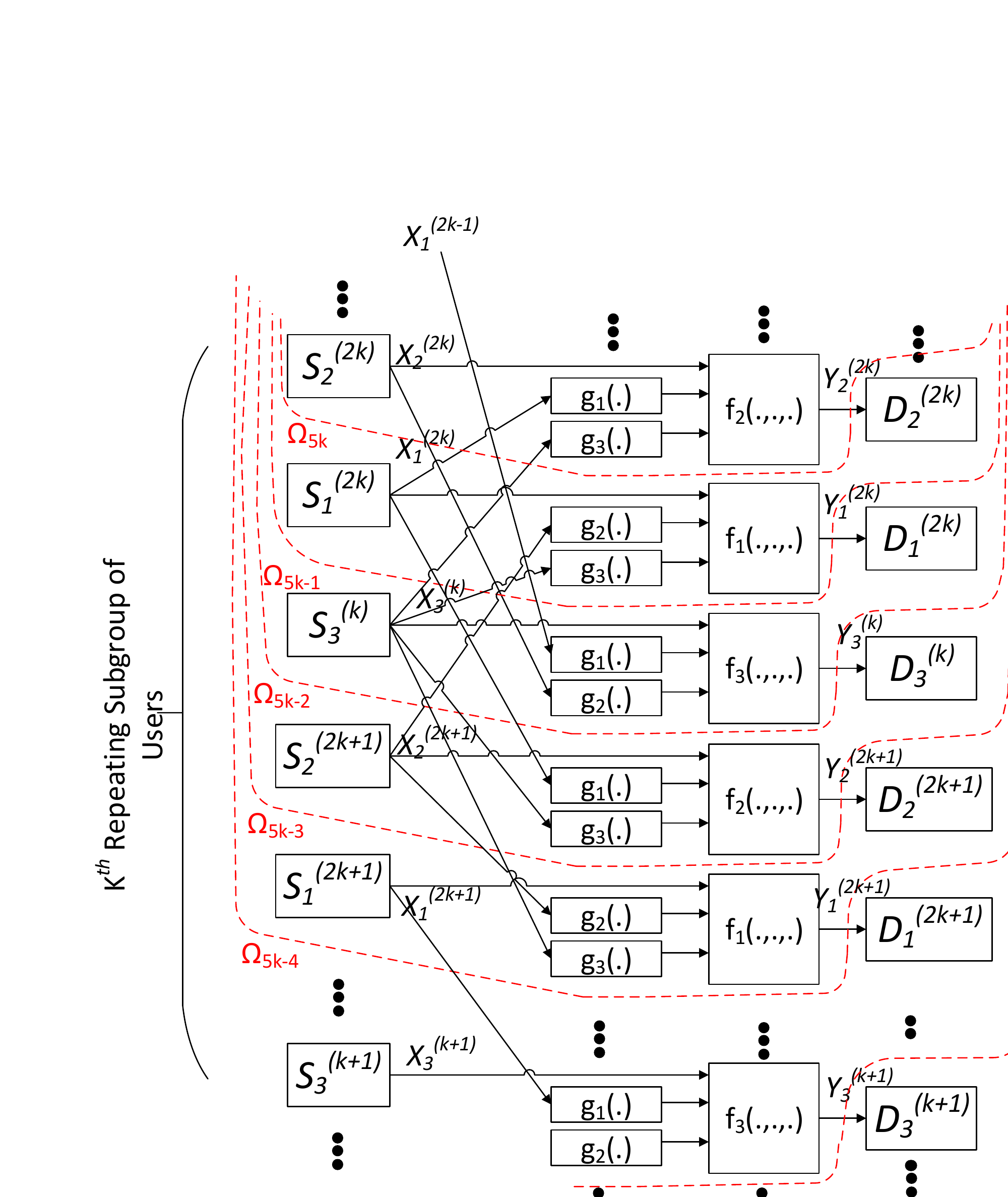}
\caption{Extended network and the cuts for deriving bound (\ref{ineq22}) for the three-user DIC}
\label{figb31}
\end{figure}

\noindent \textbf{Derivation of bound (\ref{ineq23}).}
To derive this bound, we design the extended network considered in Fig. \ref{figb32}. By applying GCS bound and picking cuts depicted in Fig. \ref{figb32}, we have
\small
\begin{equation} \label{eq3434}
\begin{aligned}
nR_{\Sigma}&=n(R_1+R_2+R_3+2kR_1+2kR_2+kR_3)
\\& \overset{(a)}{\leq} H(Y_{{\Omega}_1^c}^n)+\sum_{l=2}^{5k+2}{H(Y_{{\Omega}_l^c\bigcap {\Omega}_{l-1}}^n|W_{\mathcal S\backslash{{\Omega}_l}},Y_{{\Omega}_{l-1}^c}^n)}+n\epsilon_n
\\& \overset{(b)}{\leq}\sum_{i=1}^{n}{2kH(Y_{1}[i]|V_{1}[i]V_{3}[i])+kH(Y_{2}[i]|V_{1}[i]V_{2}[i]V_{3}[i])}\\&+\sum_{i=1}^{n}{2kH(Y_{2}[i]|V_{2}[i])+kH(Y_{3}[i]|V_{2}[i]V_{3}[i])}
\\&+C+n\epsilon_n
\end{aligned}
\end{equation}
\normalsize
where $(a)$ follows from GCS \cite{gcs} for deterministic networks and step $(b)$ follows from the proof presented in Appendix B.

By letting $k$ goes to infinity, bound (\ref{ineq23}) will be recovered.
\begin{figure}[h11]
\centering
\includegraphics[trim = 0.5in 1in 2.5in 6in, clip,width=0.4\textwidth]{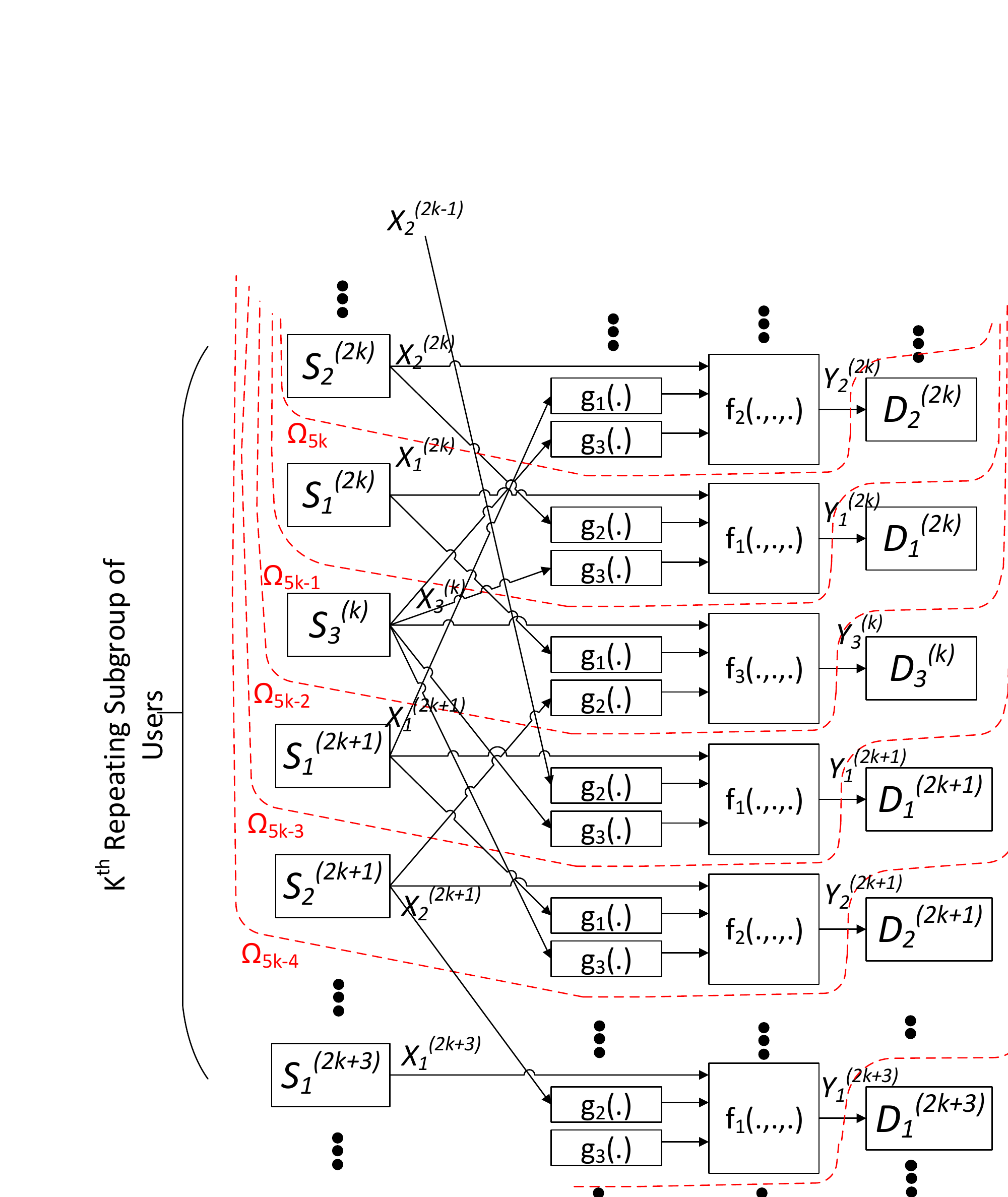}
\caption{Extended network and the cuts for deriving bound (\ref{ineq23}) for the three-user DIC}
\label{figb32}
\end{figure}

\noindent \textbf{Derivation of bound (\ref{ineq24}).}
To derive this bound, we design the extended network considered in Fig. \ref{figb33}. By applying GCS bound and picking cuts $\Omega_{1},..., \Omega_{6}$ depicted in Fig. \ref{figb33}, we have
\small
\begin{equation} \label{eq3434}
\begin{aligned}
nR_{\Sigma}&=n(3R_1+2R_2+R_3)
\\& \overset{(a)}{\leq} H(Y_{{\Omega}_1^c}^n)+\sum_{l=2}^{6}{H(Y_{{\Omega}_l^c\bigcap {\Omega}_{l-1}}^n|W_{\mathcal S\backslash{{\Omega}_l}},Y_{{\Omega}_{l-1}^c}^n)}+n\epsilon_n
\\& \overset{(b)}{\leq}\sum_{i=1}^{n}{2H(Y_{1}[i]|V_{1}[i]V_{2}[i]V_{3}[i])+H(Y_{1}[i])}\\&+\sum_{i=1}^{n}{H(Y_{2}[i]|V_{1}[i]V_{2}[i]V_{3}[i])+H(Y_{2}[i]|V_{2}[i]V_{3}[i])}\\&+\sum_{i=1}^{n}{H(Y_{3}[i]|V_{3}[i])}+n\epsilon_n
\end{aligned}
\end{equation}
\normalsize
where $(a)$ follows from GCS \cite{gcs} for deterministic networks and step $(b)$ follows from the proof presented in Appendix B.
\begin{figure}[h10]
\centering
\includegraphics[trim = 1.5in 1.5in 2in 0in, clip,width=0.4\textwidth]{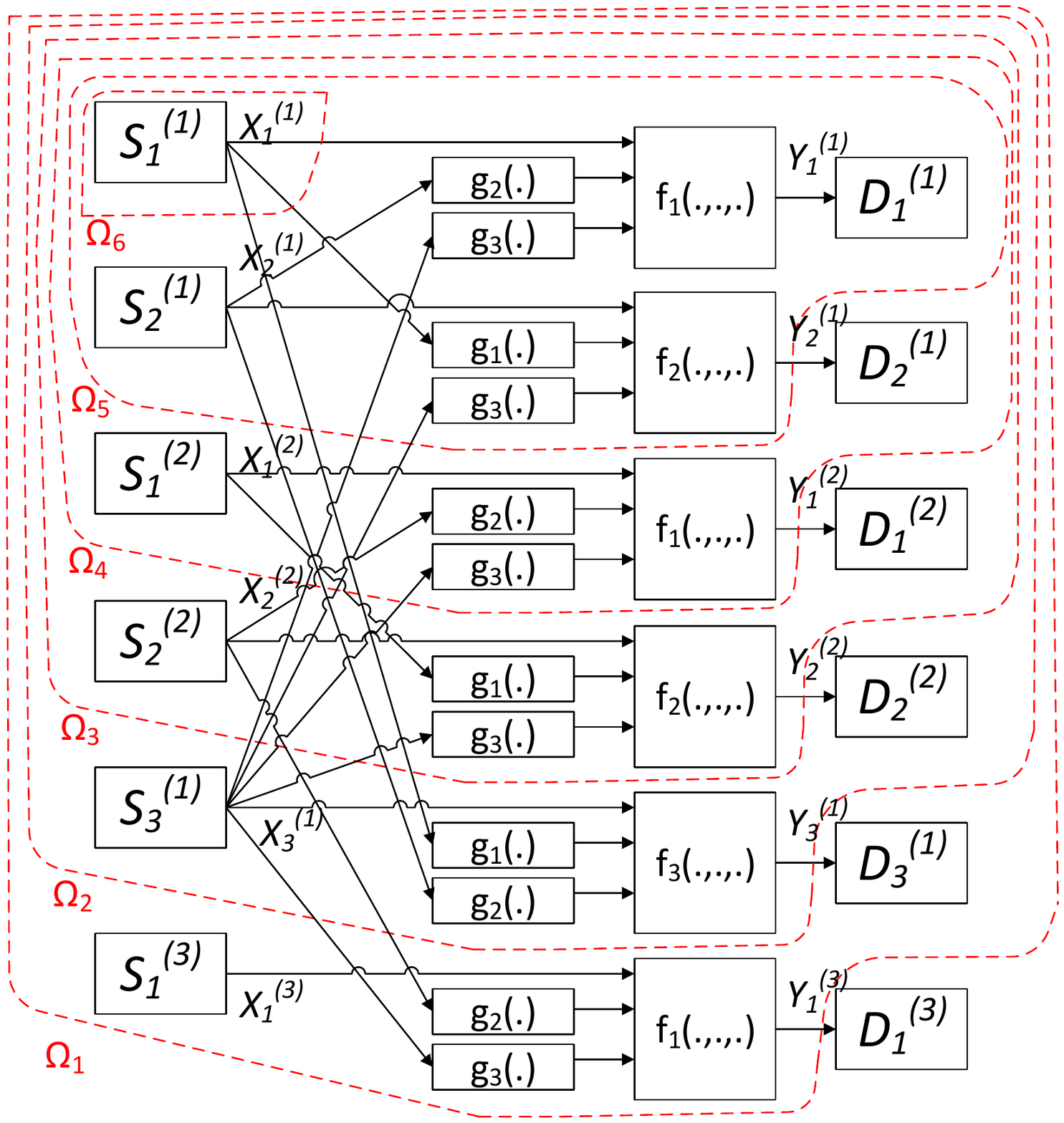}
\caption{Extended network and the cuts for deriving bound (\ref{ineq24}) for the three-user DIC}
\label{figb33}
\end{figure}

\noindent \textbf{Derivation of bound (\ref{ineq25}).}
To derive this bound, we design the extended network considered in Fig. \ref{figb34}. By applying GCS bound and picking cuts $\Omega_{1},..., \Omega_{6}$ depicted in Fig. \ref{figb34}, we have
\small
\begin{equation} \label{eq3434}
\begin{aligned}
nR_{\Sigma}&=n(3R_1+2R_2+R_3)
\\& \overset{(a)}{\leq} H(Y_{{\Omega}_1^c}^n)+\sum_{l=2}^{6}{H(Y_{{\Omega}_l^c\bigcap {\Omega}_{l-1}}^n|W_{\mathcal S\backslash{{\Omega}_l}},Y_{{\Omega}_{l-1}^c}^n)}+n\epsilon_n
\\& \overset{(b)}{\leq}\sum_{i=1}^{n}{2H(Y_{1}[i]|V_{1}[i]V_{2}[i]V_{3}[i])+H(Y_{1}[i])}\\&+\sum_{i=1}^{n}{2H(Y_{2}[i]|V_{2}[i]V_{3}[i])+H(Y_{3}[i]|V_{1}[i]V_{3}[i])}+n\epsilon_n
\end{aligned}
\end{equation}
\normalsize
where $(a)$ follows from GCS \cite{gcs} for deterministic networks and step $(b)$ follows from the proof presented in Appendix B.
\begin{figure}[h10]
\centering
\includegraphics[trim = 1.5in 1.5in 2in 0in, clip,width=0.4\textwidth]{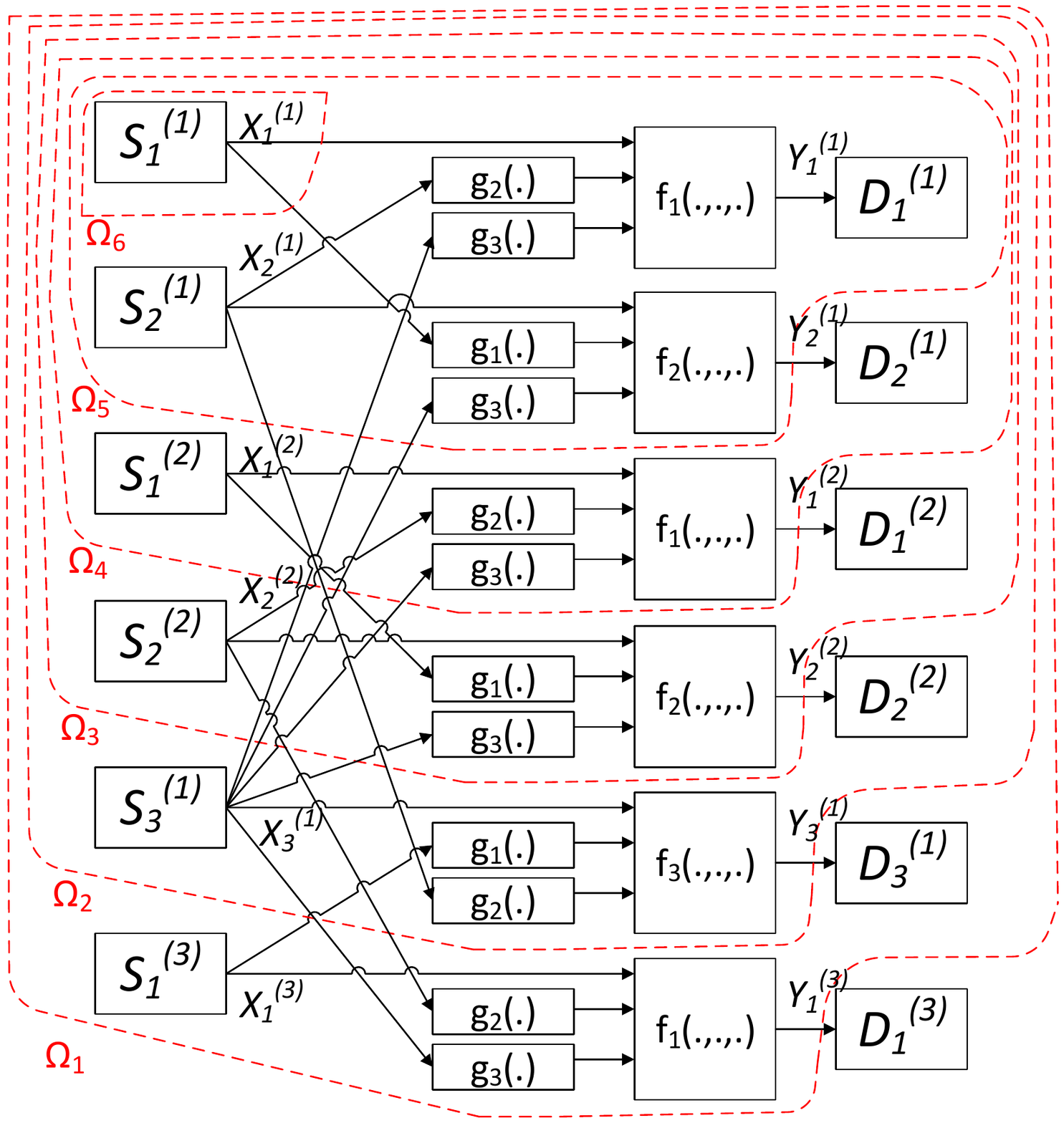}
\caption{Extended network and the cuts for deriving bound (\ref{ineq25}) for the three-user DIC}
\label{figb34}
\end{figure}

\noindent \textbf{Derivation of bound (\ref{ineq26}).}
To derive this bound, we design the extended network considered in Fig. \ref{figb35}. By applying GCS bound and picking cuts depicted in Fig. \ref{figb35}, we have
\small
\begin{equation} \label{eq3434}
\begin{aligned}
nR_{\Sigma}&=n(R_1+R_2+R_3+3kR_1+2kR_2+kR_3)
\\& \overset{(a)}{\leq} H(Y_{{\Omega}_1^c}^n)+\sum_{l=2}^{6k+2}{H(Y_{{\Omega}_l^c\bigcap {\Omega}_{l-1}}^n|W_{\mathcal S\backslash{{\Omega}_l}},Y_{{\Omega}_{l-1}^c}^n)}+n\epsilon_n
\\& \overset{(b)}{\leq}\sum_{i=1}^{n}{2kH(Y_{1}[i]|V_{1}[i]V_{2}[i]V_{3}[i])+kH(Y_{1}[i]|V_{1}[i])}\\&+\sum_{i=1}^{n}{2kH(Y_{2}[i]|V_{2}[i]V_{3}[i])+kH(Y_{3}[i]|V_{3}[i])}
\\&+C+n\epsilon_n
\end{aligned}
\end{equation}
\normalsize
where $(a)$ follows from GCS \cite{gcs} for deterministic networks and step $(b)$ follows from the proof presented in Appendix B.

By letting $k$ goes to infinity, bound (\ref{ineq26}) will be recovered.
\begin{figure}[h11]
\centering
\includegraphics[trim = .5in 7in 2.5in 6in, clip,width=0.4\textwidth]{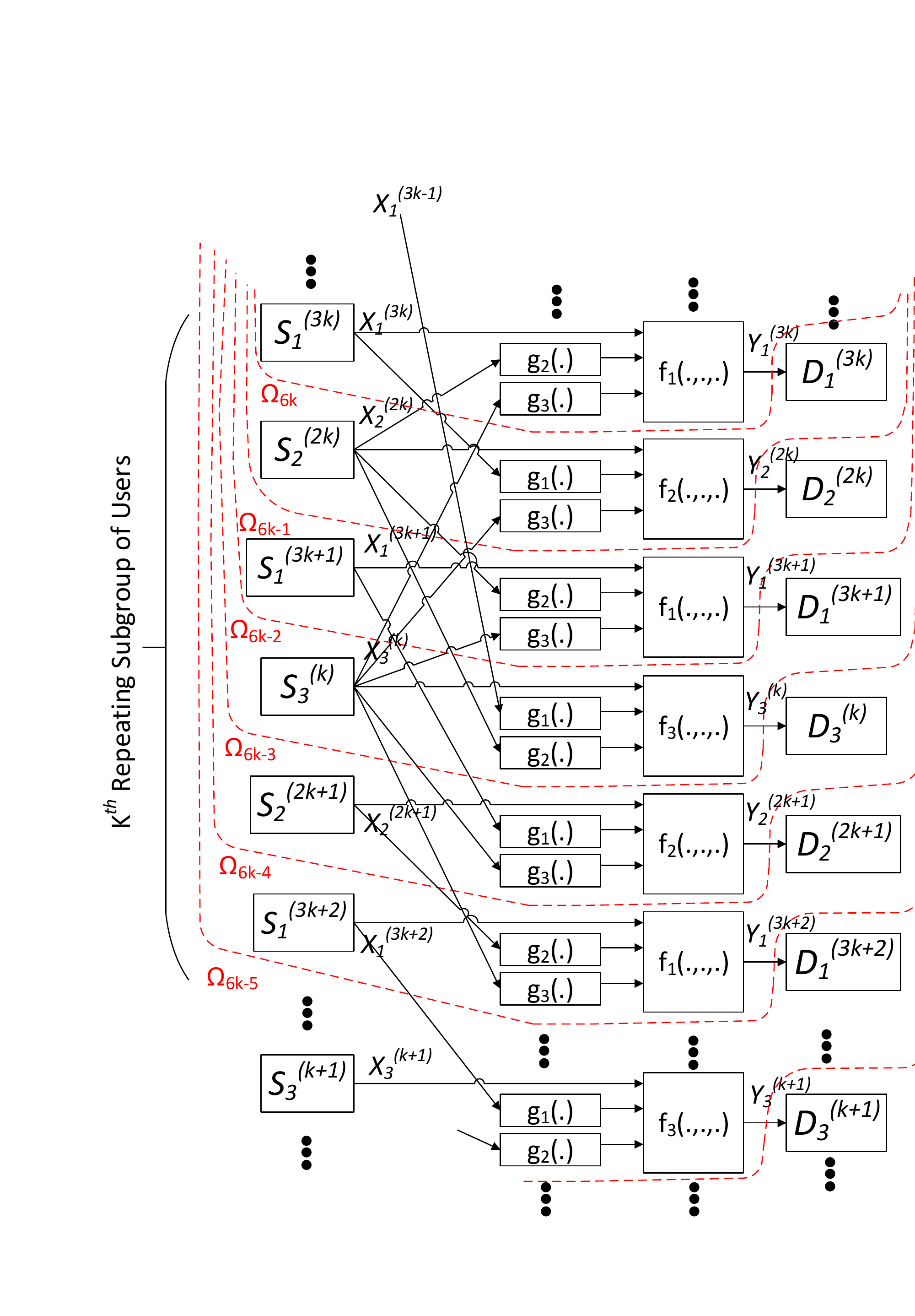}
\caption{Extended network and the cuts for deriving bound (\ref{ineq26}) for the three-user DIC}
\label{figb35}
\end{figure}

\noindent \textbf{Derivation of bound (\ref{ineq27}).}
To derive this bound, we design the extended network considered in Fig. \ref{figb36}. By applying GCS bound and picking cuts $\Omega_{1},..., \Omega_{6}$ depicted in Fig. \ref{figb36}, we have
\small
\begin{equation} \label{eq3434}
\begin{aligned}
nR_{\Sigma}&=n(3R_1+2R_2+R_3)
\\& \overset{(a)}{\leq} H(Y_{{\Omega}_1^c}^n)+\sum_{l=2}^{6}{H(Y_{{\Omega}_l^c\bigcap {\Omega}_{l-1}}^n|W_{\mathcal S\backslash{{\Omega}_l}},Y_{{\Omega}_{l-1}^c}^n)}+n\epsilon_n
\\& \overset{(b)}{\leq}\sum_{i=1}^{n}{3H(Y_{1}[i]|V_{1}[i]V_{2}[i]V_{3}[i])+H(Y_{2}[i]|V_{2}[i]V_{3}[i])}\\&+\sum_{i=1}^{n}{H(Y_{2}[i])+H(Y_{3}[i]|V_{3}[i])}+n\epsilon_n
\end{aligned}
\end{equation}
\normalsize
where $(a)$ follows from GCS \cite{gcs} for deterministic networks and step $(b)$ follows from the proof presented in Appendix B.
\begin{figure}[h10]
\centering
\includegraphics[trim = 1.5in 1.5in 2in 0in, clip,width=0.4\textwidth]{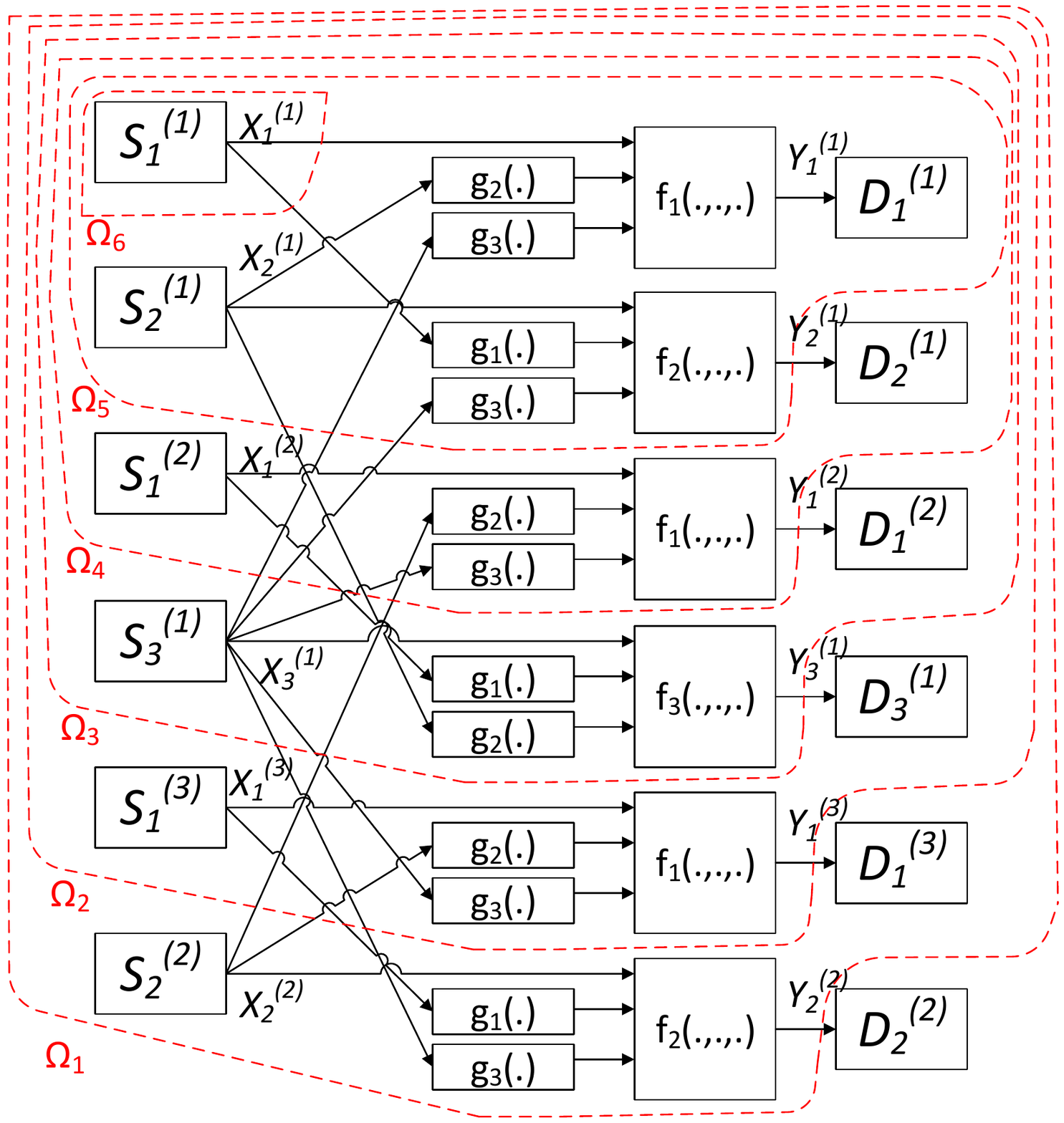}
\caption{Extended network and the cuts for deriving bound (\ref{ineq27}) for the three-user DIC}
\label{figb36}
\end{figure}

\noindent \textbf{Derivation of bound (\ref{ineq28}).}
To derive this bound, we design the extended network considered in Fig. \ref{figb37}. By applying GCS bound and picking cuts $\Omega_{1},..., \Omega_{7}$ depicted in Fig. \ref{figb37}, we have
\small
\begin{equation} \label{eq3434}
\begin{aligned}
nR_{\Sigma}&=n(4R_1+2R_2+R_3)
\\& \overset{(a)}{\leq} H(Y_{{\Omega}_1^c}^n)+\sum_{l=2}^{7}{H(Y_{{\Omega}_l^c\bigcap {\Omega}_{l-1}}^n|W_{\mathcal S\backslash{{\Omega}_l}},Y_{{\Omega}_{l-1}^c}^n)}+n\epsilon_n
\\& \overset{(b)}{\leq}\sum_{i=1}^{n}{3H(Y_{1}[i]|V_{1}[i]V_{2}[i]V_{3}[i])+H(Y_{1}[i])}\\&+\sum_{i=1}^{n}{2H(Y_{2}[i]|V_{2}[i]V_{3}[i])+H(Y_{3}[i]|V_{3}[i])}+n\epsilon_n
\end{aligned}
\end{equation}
\normalsize
where $(a)$ follows from GCS \cite{gcs} for deterministic networks and step $(b)$ follows from the proof presented in Appendix B.
\begin{figure}[h10]
\centering
\includegraphics[trim = 1.5in 8in 2in 0in, clip,width=0.4\textwidth]{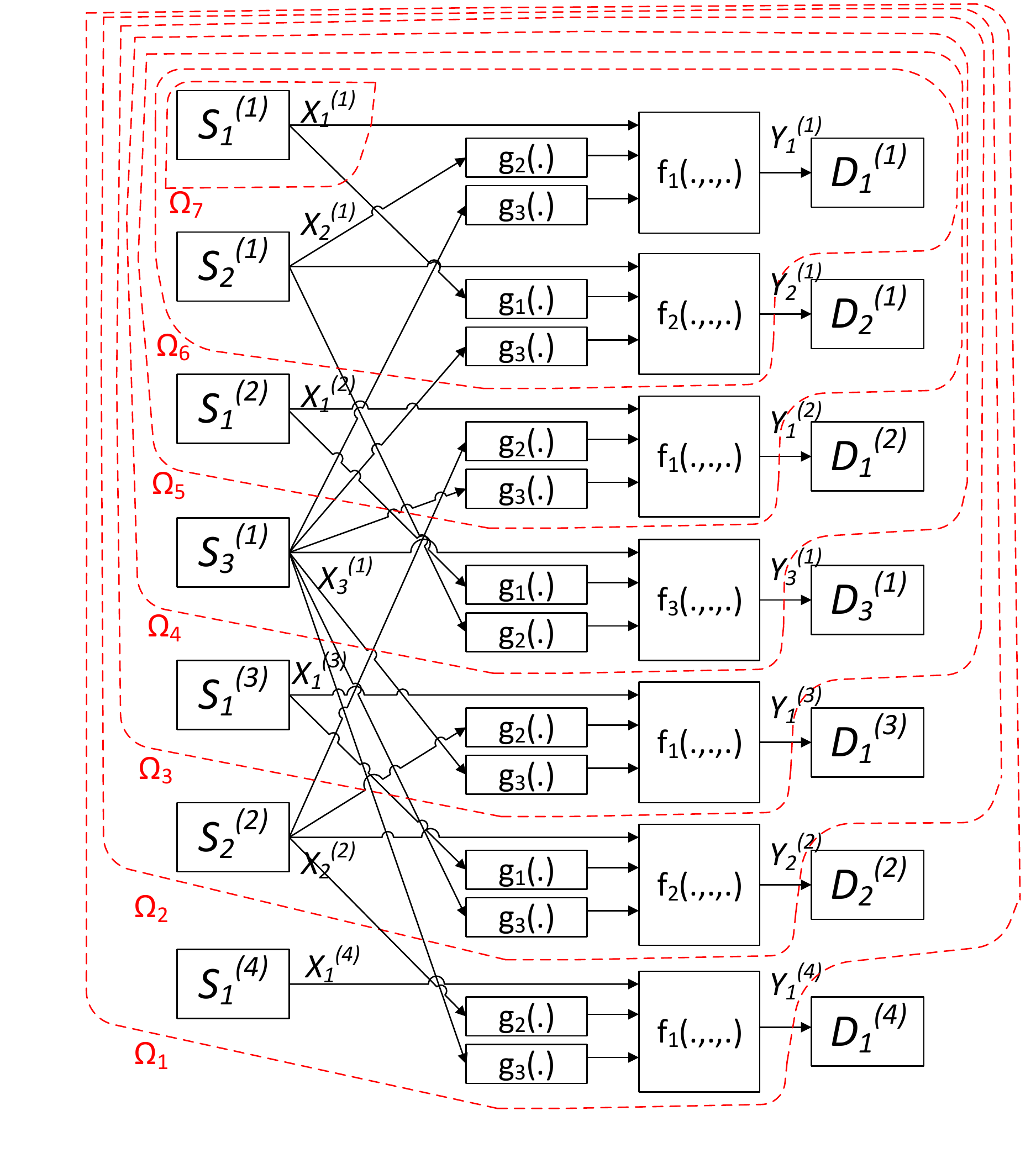}
\caption{Extended network and the cuts for deriving bound (\ref{ineq28}) for the three-user DIC}
\label{figb37}
\end{figure}

\section{Conclusion} \label{sec6}
We considered the two-user DIC, proposed the idea of extended networks, and demonstrated that by carefully designing extended networks and applying the GCS bound to them, we can derive a tight converse for the two-user DIC. Furthermore, we generalized our techniques to the three-user DIC, and demonstrated that the proposed approach also results in deriving a tight converse for the symmetric three-user DIC. An interesting future direction can be to investigate whether by applying GCS to extended networks one can establish the capacity region of general DIC. We will characterize the capacity region of symmetric $K$-user DIC in the longer version of this paper.

\appendix
\section{Title of Appendix A} \label{App:AppendixA}
\chapter{~~~~~~~~~~~~~~~~~~~~~~~~~Appendix A}
\\
\noindent \textbf{Proof of part (c) for bound (11).}
\\We first define $X_{i^{\mathcal S}}$ to be $\{X_{i^{(j)}}|j\in \mathcal S\}$ (Similarly for $W_{i^{\mathcal S}}$ and $Y_{i^{\mathcal S}}$).
\small
\begin{equation} \label{eq3434}
\begin{aligned}
&H(Y_{{\Omega}_1^c}^n)+\sum_{l=2}^{k+1}{H(Y_{{\Omega}_l^c\bigcap {\Omega}_{l-1}}^n|W_{\mathcal S\backslash{{\Omega}_l}},Y_{{\Omega}_{l-1}^c}^n)}+n\epsilon_n
\\&=H(Y_{1^{(k)}}^n)+\sum_{j=1}^{k-1}{H(Y_{1^{(j)}}^n|W_{1^{[j+1:k]}},Y_{1^{[j+1:k]}}^n)}
\\&~~~+H(Y_{2^{(1)}}^n|W_{1^{[1:k]}},Y_{1^{[1:k]}}^n)+n\epsilon_n
\\&=H(Y_{1^{(k)}}^n)
\\&~~~+\sum_{j=1}^{k-1}{H(Y_{1^{(j)}}^n|W_{1^{[j+1:k]}},Y_{1^{[j+1:k]}}^n,X_{1^{[j+1:k]}}^n)}
\\&~~~+H(Y_{2^{(1)}}^n|W_{1^{[1:k]}},Y_{1^{[1:k]}}^n,X_{1^{[1:k]}}^n)+n\epsilon_n
%
\\&\leq \sum_{i=1}^{n}{H(Y_{1^{(k)}}[i]|Y_{1^{(k)}}^{i-1})}
\\&~~~+\sum_{j=1}^{k-1}{\sum_{i=1}^{n}{H(Y_{1^{(j)}}[i]|Y_{1^{(j)}}^{i-1},Y_{1^{(j+1)}}[i],X_{1^{(j+1)}}[i])}}
\\&~~~+\sum_{i=1}^{n}{H(Y_{2^{(1)}}[i]|Y_{2^{(1)}}^{i-1},Y_{1^{(1)}}[i],X_{1^{(1)}}[i])}+n\epsilon_n
\\&\overset{(a)}{\leq} \sum_{i=1}^{n}{H(Y_{1^{(k)}}[i])}+\sum_{j=1}^{k-1}{\sum_{i=1}^{n}{H(Y_{1^{(j)}}[i]|V_{2^{(1)}}[i])}}
\\&~~~+\sum_{i=1}^{n}{H(Y_{2^{(1)}}[i]|V_{2^{(1)}}[i],V_{1^{(1)}}[i])}+n\epsilon_n
\\& \overset{(b)}{\leq}\sum_{i=1}^{n}{H(Y_{1^{(1)}}[i])}+(k-1)\sum_{i=1}^{n}{H(Y_{1^{(1)}}[i]|V_{2^{(1)}}[i])}
\\&~~~+\sum_{i=1}^{n}{H(Y_{2^{(1)}}[i]|V_{2^{(1)}}[i],V_{1^{(1)}}[i])}+n\epsilon_n
\end{aligned}
\end{equation}
\normalsize
where step $(a)$ follows from $V_{2^{(1)}}[i]=h_1(X_{1^{(j)}}[i],Y_{1^{(j)}}[i])$ and $V_{1^{(1)}}[i]=g_1(X_{1^{(1)}}[i])$. Step $(b)$ follows, since the coding schemes of $S_{1^{(j)}}$s are the same, $Y_{1^{(j)}}[i]$s have the same probability distribution as $Y_{1^{(1)}}[i]$.

\section{Title of Appendix A} \label{App:AppendixA}
\noindent \textbf{Proof of part (c) for bound (13).}
\small
\begin{equation} \label{eq3434}
\begin{aligned}
&H(Y_{{\Omega}_1^c}^n)+\sum_{l=2}^{2k}{H(Y_{{\Omega}_l^c\bigcap {\Omega}_{l-1}}^n|W_{\mathcal S\backslash{{\Omega}_l}},Y_{{\Omega}_{l-1}^c}^n)}
\\&=H(Y_{2^{(k)}}^n)+H(Y_{1^{(k)}}^n|W_{2^{(k)}},Y_{2^{(k)}}^n)
\\&+\sum_{j=1}^{k-1}{H(Y_{1^{(j)}}^n|W_{1^{[j+1:k]}},W_{2^{[j:k]}},Y_{1^{[j+1:k]}}^n,Y_{2^{[j:k]}}^n)}
\\&+\sum_{j=1}^{k-1}{H(Y_{2^{(j)}}^n|W_{1^{[j+1:k]}},W_{2^{[j+1:k]}},Y_{1^{[j+1:k]}}^n,Y_{2^{[j+1:k]}}^n)}
\\&\leq H(Y_{2^{(k)}}^n)+H(Y_{1^{(k)}}^n|X_{2^{(k)}}^n,Y_{2^{(k)}}^n)
\\&+\sum_{j=1}^{k-1}{H(Y_{1^{(j)}}^n|X_{1^{[j+1:k]}}^n,X_{2^{[j:k]}}^n,Y_{1^{[j+1:k]}}^n,Y_{2^{[j:k]}}^n)}
\\&+\sum_{j=1}^{k-1}{H(Y_{2^{(j)}}^n|X_{1^{[j+1:k]}}^n,X_{2^{[j+1:k]}}^n,Y_{1^{[j+1:k]}}^n,Y_{2^{[j+1:k]}}^n)}
\\&\overset{(a)}{=} H(Y_{2^{(k)}}^n)+\sum_{j=1}^{k}{H(Y_{1^{(j)}}^n|X_{2^{(j)}}^n,Y_{2^{(j)}}^n)}
\\&\sum_{j=1}^{k-1}{H(Y_{2^{(j)}}^n|X_{1^{(j+1)}}^n,Y_{1^{(j+1)}}^n)}
\\&{=} \sum_{i=1}^{n}{H(Y_{2^{(k)}}[i]|Y_{2^{(k)}}^{i-1})}
\\&~~~+\sum_{i=1}^{n}{\sum_{j=1}^{k}{H(Y_{1^{(j)}}[i]|Y_{1^{(j)}}^{(i-1)},X_{2^{(j)}}^n,Y_{2^{(j)}}^n)}}
\\&~~~+\sum_{i=1}^{n}{\sum_{j=1}^{k}{H(Y_{2^{(j)}}[i]|Y_{2^{(j)}}^{(i-1)},X_{1^{(j+1)}}^n,Y_{1^{(j+1)}}^n)}}
\\&\overset{(b)}{\leq} \sum_{i=1}^{n}{H(Y_{2^{(k)}}[i])}+\sum_{j=1}^{k}{\sum_{i=1}^{n}{{H(Y_{1^{(j)}}[i]|V_{1^{(j)}}[i])}}}
\\&~~~+\sum_{j=1}^{k-1}{\sum_{i=1}^{n}{{H(Y_{2^{(j)}}[i]|V_{2^{(j)}}[i])}}}
\\&\overset{(c)}{=} \sum_{i=1}^{n}{H(Y_{2^{(1)}}[i])}+k\sum_{i=1}^{n}{H(Y_{1^{(1)}}[i]|V_{1^{(1)}}[i])}
\\&~~~+(k-1)\sum_{i=1}^{n}{H(Y_{2^{(1)}}[i]|V_{2^{(1)}}[i])}
\end{aligned}
\end{equation}
\normalsize
where step $(a)$ follows from the structure of interference links. That is to say, each copied version of users only imposes interference on the user placed below it. Step $(b)$ follows from $V_{2^{(j)}}[i]=h_1(X_{1^{(j+1)}}[i],Y_{1^{(j+1)}}[i])$, $V_{1^{(j)}}[i]=h_2(X_{2^{(j)}}[i],Y_{2^{(j)}}[i])$. Finally, step $(c)$ follows, since the coding schemes of $S_{1^{(j)}}$s and $S_{2^{(j)}}$s are the same respectively, $Y_{1^{(j)}}[i]$s and $Y_{2^{(j)}}[i]$s have the same probability distribution respectively. Furthermore, since $V_{i^{(j)}}=g_i(X_{i^{(j)}})$, $V_{1^{(j)}}[i]$s and $V_{2^{(j)}}[i]$s have the same probability distribution respectively.

\section{Title of Appendix A} \label{App:AppendixA}
\noindent \textbf{Proof of part (c) for bound (15).}
\small
\begin{equation} \label{eq3434}
\begin{aligned}
&H(Y_{{\Omega}_1^c}^n)+H(Y_{{\Omega}_2^c\bigcap {\Omega}_1}^n|W_{\mathcal S\backslash{{\Omega}_2}},Y_{{\Omega}_1^c}^n)
\\&~~~+H(Y_{{\Omega}_3^c\bigcap {\Omega}_2}^n|W_{\mathcal S\backslash{{\Omega}_3}},Y_{{\Omega}_2^c}^n)
\\&=H(Y_{1^{(2)}}^n)+H(Y_{2^{(1)}}^n|W_{1^{(2)}},Y_{1^{(2)}}^n)
\\&~~~+H(Y_{1^{(1)}}^n|W_{2^{(1)}},W_{1^{(2)}},Y_{2^{(1)}}^n,Y_{1^{(2)}}^n)
\\&=H(Y_{1^{(2)}}^n)+H(Y_{2^{(1)}}^n|W_{1^{(2)}},Y_{1^{(2)}}^n,X_{1^{(2)}}^n)
\\&~~~+H(Y_{1^{(1)}}^n|W_{2^{(1)}},W_{1^{(2)}},Y_{2^{(1)}}^n,Y_{1^{(2)}}^n,X_{2^{(1)}}^n,X_{1^{(2)}}^n)
\\&\leq H(Y_{1^{(2)}}^n)+H(Y_{2^{(1)}}^n|Y_{1^{(2)}}^n,X_{1^{(2)}}^n)+H(Y_{1^{(1)}}^n|Y_{2^{(1)}}^n,X_{2^{(1)}}^n)
\\&=\sum_{i=1}^{n}{H(Y_{1^{(2)}}[i]|Y_{1^{(2)}}^{i-1})}
\\&~~~+\sum_{i=1}^{n}{H(Y_{2^{(1)}}[i]|Y_{2^{(1)}}^{i-1},Y_{1^{(2)}}^n,X_{1^{(2)}}^n)}
\\&~~~+\sum_{i=1}^{n}{H(Y_{1^{(1)}}[i]|Y_{1^{(1)}}^{i-1},Y_{2^{(1)}}^n,X_{2^{(1)}}^n)}
\\& \overset{(a)}{\leq} \sum_{i=1}^{n}{H(Y_{1^{(2)}}[i])}+\sum_{i=1}^{n}{H(Y_{2^{(1)}}[i]|V_{2^{(1)}}[i])}
\\&~~~+\sum_{i=1}^{n}{H(Y_{1^{(1)}}[i]|V_{2^{(1)}}[i],V_{1^{(1)}}[i])}
\\& \overset{(b)}{=}\sum_{i=1}^{n}{H(Y_{1^{(1)}}[i])}+\sum_{i=1}^{n}{H(Y_{2^{(1)}}[i]|V_{2^{(1)}}[i])}
\\&~~~+\sum_{i=1}^{n}{H(Y_{1^{(1)}}[i]|V_{2^{(1)}}[i],V_{1^{(1)}}[i])}
\end{aligned}
\end{equation}
\normalsize
where step $(a)$ follows from $V_{2^{(1)}}[i]=h_1(X_{1^{(2)}}[i],Y_{1^{(2)}}[i])$, $V_{1^{(1)}}[i]=h_2(X_{2^{(1)}}[i],Y_{2^{(1)}}[i])$, and $V_{2^{(1)}}[i]=g_2(X_{2^{(1)}}[i])$. Step $(b)$ follows, since the coding schemes of $S_{1^{(2)}}$ and $S_{1^{(1)}}$ are the same, $Y_{1^{(2)}}[i]$ has the same probability distribution as $Y_{1^{(1)}}[i]$.

\section{Title of Appendix A} \label{App:AppendixA}
\label{apb10}
\chapter{~~~~~~~~~~~~~~~~~~~~~~~~~Appendix B}
\\
\noindent \textbf{Proof of part (b) for bound (\ref{ineq1}).}
\small
\begin{equation} \label{eq3434}
\begin{aligned}
&H(Y_{{\Omega}_1^c}^n)+\sum_{l=2}^{k+2}{H(Y_{{\Omega}_l^c\bigcap {\Omega}_{l-1}}^n|W_{\mathcal S\backslash{{\Omega}_l}},Y_{{\Omega}_{l-1}^c}^n)}
\\& \leq \sum_{j=2}^{k}{H(Y_{1^{(j)}}^n|Y_{1^{(j+1)}}^nX_{1^{(j+1)}}^n)}+C
\\& \leq \sum_{j=2}^{k}{\sum_{i=1}^{n}{H(Y_{1^{(j)}}[i]|Y_{1^{(j+1)}}[i]X_{1^{(j+1)}}[i])}}+C
\\& \overset{(a)}{\leq} \sum_{j=2}^{k}{\sum_{i=1}^{n}{H(Y_{1^{(j)}}[i]|V_{2^{(1)}}[i]V_{3^{(1)}}[i])}}+C
\\& \overset{(b)}{\leq} k{\sum_{i=1}^{n}{H(Y_{1^{(1)}}[i]|V_{2^{(1)}}[i]V_{3^{(1)}}[i])}}+C
\end{aligned}
\end{equation}
\normalsize
where $C\triangleq H(Y_{1^{(1)}}^n)+H(Y_{2^{(1)}}^n)+H(Y_{3^{(1)}}^n)$ and $(a)$ follows from $(V_{2^{(1)}}[i],V_{3^{(1)}}[i])=h_1(X_{1^{(j)}}[i],Y_{1^{(j)}}[i])$. Step $(b)$ follows, since the coding schemes of $S_{1^{(j)}}$ are the same , $Y_{1^{(j)}}[i]$s  have the same probability distribution.

\section{Title of Appendix A} \label{App:AppendixA}
\label{apb11}
\noindent \textbf{Proof of part (b) for bound (\ref{ineq2}).}
\small
\begin{equation} \label{eq3434}
\begin{aligned}
&H(Y_{{\Omega}_1^c}^n)+\sum_{l=2}^{2k+2}{H(Y_{{\Omega}_l^c\bigcap {\Omega}_{l-1}}^n|W_{\mathcal S\backslash{{\Omega}_l}},Y_{{\Omega}_{l-1}^c}^n)}
\\& \leq \sum_{j=1}^{k}{H(Y_{2^{(j)}}^n|Y_{1^{(j+1)}}^nX_{1^{(j+1)}}^n)}
\\&+ \sum_{j=1}^{k}{H(Y_{1^{(j)}}^n|Y_{2^{(j)}}^nX_{2^{(j)}}^n)}+C
\\& \leq \sum_{j=1}^{k}{\sum_{i=1}^{n}{H(Y_{2^{(j)}}[i]|Y_{1^{(j+1)}}[i]X_{1^{(j+1)}}[i])}}
\\&+ \sum_{j=1}^{k}{\sum_{i=1}^{n}{H(Y_{1^{(j)}}[i]|Y_{2^{(j)}}[i]X_{2^{(j)}}[i])}}+C
\\& \overset{(a)}{\leq} \sum_{j=1}^{k}{\sum_{i=1}^{n}{H(Y_{2^{(j)}}[i]|V_{3^{(1)}}[i])+
H(Y_{1^{(j)}}[i]|V_{1^{(j)}}[i]V_{2^{(j)}}[i]V_{3^{(1)}}[i])}}\\&+C
\\& \overset{(b)}{\leq} \sum_{i=1}^{n}{kH(Y_{2^{(1)}}[i]|V_{3^{(1)}}[i])
+kH(Y_{1^{(1)}}[i]|V_{1^{(1)}}[i]V_{2^{(1)}}[i]V_{3^{(1)}}[i])}\\&+C
\end{aligned}
\end{equation}
\normalsize
where $(a)$ follows from $(V_{2^{(j)}}[i],V_{3^{(1)}}[i])=h_1(X_{1^{(j)}}[i],Y_{1^{(j)}}[i])$, $(V_{1^{(j)}}[i],V_{3^{(1)}}[i])=h_2(X_{2^{(j)}}[i],Y_{2^{(j)}}[i])$, and $V_{2^{(j)}}[i]=g_2(X_{2^{(j)}}[i])$. Step $(b)$ follows, since the coding schemes of $S_{1^{(j)}}$s and $S_{2^{(j)}}$s are the same respectively, $Y_{1^{(j)}}[i]$s and $Y_{2^{(j)}}[i]$s have the same probability distribution respectively.
\\
\section{Title of Appendix A} \label{App:AppendixA}
\label{apb12}
\noindent \textbf{Proof of part (b) for bound (\ref{ineq3}).}
\small
\begin{equation} \label{eq3434}
\begin{aligned}
&H(Y_{{\Omega}_1^c}^n)+\sum_{l=2}^{2k+2}{H(Y_{{\Omega}_l^c\bigcap {\Omega}_{l-1}}^n|W_{\mathcal S\backslash{{\Omega}_l}},Y_{{\Omega}_{l-1}^c}^n)}
\\& \leq \sum_{j=1}^{k}{H(Y_{2^{(j)}}^n|Y_{1^{(j+1)}}^nX_{1^{(j+1)}}^n)}
\\&+ \sum_{j=1}^{k}{H(Y_{1^{(j)}}^n|Y_{2^{(j)}}^nX_{2^{(j)}}^n)}+C
\\& \leq \sum_{j=1}^{k}{\sum_{i=1}^{n}{H(Y_{2^{(j)}}[i]|Y_{1^{(j+1)}}[i]X_{1^{(j+1)}}[i])}}
\\&+\sum_{j=1}^{k}{\sum_{i=1}^{n}{H(Y_{1^{(j)}}[i]|Y_{2^{(j)}}[i]X_{2^{(j)}}[i])}}+C
\\& \overset{(a)}{\leq} \sum_{j=1}^{k}{\sum_{i=1}^{n}{H(Y_{2^{(j)}}[i]|V_{2^{(j)}}[i]V_{3^{(1)}}[i])+
H(Y_{1^{(j)}}[i]|V_{1^{(j)}}[i]V_{3^{(1)}}[i])}}\\&+C
\\& \overset{(b)}{\leq} \sum_{i=1}^{n}{kH(Y_{2^{(1)}}[i]|Y_{2^{(1)}}[i]V_{3^{(1)}}[i])
+kH(Y_{1^{(1)}}[i]|V_{1^{(1)}}[i]V_{3^{(1)}}[i])}\\&+C
\end{aligned}
\end{equation}
\normalsize
where step $(a)$ follows from $(V_{2^{(j)}}[i],V_{3^{(1)}}[i])=h_1(X_{1^{(j+1)}}[i],Y_{1^{(j+1)}}[i])$ and $(V_{1^{(j)}}[i],V_{3^{(1)}}[i])=h_2(X_{2^{(j)}}[i],Y_{2^{(j)}}[i])$. Step $(b)$ follows, since the coding schemes of $S_{1^{(j)}}$s and $S_{2^{(j)}}$s are the same respectively, $Y_{1^{(j)}}[i]$s and $Y_{2^{(j)}}[i]$s have the same probability distribution respectively.
\\
\section{Title of Appendix A} \label{App:AppendixA}
\label{apb13}
\noindent \textbf{Proof of part (b) for bound (\ref{ineq4}).}
\small
\begin{equation} \label{eq3434}
\begin{aligned}
&H(Y_{{\Omega}_1^c}^n)+\sum_{l=2}^{3k+2}{H(Y_{{\Omega}_l^c\bigcap {\Omega}_{l-1}}^n|W_{\mathcal S\backslash{{\Omega}_l}},Y_{{\Omega}_{l-1}^c}^n)}
\\& \leq \sum_{j}{H(Y_{1^{(2j+1)}}^n|Y_{1^{(2j+2)}}^nX_{1^{(2j+2)}}^n)}
\\&+ \sum_{j}{H(Y_{2^{(j)}}^n|Y_{1^{(2j+1)}}^nX_{1^{(2j+1)}}^n)}
\\&+ \sum_{j}{H(Y_{1^{(2j)}}^n|Y_{2^{(j)}}^nX_{2^{(j)}}^n)}+C
\\& \leq \sum_{j}{\sum_{i=1}^{n}{H(Y_{1^{(2j+1)}}[i]|Y_{1^{(2j+2)}}[i]X_{1^{(2j+2)}}[i])}}
\\&+\sum_{j}{\sum_{i=1}^{n}{H(Y_{2^{(j)}}[i]|Y_{1^{(2j+1)}}[i]X_{1^{(2j+1)}}[i])}}
\\&+ \sum_{j=1}{\sum_{i=1}^{n}{H(Y_{1^{(2j)}}[i]|Y_{2^{(j)}}[i]X_{2^{(j)}}[i])}}+C
\\& \overset{(a)}{\leq} \sum_{j}{\sum_{i=1}^{n}{H(Y_{1^{(2j+1)}}[i]|V_{3^{(1)}}[i])}}
\\&+ \sum_{j}{\sum_{i=1}^{n}{H(Y_{2^{(j)}}[i]|V_{2^{(j)}}[i]V_{3^{(1)}}[i])}}
\\&+\sum_{j}{\sum_{i=1}^{n}{H(Y_{1^{(2j)}}[i]|V_{1^{(2j)}}[i]V_{2^{(j)}}[i]V_{3^{(1)}}[i])}}+C
\\& \overset{(b)}{\leq} \sum_{i=1}^{n}{kH(Y_{1^{(1)}}[i]|V_{3^{(1)}}[i])}
\\&+ \sum_{i=1}^{n}{kH(Y_{2^{(1)}}[i]|V_{2^{(1)}}[i]V_{3^{(1)}}[i])}
\\&+\sum_{i=1}^{n}{kH(Y_{1^{(1)}}[i]|V_{1^{(1)}}[i]V_{2^{(1)}}[i]V_{3^{(1)}}[i])}+C
\end{aligned}
\end{equation}
\normalsize
where step $(a)$ follows from $(V_{2^{(j+1)}}[i],V_{3^{(1)}}[i])=h_1(X_{1^{(2j+2)}}[i],Y_{1^{(2j+2)}}[i])$, $(V_{2^{(j)}}[i],V_{3^{(1)}}[i])=h_1(X_{1^{(2j+1)}}[i],Y_{1^{(2j+1)}}[i])$, $(V_{1^{(2j)}}[i],V_{3^{(1)}}[i])=h_2(X_{2^{(j)}}[i],Y_{2^{(j)}}[i])$, and
$V_{2^{(j)}}[i]=g_2(X_{2^{(j)}}[i])$. Step $(b)$ follows, since the coding schemes of $S_{1^{(j)}}$s and $S_{2^{(j)}}$s are the same respectively, $Y_{1^{(j)}}[i]$s and $Y_{2^{(j)}}[i]$s have the same probability distribution respectively.

\section{Title of Appendix A} \label{App:AppendixA}
\label{apb14}
\noindent \textbf{Proof of part (b) for bound (\ref{ineq5}).}
\small
\begin{equation} \label{eq3434}
\begin{aligned}
&H(Y_{{\Omega}_1^c}^n)+\sum_{l=2}^{3k+2}{H(Y_{{\Omega}_l^c\bigcap {\Omega}_{l-1}}^n|W_{\mathcal S\backslash{{\Omega}_l}},Y_{{\Omega}_{l-1}^c}^n)}
\\& \leq \sum_{j}{H(Y_{3^{(j)}}^n|Y_{1^{(j+1)}}^nX_{1^{(j+1)}}^nX_{2^{(j+1)}}^n)}
\\&+ \sum_{j}{H(Y_{2^{(j)}}^n|Y_{1^{(j+1)}}^nX_{1^{(j+1)}}^nX_{3^{(j)}}^n)}
\\&+ \sum_{j}{H(Y_{1^{(j)}}^n|Y_{2^{(j)}}^nX_{2^{(j)}}^n)}
\\& \leq \sum_{j=1}{\sum_{i=1}^{n}{H(Y_{3^{(j)}}[i]|Y_{1^{(j+1)}}[i]X_{1^{(j+1)}}[i]X_{2^{(j+1)}}[i])}}\nonumber
\end{aligned}
\end{equation}
\begin{equation} \label{eq343413}
\begin{aligned}
\\&+ \sum_{j=1}{\sum_{i=1}^{n}{H(Y_{2^{(j)}}[i]|Y_{1^{(j+1)}}[i]X_{1^{(j+1)}}[i]X_{3^{(j)}}[i])}}
\\&+ \sum_{j=1}{\sum_{i=1}^{n}{H(Y_{1^{(j)}}[i]|Y_{2^{(j)}}[i]X_{2^{(j)}}[i])}}
\\& \overset{(a)}{\leq} \sum_{j}{\sum_{i=1}^{n}{H(Y_{3^{(j)}}[i]|V_{1^{(j+1)}}[i]V_{2^{(j+1)}}[i]V_{3^{(j)}}[i])}}
\\&+ \sum_{j}{\sum_{i=1}^{n}{H(Y_{2^{(j)}}[i]|V_{2^{(j)}}[i]V_{3^{(j)}}[i])+
H(Y_{1^{(j)}}[i]|V_{1^{(j)}}[i])}}+C
\\& \overset{(b)}{\leq} \sum_{i=1}^{n}{kH(Y_{3^{(1)}}[i]|V_{1^{(1)}}[i],V_{2^{(1)}}[i],V_{3^{(1)}}[i])}
\\&+ \sum_{i=1}^{n}{kH(Y_{2^{(1)}}[i]|V_{2^{(1)}}[i],V_{3^{(1)}}[i])
+kH(Y_{1^{(1)}}[i]|V_{1^{(1)}}[i])}+C
\end{aligned}
\end{equation}
\normalsize
where step $(a)$ follows from $(V_{2^{(j)}}[i],V_{3^{(j)}}[i])=h_1(X_{1^{(j+1)}}[i],Y_{1^{(j+1)}}[i])$, $(V_{1^{(j)}}[i],V_{3^{(j)}}[i])=h_2(X_{2^{(j)}}[i],Y_{2^{(j)}}[i])$, $V_{3^{(j)}}[i]=g_3(X_{3^{(j)}}[i])$, $V_{1^{(j+1)}}[i]=g_1(X_{1^{(j+1)}}[i])$, and $V_{2^{(j+1)}}[i]=g_2(X_{2^{(j+1)}}[i])$. Step $(b)$ follows, since the coding schemes of $S_{1^{(j)}}$s, $S_{2^{(j)}}$s, and $S_{3^{(j)}}$s are the same respectively, $Y_{1^{(j)}}[i]$s, $Y_{2^{(j)}}[i]$s, and $Y_{3^{(j)}}[i]$s have the same probability distribution respectively.

\section{Title of Appendix A} \label{App:AppendixA}
\label{apb15}
\noindent \textbf{Proof of part (b) for bound (\ref{ineq6}).}
\small
\begin{equation} \label{eq3434}
\begin{aligned}
&H(Y_{{\Omega}_1^c}^n)+\sum_{l=2}^{3k+2}{H(Y_{{\Omega}_l^c\bigcap {\Omega}_{l-1}}^n|W_{\mathcal S\backslash{{\Omega}_l}},Y_{{\Omega}_{l-1}^c}^n)}
\\& \leq \sum_{j=1}{H(Y_{2^{(j)}}^n|Y_{1^{(j+1)}}^n,X_{1^{(j+1)}}^n)+H(Y_{3^{(j)}}^n|Y_{2^{(j)}}^n,X_{2^{(j)}}^n)}
\\&+ \sum_{j=1}{H(Y_{1^{(j)}}^n|Y_{3^{(j)}}^n,X_{3^{(j)}}^n)}+C
\\& \leq \sum_{j}{\sum_{i=1}^{n}{H(Y_{2^{(j)}}[i]|Y_{1^{(j+1)}}[i],X_{1^{(j+1)}}[i])}}
\\&+ \sum_{j}{\sum_{i=1}^{n}{H(Y_{3^{(j)}}[i]|Y_{2^{(j)}}[i],X_{2^{(j)}}[i])}}
\\&+ \sum_{j}{\sum_{i=1}^{n}{H(Y_{1^{(j)}}[i]|Y_{3^{(j)}}[i],X_{3^{(j)}}[i])}}+C
\\& \overset{(a)}{\leq} \sum_{j}{\sum_{i=1}^{n}{H(Y_{2^{(j)}}[i]|V_{2^{(j)}}[i],V_{1^{(j+1)}}[i])}}
\\&+ \sum_{j}{\sum_{i=1}^{n}{H(Y_{3^{(j)}}[i]|V_{3^{(j)}}[i],V_{2^{(j)}}[i])
+H(Y_{1^{(j)}}[i]|V_{1^{(j)}}[i],V_{3^{(j)}}[i])}}\\&+C
\\& \overset{(b)}{\leq} \sum_{i=1}^{n}{kH(Y_{3^{(1)}}[i]|V_{2^{(1)}}[i],V_{3^{(1)}}[i])+kH(Y_{2^{(1)}}[i]|V_{2^{(1)}}[i],V_{1^{(1)}}[i])}
\\&+ \sum_{i=1}^{n}{kH(Y_{1^{(1)}}[i]|V_{1^{(1)}}[i],V_{3^{(1)}}[i])}+C
\end{aligned}
\end{equation}
\normalsize
where step $(a)$ follows from $(V_{2^{(j)}}[i],V_{3^{(j+1)}}[i])=h_1(X_{1^{(j+1)}}[i],Y_{1^{(j+1)}}[i])$, $(V_{1^{(j+1)}}[i],V_{3^{(j)}}[i])=h_2(X_{2^{(j)}}[i],Y_{2^{(j)}}[i])$, $(V_{1^{(j)}}[i],V_{2^{(j)}}[i])=h_3(X_{3^{(j)}}[i],Y_{3^{(j)}}[i])$, $V_{3^{(j)}}[i]=g_3(X_{3^{(j)}}[i])$, $V_{1^{(j+1)}}[i]=g_1(X_{1^{(j+1)}}[i])$, and $V_{2^{(j)}}[i]=g_2(X_{2^{(j)}}[i])$. Step $(b)$ follows, since the coding schemes of $S_{1^{(j)}}$s, $S_{2^{(j)}}$s, and $S_{3^{(j)}}$s are the same respectively, $Y_{1^{(j)}}[i]$s, $Y_{2^{(j)}}[i]$s, and $Y_{3^{(j)}}[i]$s have the same probability distribution respectively.
\\
\section{Title of Appendix A} \label{App:AppendixA}
\label{apb16}
\noindent \textbf{Proof of part (b) for bound (\ref{ineq7}).}
\small
\begin{equation} \label{eq3434}
\begin{aligned}
&H(Y_{{\Omega}_1^c}^n)+\sum_{l=2}^{3k+2}{H(Y_{{\Omega}_l^c\bigcap {\Omega}_{l-1}}^n|W_{\mathcal S\backslash{{\Omega}_l}},Y_{{\Omega}_{l-1}^c}^n)}
\\& \leq \sum_{j}{H(Y_{2^{(j)}}^n|Y_{1^{(j+1)}}^nX_{1^{(j+1)}}^n)+H(Y_{3^{(j)}}^n|Y_{2^{(j)}}^nX_{2^{(j)}}^n)}
\\&+ \sum_{j}{H(Y_{1^{(j)}}^n|Y_{3^{(j)}}^nX_{3^{(j)}}^nX_{2^{(j)}}^n)}
\\& \leq \sum_{j}{\sum_{i=1}^{n}{H(Y_{2^{(j)}}[i]|Y_{1^{(j+1)}}[i]X_{1^{(j+1)}}[i])}}
\\&+\sum_{j}{\sum_{i=1}^{n}{H(Y_{3^{(j)}}[i]|Y_{2^{(j)}}[i]X_{2^{(j)}}[i])}}
\\&+ \sum_{j=1}{\sum_{i=1}^{n}{H(Y_{1^{(j)}}[i]|Y_{3^{(j)}}[i]X_{3^{(j)}}[i]X_{2^{(j)}}[i])}}
\\& \overset{(a)}{\leq} \sum_{j}{\sum_{i=1}^{n}{H(Y_{2^{(j)}}[i]|V_{1^{(j+1)}}[i])+H(Y_{3^{(j)}}[i]|V_{2^{(j)}}[i]V_{3^{(j)}}[i])}}
\\&+ \sum_{j}{\sum_{i=1}^{n}{H(Y_{1^{(j)}}[i]|V_{1^{(j)}}[i]V_{2^{(j)}}[i]V_{3^{(j)}}[i])}}+C
\\& \overset{(b)}{\leq} \sum_{i=1}^{n}{kH(Y_{2^{(1)}}[i]|V_{1^{(1)}}[i])+kH(Y_{3^{(1)}}[i]|V_{2^{(1)}}[i]V_{3^{(1)}}[i])}
\\&+ \sum_{i=1}^{n}{kH(Y_{1^{(1)}}[i]|V_{1^{(1)}}[i]V_{2^{(1)}}[i]V_{3^{(1)}}[i])}+C
\end{aligned}
\end{equation}
\normalsize
where step $(a)$ follows from $(V_{1^{(j+1)}}[i],V_{3^{(j)}}[i])=h_2(X_{2^{(j)}}[i],Y_{2^{(j)}}[i])$, $(V_{1^{(j)}}[i],V_{2^{(j)}}[i])=h_3(X_{3^{(j)}}[i],Y_{3^{(j)}}[i])$, $V_{3^{(j)}}[i]=g_3(X_{3^{(j)}}[i])$, $V_{1^{(j+1)}}[i]=g_1(X_{1^{(j+1)}}[i])$, and $V_{2^{(j)}}[i]=g_2(X_{2^{(j)}}[i])$. Step $(b)$ follows, since the coding schemes of $S_{1^{(j)}}$s, $S_{2^{(j)}}$s, and $S_{3^{(j)}}$s are the same respectively, $Y_{1^{(j)}}[i]$s, $Y_{2^{(j)}}[i]$s, and $Y_{3^{(j)}}[i]$s have the same probability distribution respectively.

\section{Title of Appendix A} \label{App:AppendixA}
\label{apb17}
\noindent \textbf{Proof of part (b) for bound (\ref{ineq8}).}
\small
\begin{equation} \label{eq3434}
\begin{aligned}
&H(Y_{{\Omega}_1^c}^n)+\sum_{l=2}^{3}{H(Y_{{\Omega}_l^c\bigcap {\Omega}_{l-1}}^n|W_{\mathcal S\backslash{{\Omega}_l}},Y_{{\Omega}_{l-1}^c}^n)}
\\& \leq \sum_{j}{H(Y_{3^{(1)}}^n)+H(Y_{2^{(1)}}^n|Y_{3^{(1)}}^nX_{3^{(1)}}^n)}
\\&+ \sum_{j}{H(Y_{1^{(1)}}^n|Y_{2^{(1)}}^nX_{2^{(1)}}^nX_{3^{(1)}}^n)}
\\& \leq \sum_{j}{\sum_{i=1}^{n}{H(Y_{3^{(1)}}[i])}}
\\&+\sum_{j}{\sum_{i=1}^{n}{H(Y_{2^{(1)}}[i]|Y_{3^{(1)}}[i]X_{3^{(1)}}[i])}}
\\&+ \sum_{j=1}{\sum_{i=1}^{n}{H(Y_{1^{(1)}}[i]|Y_{2^{(1)}}[i]X_{2^{(1)}}[i]X_{3^{(1)}}[i])}}
\\& \overset{(a)}{\leq} \sum_{j}{\sum_{i=1}^{n}{H(Y_{3^{(1)}}[i])+H(Y_{2^{(1)}}[i]|V_{1^{(1)}}[i]V_{2^{(1)}}[i]V_{3^{(1)}}[i])}}
\\&+ \sum_{j}{\sum_{i=1}^{n}{H(Y_{1^{(1)}}[i]|V_{1^{(1)}}[i]V_{2^{(1)}}[i]V_{3^{(1)}}[i])}}
\end{aligned}
\end{equation}
\normalsize
where step $(a)$ follows from $(V_{1^{(1)}}[i],V_{3^{(1)}}[i])=h_2(X_{2^{(1)}}[i],Y_{2^{(1)}}[i])$, $(V_{1^{(1)}}[i],V_{2^{(1)}}[i])=h_3(X_{3^{(1)}}[i],Y_{3^{(1)}}[i])$, $V_{3^{(1)}}[i]=g_3(X_{3^{(1)}}[i])$, and $V_{2^{(1)}}[i]=g_2(X_{2^{(1)}}[i])$.

\section{Title of Appendix A} \label{App:AppendixA}
\label{apb17}
\noindent \textbf{Proof of part (b) for bound (\ref{ineq9}).}
\small
\begin{equation} \label{eq3434}
\begin{aligned}
&H(Y_{{\Omega}_1^c}^n)+H(Y_{{\Omega}_2^c\bigcap {\Omega}_1}^n|W_{\mathcal S\backslash{{\Omega}_2}},Y_{{\Omega}_1^c}^n)\\&+H(Y_{{\Omega}_3^c\bigcap {\Omega}_2}^n|W_{\mathcal S\backslash{{\Omega}_3}},Y_{{\Omega}_2^c}^n)+H(Y_{{\Omega}_4^c\bigcap {\Omega}_3}^n|W_{\mathcal S\backslash{{\Omega}_4}},Y_{{\Omega}_3^c}^n)
\\&=H(Y_{1^{(2)}}^n)+H(Y_{3^{(1)}}^n|W_{1^{(2)}},Y_{1^{(2)}}^n)\\&+H(Y_{2^{(1)}}^n|W_{3^{(1)}},W_{1^{(2)}},Y_{3^{(1)}}^n,Y_{1^{(2)}}^n)
\\&+H(Y_{1^{(1)}}^n|W_{2^{(1)}},W_{3^{(1)}},W_{1^{(2)}},Y_{2^{(1)}}^n,Y_{3^{(1)}}^n,Y_{1^{(2)}}^n)
\\&=H(Y_{1^{(2)}}^n)+H(Y_{3^{(1)}}^n|W_{1^{(2)}},Y_{1^{(2)}}^n,X_{1^{(2)}}^n)\\&+H(Y_{2^{(1)}}^n|W_{3^{(1)}},W_{1^{(2)}},Y_{3^{(1)}}^n,Y_{1^{(2)}}^n,X_{3^{(1)}}^n,X_{1^{(2)}}^n)
\\&+H(Y_{1^{(1)}}^n|W_{2^{(1)}},W_{3^{(1)}},W_{1^{(2)}},Y_{2^{(1)}}^n,Y_{3^{(1)}}^n,Y_{1^{(2)}}^n,X_{2^{(1)}}^n,X_{3^{(1)}}^n,X_{1^{(2)}}^n)
\\&\leq H(Y_{1^{(2)}}^n)+H(Y_{3^{(1)}}^n|Y_{1^{(2)}}^n,X_{1^{(2)}}^n)\\&+H(Y_{2^{(1)}}^n|Y_{3^{(1)}}^n,X_{3^{(1)}}^n,Y_{1^{(2)}}^n,X_{1^{(2)}}^n)
\\&+H(Y_{1^{(1)}}^n|Y_{2^{(1)}}^n,X_{2^{(1)}}^n,Y_{3^{(1)}}^n,X_{3^{(1)}}^n,Y_{1^{(2)}}^n,X_{1^{(2)}}^n)
\\&= \sum_{i}{H(Y_{1^{(2)}}[i]|Y_{1^{(2)}}^{i-1})}+\sum_{i}{H(Y_{3^{(1)}}[i]|Y_{3^{(1)}}^{i-1},Y_{1^{(2)}}^n,X_{1^{(2)}}^n)}\\&~~~+\sum_{i}{H(Y_{2^{(1)}}[i]|Y_{2^{(1)}}^{i-1},Y_{3^{(1)}}^n,X_{3^{(1)}}^n,Y_{1^{(2)}}^n,X_{1^{(2)}}^n)}
\\&~~~+\sum_{i}{H(Y_{1^{(1)}}[i]|Y_{1^{(1)}}^{i-1},Y_{2^{(1)}}^n,X_{2^{(1)}}^n,Y_{3^{(1)}}^n,X_{3^{(1)}}^n,Y_{1^{(2)}}^n,X_{1^{(2)}}^n)}
\\& \overset{(a)}{\leq}\sum_{i}{H(Y_{1^{(2)}}[i])}+\sum_{i}{H(Y_{3^{(1)}}[i]|V_{1^{(2)}}[i],V_{2^{(1)}}[i],V_{3^{(1)}}[i])}\\&+\sum_{i}{H(Y_{2^{(1)}}[i]|V_{2^{(1)}}[i],V_{3^{(1)}}[i])}
\\&+\sum_{i}{H(Y_{1^{(1)}}[i]|V_{1^{(1)}}[i],V_{2^{(1)}}[i],V_{3^{(1)}}[i])}
\\&\overset{(b)}{\leq}\sum_{i}{H(Y_{1^{(1)}}[i])}+\sum_{i}{H(Y_{3^{(1)}}[i]|V_{1^{(1)}}[i],V_{2^{(1)}}[i],V_{3^{(1)}}[i])}\\&+\sum_{i}{H(Y_{2^{(1)}}[i]|V_{2^{(1)}}[i],V_{3^{(1)}}[i])}
\\&+\sum_{i}{H(Y_{1^{(1)}}[i]|V_{1^{(1)}}[i],V_{2^{(1)}}[i],V_{3^{(1)}}[i])}
\end{aligned}
\end{equation}
\normalsize
where $(a)$ follows from $V_{1^{(2)}}[i]=g_1(X_{1^{(2)}}[i])$, $(V_{2^{(1)}}[i],V_{3^{(1)}}[i])=h_1(X_{1^{(2)}}[i],Y_{1^{(2)}}[i])$, $V_{3^{(1)}}[i]=g_3(X_{3^{(1)}}[i])$, $V_{2^{(1)}}[i]=g_2(X_{2^{(1)}}[i])$, and $(V_{1^{(1)}}[i],V_{3^{(1)}}[i])=h_2(X_{2^{(1)}}[i],Y_{2^{(1)}}[i])$. Step $(b)$ follows, since the coding schemes of $S_{1^{(2)}}$ is the same as $S_{1^{(1)}}$, $Y_{1^{(2)}}[i]$ and $Y_{1^{(1)}}[i]$ have the same probability distribution. It should be noted that $V_{1^{(2)}}[i]$ and $V_{1^{(1)}}[i]$ have the same probability distribution as well.

\section{Title of Appendix A} \label{App:AppendixA}
\label{apb19}
\noindent \textbf{Proof of part (b) for bound (\ref{ineq10}).}
\small
\begin{equation} \label{eq3434}
\begin{aligned}
&H(Y_{{\Omega}_1^c}^n)+\sum_{l=2}^{4k+2}{H(Y_{{\Omega}_l^c\bigcap {\Omega}_{l-1}}^n|W_{\mathcal S\backslash{{\Omega}_l}},Y_{{\Omega}_{l-1}^c}^n)}
\\& \leq \sum_{j}{H(Y_{1^{(2j+1)}}^n|Y_{3^{(j+1)}}^n,X_{3^{(j+1)}}^n)+H(Y_{3^{(j)}}^n|Y_{1^{(2j+1)}}^n,X_{1^{(2j+1)}}^n)}
\\&+ \sum_{j}{H(Y_{2^{(j)}}^n|Y_{3^{(j)}}^n,X_{3^{(j)}}^n)+H(Y_{1^{(2j)}}^n|Y_{2^{(j)}}^n,X_{2^{(j)}}^n,X_{3^{(j)}}^n)}+C\nonumber
\end{aligned}
\end{equation}
\begin{equation} \label{eq343413}
\begin{aligned}
\\& \leq \sum_{j}{\sum_{i=1}^{n}{H(Y_{1^{(2j+1)}}[i]|Y_{3^{(j+1)}}[i],X_{3^{(j+1)}}[i])}}
\\&+ \sum_{j}{\sum_{i=1}^{n}{H(Y_{3^{(j)}}[i]|Y_{1^{(2j+1)}}[i],X_{1^{(2j+1)}}[i])}}
\\&+ \sum_{j}{\sum_{i=1}^{n}{H(Y_{2^{(j)}}[i]|Y_{3^{(j)}}[i],X_{3^{(j)}}[i])}}
\\&+ \sum_{j}{\sum_{i=1}^{n}{H(Y_{1^{(2j)}}[i]|Y_{2^{(j)}}[i],X_{2^{(j)}}[i],X_{3^{(j)}}[i])}}+C
\\& \overset{(a)}{\leq} \sum_{j=1}{\sum_{i=1}^{n}{H(Y_{1^{(2j+1)}}[i]|V_{1^{(2j+1)}}[i])+H(Y_{3^{(j)}}[i]|V_{2^{(j)}}[i],V_{3^{(j)}}[i])}}
\\&+\sum_{j=1}{\sum_{i=1}^{n}{H(Y_{2^{(j)}}[i]|V_{2^{(j)}}[i],V_{3^{(j)}}[i])}}
\\&+\sum_{j=1}{\sum_{i=1}^{n}{H(Y_{1^{(2j)}}[i]|V_{1^{(2j)}}[i],V_{2^{(j)}}[i],V_{3^{(j)}}[i])}}+C
\\& \overset{(b)}{\leq} \sum_{i=1}^{n}{kH(Y_{1^{(1)}}[i]|V_{1^{(1)}}[i])+kH(Y_{3^{(1)}}[i]|V_{2^{(1)}}[i],V_{3^{(1)}}[i])}
\\&+ \sum_{i=1}^{n}{kH(Y_{2^{(1)}}[i]|V_{2^{(1)}}[i],V_{3^{(1)}}[i])}
\\&+ \sum_{i=1}^{n}{kH(Y_{1^{(1)}}[i]|V_{1^{(1)}}[i],V_{2^{(1)}}[i],V_{3^{(1)}}[i])}+C
\end{aligned}
\end{equation}
\normalsize
where step $(a)$ follows from $(V_{2^{(j)}}[i],V_{3^{(j)}}[i])=h_1(X_{1^{(2j+1)}}[i],Y_{1^{(2j+1)}}[i])$, $(V_{1^{(2j)}}[i],V_{3^{(j)}}[i])=h_2(X_{2^{(j)}}[i],Y_{2^{(j)}}[i])$, $(V_{1^{(2j+1)}}[i],V_{2^{(j+1)}}[i])=h_3(X_{3^{(j+1)}}[i],Y_{3^{(j+1)}}[i])$, $V_{3^{(j)}}[i]=g_3(X_{3^{(j)}}[i])$, $V_{1^{(2j+1)}}[i]=g_1(X_{1^{(2j+1)}}[i])$, and $V_{2^{(j)}}[i]=g_2(X_{2^{(j)}}[i])$. Step $(b)$ follows, since the coding schemes of $S_{1^{(j)}}$s, $S_{2^{(j)}}$s, and $S_{3^{(j)}}$s are the same respectively, $Y_{1^{(j)}}[i]$s, $Y_{2^{(j)}}[i]$s, and $Y_{3^{(j)}}[i]$s have the same probability distribution respectively.

\section{Title of Appendix A} \label{App:AppendixA}
\label{apb20}
\noindent \textbf{Proof of part (b) for bound (\ref{ineq11}).}
\small
\begin{equation} \label{eq3434}
\begin{aligned}
&H(Y_{{\Omega}_1^c}^n)+\sum_{l=2}^{4k+2}{H(Y_{{\Omega}_l^c\bigcap {\Omega}_{l-1}}^n|W_{\mathcal S\backslash{{\Omega}_l}},Y_{{\Omega}_{l-1}^c}^n)}
\\& \leq \sum_{j}{H(Y_{1^{(2j+1)}}^n|X_{2^{(j+1)}}^n)+H(Y_{3^{(j)}}^n|Y_{1^{(2j+1)}}^n,X_{1^{(2j+1)}}^n,X_{2^{(j+1)}}^n)}
\\&+ \sum_{j}{H(Y_{1^{(2j)}}^n|Y_{3^{(j)}}^n,X_{3^{(j)}}^n)+H(Y_{2^{(j)}}^n|Y_{1^{(2j)}}^n,X_{1^{(2j)}}^n,X_{3^{(j)}}^n)}
\\& \leq \sum_{j}{\sum_{i=1}^{n}{H(Y_{1^{(2j+1)}}[i]|X_{2^{(j+1)}}[i])}}
\\&+\sum_{j}{\sum_{i=1}^{n}{H(Y_{3^{(j)}}[i]|Y_{1^{(2j+1)}}[i],X_{1^{(2j+1)}}[i],X_{2^{(j+1)}}[i])}}
\\&+ \sum_{j}{\sum_{i=1}^{n}{H(Y_{1^{(2j)}}[i]|Y_{3^{(j)}}[i],X_{3^{(j)}}[i])}}
\\&+ \sum_{j}{\sum_{i=1}^{n}{H(Y_{2^{(j)}}[i]|Y_{1^{(2j)}}[i],X_{1^{(2j)}}[i],X_{3^{(j)}}[i])}}\nonumber
\end{aligned}
\end{equation}
\begin{equation} \label{eq343413}
\begin{aligned}
\\& \overset{(a)}{\leq} \sum_{j}{\sum_{i=1}^{n}{H(Y_{1^{(2j+1)}}[i]|V_{2^{(j+1)}}[i])+H(Y_{3^{(j)}}[i]|V_{2^{(j+1)}}[i],V_{3^{(j)}}[i])}}
\\&+\sum_{j}{\sum_{i=1}^{n}{H(Y_{1^{(2j)}}[i]|V_{1^{(2j)}}[i],V_{3^{(j)}}[i])}}
\\&+\sum_{j}{\sum_{i=1}^{n}{H(Y_{2^{(j)}}[i]|V_{1^{(2j)}}[i],V_{2^{(j)}}[i],V_{3^{(j)}}[i])}}+C
\\& \overset{(b)}{\leq} \sum_{i=1}^{n}{kH(Y_{1^{(1)}}[i]|V_{1^{(1)}}[i],V_{3^{(1)}}[i])+kH(Y_{3^{(1)}}[i]|V_{2^{(1)}}[i],V_{3^{(1)}}[i])}
\\&+ \sum_{i=1}^{n}{kH(Y_{2^{(1)}}[i]|V_{1^{(1)}}[i],V_{2^{(1)}}[i],V_{3^{(1)}}[i])
+kH(Y_{1^{(1)}}[i]|V_{2^{(1)}}[i])}\\&+C
\end{aligned}
\end{equation}
\normalsize
where step $(a)$ follows from $(V_{2^{(j+1)}}[i],V_{3^{(j)}}[i])=h_1(X_{1^{(2j+1)}}[i],Y_{1^{(2j+1)}}[i])$, $(V_{2^{(j)}}[i],V_{3^{(j)}}[i])=h_1(X_{1^{(2j)}}[i],Y_{1^{(2j)}}[i])$, $(V_{1^{(2j)}}[i],V_{2^{(j+1)}}[i])=h_3(X_{3^{(j)}}[i],Y_{3^{(j)}}[i])$, $V_{3^{(j)}}[i]=g_3(X_{3^{(j)}}[i])$, $V_{1^{(2j)}}[i]=g_1(X_{1^{(2j)}}[i])$, and $V_{2^{(j+1)}}[i]=g_2(X_{2^{(j+1)}}[i])$. Step $(b)$ follows, since the coding schemes of $S_{1^{(j)}}$s, $S_{2^{(j)}}$s, and $S_{3^{(j)}}$s are the same respectively, $Y_{1^{(j)}}[i]$s, $Y_{2^{(j)}}[i]$s, and $Y_{3^{(j)}}[i]$s have the same probability distribution respectively.

\section{Title of Appendix A} \label{App:AppendixA}
\label{apb21}
\noindent \textbf{Proof of part (b) for bound (\ref{ineq12}).}
\small
\begin{equation} \label{eq3434}
\begin{aligned}
&H(Y_{{\Omega}_1^c}^n)+\sum_{l=2}^{4k+2}{H(Y_{{\Omega}_l^c\bigcap {\Omega}_{l-1}}^n|W_{\mathcal S\backslash{{\Omega}_l}},Y_{{\Omega}_{l-1}^c}^n)+n\epsilon_n}
\\& \leq \sum_{j}{H(Y_{1^{(2j+1)}}^n|X_{3^{(j+1)}}^n)}
\\&+\sum_{j}{H(Y_{3^{(j)}}^n|Y_{1^{(2j+1)}}^n,X_{1^{(2j+1)}}^n,Y_{2^{(j+1)}}^n,X_{2^{(j+1)}}^n)}
\\&+ \sum_{j}{H(Y_{2^{(j)}}^n|Y_{3^{(j)}}^n,X_{3^{(j)}}^n,X_{1^{(2j+1)}}^n)}
\\&+ \sum_{j}{H(Y_{1^{(2j)}}^n|Y_{3^{(j)}}^n,X_{3^{(j)}}^n,X_{2^{(j)}}^n)}
\\& \leq \sum_{j}{\sum_{i=1}^{n}{H(Y_{1^{(2j+1)}}[i]|X_{3^{(j+1)}}[i])}}
\\&+ \sum_{j}{\sum_{i=1}^{n}{H(Y_{3^{(j)}}[i]|Y_{1^{(2j+1)}}[i],X_{1^{(2j+1)}}[i],Y_{2^{(j+1)}}[i],X_{2^{(j+1)}}[i])}}
\\&+ \sum_{j}{\sum_{i=1}^{n}{H(Y_{2^{(j)}}[i]|Y_{3^{(j)}}[i],X_{3^{(j)}}[i],X_{1^{(2j+1)}}[i])}}
\\&+ \sum_{j}{\sum_{i=1}^{n}{H(Y_{1^{(2j)}}[i]|Y_{3^{(j)}}[i],X_{3^{(j)}}[i],X_{2^{(j)}}[i])}}
\\& \overset{(a)}{\leq} \sum_{j}{\sum_{i=1}^{n}{H(Y_{1^{(2j+1)}}[i]|V_{3^{(j+1)}}[i])+H(Y_{3^{(j)}}[i]|V_{2^{(j)}}[i],V_{3^{(j)}}[i])}}
\\&+ \sum_{j}{\sum_{i=1}^{n}{H(Y_{2^{(j)}}[i]|V_{1^{(2j+1)}}[i],V_{2^{(j)}}[i])}}
\\&+ \sum_{j}{\sum_{i=1}^{n}{H(Y_{1^{(2j)}}^n|V_{1^{(2j)}}[i],V_{2^{(j)}}[i],V_{3^{(j)}}[i])}}\nonumber
\end{aligned}
\end{equation}
\begin{equation} \label{eq343413}
\begin{aligned}
\\& \overset{(b)}{\leq} \sum_{i=1}^{n}{kH(Y_{1^{(1)}}[i]|V_{3^{(1)}}[i])+kH(Y_{3^{(1)}}[i]|V_{2^{(1)}}[i],V_{3^{(1)}}[i])}
\\&+ \sum_{i=1}^{n}{kH(Y_{2^{(1)}}[i]|V_{1^{(1)}}[i],V_{2^{(1)}}[i])}
\\&+ \sum_{i=1}^{n}{kH(Y_{1^{(1)}}[i]|V_{1^{(1)}}[i],V_{2^{(1)}}[i],V_{3^{(1)}}[i])}
\end{aligned}
\end{equation}
\normalsize
where step $(a)$ follows from $(V_{2^{(j)}}[i],V_{3^{(j+1)}}[i])=h_1(X_{1^{(2j+1)}}[i],Y_{1^{(2j+1)}}[i])$, $(V_{1^{(2j+1)}}[i],V_{3^{(j-1)}}[i])=h_2(X_{2^{(j)}}[i],Y_{2^{(j)}}[i])$, $(V_{1^{(2j)}}[i],V_{2^{(j)}}[i])=h_3(X_{3^{(j)}}[i],Y_{3^{(j)}}[i])$, $V_{3^{(j+1)}}[i]=g_3(X_{3^{(j+1)}}[i])$, $V_{1^{(2j+1)}}[i]=g_1(X_{1^{(2j+1)}}[i])$, and $V_{2^{(j)}}[i]=g_2(X_{2^{(j)}}[i])$. Step $(b)$ follows, since the coding schemes of $S_{1^{(j)}}$s, $S_{2^{(j)}}$s, and $S_{3^{(j)}}$s are the same respectively, $Y_{1^{(j)}}[i]$s, $Y_{2^{(j)}}[i]$s, and $Y_{3^{(j)}}[i]$s have the same probability distribution respectively.

\section{Title of Appendix A} \label{App:AppendixA}
\label{apb22}
\noindent \textbf{Proof of part (b) for bound (\ref{ineq13}).}
\small
\begin{equation} \label{eq3434}
\begin{aligned}
&H(Y_{{\Omega}_1^c}^n)+\sum_{l=2}^{4k+2}{H(Y_{{\Omega}_l^c\bigcap {\Omega}_{l-1}}^n|W_{\mathcal S\backslash{{\Omega}_l}},Y_{{\Omega}_{l-1}^c}^n)}
\\& \leq \sum_{j}{H(Y_{1^{(2j+1)}}^n|X_{3^{(j+1)}}^n)+H(Y_{2^{(j)}}^n|Y_{1^{(2j+1)}}^n,X_{1^{(2j+1)}}^n)}
\\&+ \sum_{j}{H(Y_{3^{(j)}}^n|Y_{2^{(j)}}^n,X_{2^{(j)}}^n,X_{1^{(2j+1)}}^n)}
\\&+ \sum_{j}{H(Y_{1^{(2j)}}^n|Y_{2^{(j)}}^n,X_{2^{(j)}}^n,X_{3^{(j)}}^n)}
\\& \leq \sum_{j}{\sum_{i=1}^{n}{H(Y_{1^{(2j+1)}}[i]|X_{3^{(j+1)}}[i])}}
\\&+ \sum_{j}{\sum_{i=1}^{n}{H(Y_{2^{(j)}}[i]|Y_{1^{(2j+1)}}[i],X_{1^{(2j+1)}}[i])}}
\\&+ \sum_{j}{\sum_{i=1}^{n}{H(Y_{3^{(j)}}[i]|Y_{2^{(j)}}[i],X_{2^{(j)}}[i],X_{1^{(2j+1)}}[i])}}
\\&+ \sum_{j}{\sum_{i=1}^{n}{H(Y_{1^{(2j)}}[i]|Y_{2^{(j)}}[i],X_{2^{(j)}}[i],X_{3^{(j)}}[i])}}
\\& \overset{(a)}{\leq} \sum_{j}{\sum_{i=1}^{n}{H(Y_{1^{(2j+1)}}[i]|V_{3^{(j+1)}}[i])+H(Y_{2^{(j)}}[i]|V_{2^{(j)}}[i])}}
\\&+ \sum_{j}{\sum_{i=1}^{n}{H(Y_{3^{(j)}}[i]|V_{1^{(2j+1)}}[i],V_{2^{(j)}}[i],V_{3^{(j)}}[i])}}
\\&+ \sum_{j}{\sum_{i=1}^{n}{H(Y_{1^{(2j)}}[i]|V_{1^{(2j)}}[i],V_{2^{(j)}}[i],V_{3^{(j)}}[i])}}
\\& \overset{(b)}{\leq} \sum_{i=1}^{n}{kH(Y_{1^{(1)}}[i]|V_{3^{(1)}}[i])+kH(Y_{2^{(1)}}[i]|V_{2^{(1)}}[i])}
\\&+ \sum_{i=1}^{n}{kH(Y_{3^{(1)}}[i]|V_{1^{(1)}}[i],V_{2^{(1)}}[i],V_{3^{(1)}}[i])}
\\&+ \sum_{i=1}^{n}{kH(Y_{1^{(1)}}[i]|V_{1^{(1)}}[i],V_{2^{(1)}}[i],V_{3^{(1)}}[i])}
\end{aligned}
\end{equation}
\normalsize
where step $(a)$ follows from $(V_{2^{(j)}}[i],V_{3^{(j+1)}}[i])=h_1(X_{1^{(2j+1)}}[i],Y_{1^{(2j+1)}}[i])$, $(V_{1^{(2j)}}[i],V_{3^{(j)}}[i])=h_2(X_{2^{(j)}}[i],Y_{2^{(j)}}[i])$, $(V_{1^{(2j+1)}}[i],V_{2^{(j)}}[i])=h_3(X_{3^{(j)}}[i],Y_{3^{(j)}}[i])$, $V_{3^{(j+1)}}[i]=g_3(X_{3^{(j+1)}}[i])$, $V_{1^{(2j+1)}}[i]=g_1(X_{1^{(2j+1)}}[i])$, and $V_{2^{(j)}}[i]=g_2(X_{2^{(j)}}[i])$. Step $(b)$ follows, since the coding schemes of $S_{1^{(j)}}$s, $S_{2^{(j)}}$s, and $S_{3^{(j)}}$s are the same respectively, $Y_{1^{(j)}}[i]$s, $Y_{2^{(j)}}[i]$s, and $Y_{3^{(j)}}[i]$s have the same probability distribution respectively.

\section{Title of Appendix A} \label{App:AppendixA}
\label{apb23}
\noindent \textbf{Proof of part (b) for bound (\ref{ineq14}).}
\small
\begin{equation} \label{eq3434}
\begin{aligned}
&H(Y_{{\Omega}_1^c}^n)+\sum_{l=2}^{4k+2}{H(Y_{{\Omega}_l^c\bigcap {\Omega}_{l-1}}^n|W_{\mathcal S\backslash{{\Omega}_l}},Y_{{\Omega}_{l-1}^c}^n)}
\\& \leq \sum_{j}{H(Y_{3^{(j)}}^n|Y_{1^{(2j+3)}}^n,X_{1^{(2j+3)}}^n)+H(Y_{1^{(2j+1)}}^n|Y_{3^{(j)}}^n,X_{3^{(j)}}^n)}
\\&+ \sum_{j}{H(Y_{2^{(j)}}^n|Y_{3^{(j)}}^n,X_{3^{(j)}}^n)+H(Y_{1^{(2j)}}^n|Y_{2^{(j)}}^n,X_{2^{(j)}}^n,X_{3^{(j)}}^n)}
\\& \leq \sum_{j=1}{\sum_{i=1}^{n}{H(Y_{3^{(j)}}[i]|Y_{1^{(2j+3)}}[i],X_{1^{(2j+3)}}[i])}}
\\&+ \sum_{j=1}{\sum_{i=1}^{n}{H(Y_{1^{(2j+1)}}[i]|Y_{3^{(j)}}[i],X_{3^{(j)}}[i])}}
\\&+ \sum_{j=1}{\sum_{i=1}^{n}{H(Y_{2^{(j)}}[i]|Y_{3^{(j)}}[i],X_{3^{(j)}}[i])}}
\\&+ \sum_{j=1}{\sum_{i=1}^{n}{H(Y_{1^{(2j)}}[i]|Y_{2^{(j)}}^n,X_{2^{(j)}}^n,X_{3^{(j)}}^n)}}
\\& \overset{(a)}{\leq} \sum_{j=1}{\sum_{i=1}^{n}{H(Y_{3^{(j)}}[i]|V_{3^{(j)}}[i])+H(Y_{1^{(2j+1)}}[i]|V_{1^{(2j+1)}}[i],V_{2^{(j)}}[i])}}
\\&+ \sum_{j=1}{\sum_{i=1}^{n}{H(Y_{2^{(j)}}[i]|V_{2^{(j)}}[i],V_{3^{(j)}}[i])}}
\\&+ \sum_{j=1}{\sum_{i=1}^{n}{H(Y_{1^{(2j)}}[i]|V_{1^{(2j)}}[i],V_{2^{(j)}}[i],V_{3^{(j)}}[i])}}
\\& \overset{(b)}{\leq} \sum_{i=1}^{n}{kH(Y_{3^{(1)}}[i]|V_{3^{(1)}}[i])+kH(Y_{1^{(1)}}[i]|V_{1^{(1)}}[i],V_{2^{(1)}}[i])}
\\&+ \sum_{i=1}^{n}{kH(Y_{2^{(1)}}[i]|V_{2^{(1)}}[i],V_{3^{(1)}}[i])}
\\&+ \sum_{i=1}^{n}{kH(Y_{1^{(1)}}[i]|V_{1^{(1)}}[i],V_{2^{(1)}}[i],V_{3^{(1)}}[i])}
\end{aligned}
\end{equation}
\normalsize
where step $(a)$ follows from $(V_{2^{(j+1)}}[i],V_{3^{(j)}}[i])=h_1(X_{1^{(2j+3)}}[i],Y_{1^{(2j+3)}}[i])$, $(V_{2^{(j)}}[i],V_{3^{(j-1)}}[i])=h_1(X_{1^{(2j+1)}}[i],Y_{1^{(2j+1)}}[i])$, $(V_{1^{(2j)}}[i],V_{3^{(j)}}[i])=h_2(X_{2^{(j)}}[i],Y_{2^{(j)}}[i])$, $(V_{1^{(2j+1)}}[i],V_{2^{(j)}}[i])=h_3(X_{3^{(j)}}[i],Y_{3^{(j)}}[i])$, $V_{3^{(j+1)}}[i]=g_3(X_{3^{(j+1)}}[i])$, $V_{1^{(2j+1)}}[i]=g_1(X_{1^{(2j+1)}}[i])$, and $V_{2^{(j)}}[i]=g_2(X_{2^{(j)}}[i])$. Step $(b)$ follows, since the coding schemes of $S_{1^{(j)}}$s, $S_{2^{(j)}}$s, and $S_{3^{(j)}}$s are the same respectively, $Y_{1^{(j)}}[i]$s, $Y_{2^{(j)}}[i]$s, and $Y_{3^{(j)}}[i]$s have the same probability distribution respectively.

\section{Title of Appendix A} \label{App:AppendixA}
\label{apb24}
\noindent \textbf{Proof of part (b) for bound (\ref{ineq15}).}
\small
\begin{equation} \label{eq3434}
\begin{aligned}
&H(Y_{{\Omega}_1^c}^n)+\sum_{l=2}^{4}{H(Y_{{\Omega}_l^c\bigcap {\Omega}_{l-1}}^n|W_{\mathcal S\backslash{{\Omega}_l}},Y_{{\Omega}_{l-1}^c}^n)}
\\&\leq H(Y_{2^{(1)}}^n)+H(Y_{1^{(2)}}^n|Y_{2^{(1)}}^n,X_{2^{(1)}}^n)+H(Y_{3^{(1)}}^n|Y_{2^{(1)}}^n,X_{2^{(1)}}^n)
\\&+H(Y_{1^{(1)}}^n|Y_{3^{1}}^n,X_{3^{1}}^n,X_{2^{1}}^n)\nonumber
\end{aligned}
\end{equation}
\begin{equation} \label{eq343413}
\begin{aligned}
\\&\overset{(a)}{\leq} \sum_{i=1}^{n}{H(Y_{2^{(1)}}[i])+H(Y_{1^{(2)}}[i]|V_{1^{(2)}}[i],V_{2^{(1)}}[i],V_{3^{(1)}}[i])}
\\&+\sum_{i=1}^{n}{H(Y_{3^{(1)}}[i]|V_{2^{(1)}}[i],V_{3^{(1)}}[i])}
\\&+\sum_{i=1}^{n}{+H(Y_{1^{(1)}}[i]|V_{1^{(1)}}[i],V_{2^{(1)}}[i],V_{3^{(1)}}[i])}
\\&\overset{(b)}{\leq} \sum_{i=1}^{n}{H(Y_{2^{(1)}}[i])+H(Y_{1^{(1)}}[i]|V_{1^{(1)}}[i],V_{2^{(1)}}[i],V_{3^{(1)}}[i])}
\\&+\sum_{i=1}^{n}{H(Y_{3^{(1)}}[i]|V_{2^{(1)}}[i],V_{3^{(1)}}[i])}
\\&+\sum_{i=1}^{n}{H(Y_{1^{(1)}}[i]|V_{1^{(1)}}[i],V_{2^{(1)}}[i],V_{3^{(1)}}[i])}
\end{aligned}
\end{equation}
\normalsize
where $(a)$ follows from $V_{1^{(2)}}[i]=g_1(X_{1^{(2)}}[i])$, $(V_{2^{(1)}}[i],V_{3^{(1)}}[i])=h_1(X_{1^{(2)}}[i],Y_{1^{(2)}}[i])$, $V_{3^{(1)}}[i]=g_3(X_{3^{(1)}}[i])$, $V_{2^{(1)}}[i]=g_2(X_{2^{(1)}}[i])$, $(V_{1^{(1)}}[i],V_{2^{(1)}}[i])=h_3(X_{3^{(1)}}[i],Y_{3^{(1)}}[i])$, and $(V_{1^{(1)}}[i],V_{3^{(1)}}[i])=h_2(X_{2^{(1)}}[i],Y_{2^{(1)}}[i])$. Step $(b)$ follows, since the coding schemes of $S_{1^{(j)}}$ and $S_{2^{(j)}}$ are the same respectively, $Y_{1^{(j)}}[i]$s and $Y_{2^{(j)}}[i]$s have the same probability distribution respectively.

\section{Title of Appendix A} \label{App:AppendixA}
\label{apb25}
\noindent \textbf{Proof of part (b) for bound (\ref{ineq16}).}
\small
\begin{equation} \label{eq3434}
\begin{aligned}
&H(Y_{{\Omega}_1^c}^n)+\sum_{l=2}^{4k+2}{H(Y_{{\Omega}_l^c\bigcap {\Omega}_{l-1}}^n|W_{\mathcal S\backslash{{\Omega}_l}},Y_{{\Omega}_{l-1}^c}^n)}
\\& \leq \sum_{j}{H(Y_{3^{(j)}}^n|Y_{2^{(j+1)}}^n,X_{2^{(j+1)}}^n)+H(Y_{2^{(j)}}^n|Y_{3^{(j)}}^n,X_{3^{(j)}}^n)}
\\&+ \sum_{j}{H(Y_{1^{(2j+1)}}^n|Y_{3^{(j)}}^n,X_{3^{(j)}}^n,X_{2^{(j)}}^n)}
\\&+ \sum_{j}{H(Y_{1^{(2j)}}^n|Y_{2^{(j)}}^n,X_{2^{(j)}}^n,X_{3^{(j)}}^n)}
\\& \overset{(a)}{\leq} \sum_{j}{\sum_{i=1}^{n}{H(Y_{3^{(j)}}[i]|V_{3^{(j)}}[i])+H(Y_{2^{(j)}}[i]|V_{2^{(j)}}[i])}}
\\&+ \sum_{j}{\sum_{i=1}^{n}{H(Y_{1^{(2j+1)}}[i]|V_{1^{(2j+1)}}[i],V_{2^{(j)}}[i],V_{3^{(j)}}[i])}}
\\&+ \sum_{j}{\sum_{i=1}^{n}{H(Y_{1^{(2j)}}[i]|V_{1^{(2j)}}[i],V_{2^{(j)}}[i],V_{3^{(j)}}[i])}}
\\& \overset{(b)}{\leq} \sum_{i=1}^{n}{kH(Y_{3^{(1)}}[i]|V_{3^{(1)}}[i])+kH(Y_{2^{(1)}}[i]|V_{2^{(1)}}[i])}
\\&+ \sum_{i=1}^{n}{kH(Y_{1^{(1)}}[i]|V_{1^{(1)}}[i],V_{2^{(1)}}[i],V_{3^{(1)}}[i])}
\\&+ \sum_{i=1}^{n}{kH(Y_{1^{(1)}}[i]|V_{1^{(1)}}[i],V_{2^{(1)}}[i],V_{3^{(1)}}[i])}
\end{aligned}
\end{equation}
\normalsize
where step $(a)$ follows from $(V_{1^{(2j)}}[i],V_{3^{(j-1)}}[i])=h_2(X_{2^{(j)}}[i],Y_{2^{(j)}}[i])$,  $(V_{1^{(2j)}}[i],V_{3^{(j)}}[i])=h_2(X_{2^{(j+1)}}[i],Y_{2^{(j+1)}}[i])$, $(V_{1^{(2j+1)}}[i],V_{2^{(j)}}[i])=h_3(X_{3^{(j)}}[i],Y_{3^{(j)}}[i])$, $V_{3^{(j)}}[i]=g_3(X_{3^{(j)}}[i])$, and $V_{2^{(j)}}[i]=g_2(X_{2^{(j)}}[i])$. Step $(b)$ follows, since the coding schemes of $S_{1^{(j)}}$s, $S_{2^{(j)}}$s, and $S_{3^{(j)}}$s are the same respectively, $Y_{1^{(j)}}[i]$s, $Y_{2^{(j)}}[i]$s, and $Y_{3^{(j)}}[i]$s have the same probability distribution respectively.

\section{Title of Appendix A} \label{App:AppendixA}
\label{apb26}
\noindent \textbf{Proof of part (b) for bound (\ref{ineq17}).}
\small
\begin{equation} \label{eq3434}
\begin{aligned}
&H(Y_{{\Omega}_1^c}^n)+\sum_{l=2}^{5}{H(Y_{{\Omega}_l^c\bigcap {\Omega}_{l-1}}^n|W_{\mathcal S\backslash{{\Omega}_l}},Y_{{\Omega}_{l-1}^c}^n)}
\\&\leq H(Y_{1^{(3)}}^n)+H(Y_{3^{(1)}}^n|Y_{1^{(3)}}^n,X_{1^{(3)}}^n)+H(Y_{1^{(2)}}^n|Y_{3^{(1)}}^n,X_{3^{(1)}}^n)
\\&+H(Y_{2^{(1)}}^n|Y_{3^{1}}^n,X_{3^{1}}^n)+H(Y_{1^{(1)}}^n|Y_{2^{(1)}}^n,X_{2^{(1)}}^n,X_{3^{(1)}}^n)
\\&\overset{(a)}{\leq} \sum_{i=1}^{n}{H(Y_{1^{(3)}}[i])+H(Y_{3^{(1)}}[i]|V_{2^{(1)}}[i],V_{3^{(1)}}[i])}
\\&+ \sum_{i=1}^{n}{H(Y_{1^{(2)}}[i]|V_{1^{(2)}}[i],V_{2^{(1)}}[i],V_{3^{(1)}}[i])}
\\&+ \sum_{i=1}^{n}{H(Y_{2^{(1)}}[i]|V_{2^{(1)}}[i],V_{3^{(1)}}[i])}
\\&+\sum_{i=1}^{n}{H(Y_{1^{(1)}}[i]|V_{1^{(1)}}[i],V_{2^{(1)}}[i],V_{3^{(1)}}[i])}
\\&\overset{(b)}{\leq} \sum_{i=1}^{n}{H(Y_{1^{(1)}}[i])+H(Y_{3^{(1)}}[i]|V_{2^{(1)}}[i],V_{3^{(1)}}[i])}
\\&+ \sum_{i=1}^{n}{H(Y_{1^{(1)}}[i]|V_{1^{(1)}}[i],V_{2^{(1)}}[i],V_{3^{(1)}}[i])}
\\&+ \sum_{i=1}^{n}{H(Y_{2^{(1)}}[i]|V_{2^{(1)}}[i],V_{3^{(1)}}[i])}
\\&+\sum_{i=1}^{n}{H(Y_{1^{(1)}}[i]|V_{1^{(1)}}[i],V_{2^{(1)}}[i],V_{3^{(1)}}[i])}
\end{aligned}
\end{equation}
\normalsize
where $(a)$ follows from $V_{1^{(2)}}[i]=g_1(X_{1^{(2)}}[i])$, $(V_{2^{(1)}}[i],V_{3^{(1)}}[i])=h_1(X_{1^{(3)}}[i],Y_{1^{(3)}}[i])$, $V_{3^{(1)}}[i]=g_3(X_{3^{(1)}}[i])$, $V_{2^{(1)}}[i]=g_2(X_{2^{(1)}}[i])$, $(V_{1^{(2)}}[i],V_{2^{(1)}}[i])=h_3(X_{3^{(1)}}[i],Y_{3^{(1)}}[i])$, and $(V_{1^{(1)}}[i],V_{3^{(1)}}[i])=h_2(X_{2^{(1)}}[i],Y_{2^{(1)}}[i])$. Step $(b)$ follows, since the coding schemes of $S_{1^{(j)}}$ and $S_{2^{(j)}}$ are the same respectively, $Y_{1^{(j)}}[i]$s and $Y_{2^{(j)}}[i]$s have the same probability distribution respectively.

\section{Title of Appendix A} \label{App:AppendixA}
\label{apb27}
\noindent \textbf{Proof of part (b) for bound (\ref{ineq18}).}
\small
\begin{equation} \label{eq3434}
\begin{aligned}
&H(Y_{{\Omega}_1^c}^n)+\sum_{l=2}^{5k+2}{H(Y_{{\Omega}_l^c\bigcap {\Omega}_{l-1}}^n|W_{\mathcal S\backslash{{\Omega}_l}},Y_{{\Omega}_{l-1}^c}^n)}
\\& \leq \sum_{j}{H(Y_{1^{(3j+2)}}^n|X_{2^{(j+1)}}^n)+H(Y_{3^{(j)}}^n|Y_{1^{(3j+2)}}^n,X_{1^{(3j+2)}}^n)}
\\&+ \sum_{j}{H(Y_{2^{(j)}}^n|Y_{3^{(j)}}^n,X_{3^{(j)}}^n)+H(Y_{1^{(3j+1)}}^n|Y_{3^{(j)}}^n,X_{3^{(j)}}^n,X_{2^{(j)}}^n)}
\\&+ \sum_{j}{H(Y_{1^{(3j)}}^n|Y_{2^{(j)}}^n,X_{2^{(j)}}^n,X_{3^{(j)}}^n)}
\\& \overset{(a)}{\leq} \sum_{j}{\sum_{i=1}^{n}{H(Y_{1^{(3j+2)}}[i]|V_{2^{(j+1)}}[i])+H(Y_{3^{(j)}}[i]|V_{3^{(j)}}[i])}}
\\&+ \sum_{j}{\sum_{i=1}^{n}{H(Y_{2^{(j)}}[i]|V_{2^{(j)}}[i],V_{3^{(j)}}[i])}}\nonumber
\end{aligned}
\end{equation}
\begin{equation} \label{eq343413}
\begin{aligned}
\\&+ \sum_{j}{\sum_{i=1}^{n}{H(Y_{1^{(3j+1)}}[i]|V_{1^{(3j+1)}}[i],V_{2^{(j)}}[i],V_{3^{(j)}}[i])}}
\\&+\sum_{j}{\sum_{i=1}^{n}{H(Y_{1^{(3j)}}[i]|V_{1^{(3j)}}[i],V_{2^{(j)}}[i],V_{3^{(j)}}[i])}}
\\& \overset{(b)}{\leq} \sum_{i=1}^{n}{kH(Y_{1^{(1)}}[i]|V_{2^{(1)}}[i])+2kH(Y_{1^{(1)}}[i]|V_{1^{(1)}}[i],V_{2^{(1)}}[i],V_{3^{(1)}}[i])}
\\&+ \sum_{i=1}^{n}{kH(Y_{2^{(1)}}[i]|V_{2^{(1)}}[i],V_{3^{(1)}}[i])
+kH(Y_{3^{(1)}}[i]|V_{3^{(1)}}[i])}
\end{aligned}
\end{equation}
\normalsize
where step $(a)$ follows from $(V_{1^{(3j)}}[i],V_{3^{(j)}}[i])=h_2(X_{2^{(j)}}[i],Y_{2^{(j)}}[i])$,  $(V_{2^{(j)}}[i],V_{3^{(j)}}[i])=h_1(X_{1^{(3j+1)}}[i],Y_{1^{(3j+1)}}[i])$, $(V_{1^{(3j+1)}}[i],V_{2^{(j)}}[i])=h_3(X_{3^{(j)}}[i],Y_{3^{(j)}}[i])$, $V_{3^{(j)}}[i]=g_3(X_{3^{(j)}}[i])$, and $V_{2^{(j)}}[i]=g_2(X_{2^{(j)}}[i])$. Step $(b)$ follows, since the coding schemes of $S_{1^{(j)}}$s, $S_{2^{(j)}}$s, and $S_{3^{(j)}}$s are the same respectively, $Y_{1^{(j)}}[i]$s, $Y_{2^{(j)}}[i]$s, and $Y_{3^{(j)}}[i]$s have the same probability distribution respectively.

\section{Title of Appendix A} \label{App:AppendixA}
\noindent \textbf{Proof of part (b) for bound (\ref{ineq19}).}
\small
\begin{equation} \label{eq3434}
\begin{aligned}
&H(Y_{{\Omega}_1^c}^n)+\sum_{l=2}^{5}{H(Y_{{\Omega}_l^c\bigcap {\Omega}_{l-1}}^n|W_{\mathcal S\backslash{{\Omega}_l}},Y_{{\Omega}_{l-1}^c}^n)}
\\&=H(Y_{{1^{(2)}}}^n)+H(Y_{{3^{(1)}}}^n|W_{{1^{(2)}}},Y_{{1^{(2)}}}^n)\\&+H(Y_{{1^{(1)}}}^n|W_{[{3^{(1)}},{1^{(2)}}]},Y_{[{3^{(1)}},{1^{(2)}}]}^n)
\\&+H(Y_{{2^{(2)}}}^n|W_{[{1^{(1)}},{3^{(1)}},{1^{(2)}}]},Y_{[{1^{(1)}},{3^{(1)}},{1^{(2)}}]}^n)
\\&+H(Y_{{2^{(1)}}}^n|W_{[{2^{(2)}},{1^{(1)}},{3^{(1)}},{1^{(2)}}]},Y_{[{2^{(2)}},{1^{(1)}},{3^{(1)}},{1^{(2)}}]}^n)
\\&=H(Y_{{1^{(2)}}}^n)+H(Y_{{3^{(1)}}}^n|W_{{1^{(2)}}},Y_{{1^{(2)}}}^n,X_{{1^{(2)}}}^n)\\&+H(Y_{{1^{(1)}}}^n|W_{[{3^{(1)}},{1^{(2)}}]},Y_{[{3^{(1)}},{1^{(2)}}]}^n,X_{[{3^{(1)}},{1^{(2)}}]}^n)
\\&+H(Y_{{2^{(2)}}}^n|W_{[{1^{(1)}},{3^{(1)}},{1^{(2)}}]},Y_{[{1^{(1)}},{3^{(1)}},{1^{(2)}}]}^n,X_{[{1^{(1)}},{3^{(1)}},{1^{(2)}}]}^n)
\\&+H(Y_{{2^{(1)}}}^n|W_{[{2^{(2)}},{1^{(1)}},{3^{(1)}},{1^{(2)}}]},Y_{[{2^{(2)}},{1^{(1)}},{3^{(1)}},{1^{(2)}}]}^n,X_{[{2^{(2)}},{1^{(1)}},{3^{(1)}},{1^{(2)}}]}^n)
\\&\leq H(Y_{{1^{(2)}}}^n)+H(Y_{{3^{(1)}}}^n|Y_{{1^{(2)}}}^n,X_{{1^{(2)}}}^n)
\\&+H(Y_{{1^{(1)}}}^n|Y_{[{3^{(1)}},{1^{(2)}}]}^n,X_{[{3^{(1)}},{1^{(2)}}]}^n)
\\&+H(Y_{{2^{(2)}}}^n|Y_{[{1^{(1)}},{3^{(1)}},{1^{(2)}}]}^n,X_{[{1^{(1)}},{3^{(1)}},{1^{(2)}}]}^n)
\\&+H(Y_{{2^{(1)}}}^n|Y_{[{2^{(2)}},{1^{(1)}},{3^{(1)}},{1^{(2)}}]}^n,X_{[{2^{(2)}},{1^{(1)}},{3^{(1)}},{1^{(2)}}]}^n)
\\&= \sum_{i}{H(Y_{{1^{(2)}}}[i]|Y_{{1^{(2)}}}^{i-1})}
\\&~~~+\sum_{i}{H(Y_{{3^{(1)}}}[i]|Y_{{3^{(1)}}}^{i-1},Y_{{1^{(2)}}}^n,X_{{1^{(2)}}}^n)}\\&~~~+\sum_{i}{H(Y_{{1^{(1)}}}[i]|Y_{{1^{(1)}}}^{i-1},Y_{[{3^{(1)}},{1^{(2)}}]}^n,X_{[{3^{(1)}},{1^{(2)}}]}^n)}
\\&~~~+\sum_{i}{H(Y_{{2^{(2)}}}[i]|Y_{{2^{(2)}}}^{i-1},Y_{[{1^{(1)}},{3^{(1)}},{1^{(2)}}]}^n,X_{[{1^{(1)}},{3^{(1)}},{1^{(2)}}]}^n)}
\\&~~~+\sum_{i}{H(Y_{{2^{(1)}}}[i]|Y_{{2^{(1)}}}^{i-1},Y_{[{2^{(2)}},{1^{(1)}},{3^{(1)}},{1^{(2)}}]}^n,X_{[{2^{(2)}},{1^{(1)}},{3^{(1)}},{1^{(2)}}]}^n)}
\\&\overset{(a)}{\leq}\sum_{i}{H(Y_{{1^{(2)}}}[i])}+\sum_{i}{H(Y_{{3^{(1)}}}|V_{{3^{(1)}}}[i],V_{{2^{(2)}}}[i])}\\&+\sum_{i}{H(Y_{1^{(1)}}[i]|V_{1^{(1)}}[i],V_{3^{(1)}}[i])}
\\&+\sum_{i}{H(Y_{2^{(2)}}[i]|V_{2^{(2)}}[i],V_{1^{(1)}}[i],V_{3^{(1)}}[i])}\\&+\sum_{i}{H(Y_{2^{(1)}}[i]|V_{2^{(1)}}[i],V_{1^{(1)}}[i],V_{3^{(1)}}[i])}\nonumber
\end{aligned}
\end{equation}
\begin{equation} \label{eq343413}
\begin{aligned}
\\&\overset{(b)}{\leq}\sum_{i}{Y_{{1^{(1)}}}[i])}+\sum_{i}{H(Y_{{3^{(1)}}}|V_{{3^{(1)}}}[i],V_{{2^{(1)}}}[i])}\\&+\sum_{i}{H(Y_{1^{(1)}}[i]|V_{1^{(1)}}[i],V_{3^{(1)}}[i])}
\\&+2\sum_{i}{H(Y_{2^{(1)}}[i]|V_{2^{(1)}}[i],V_{1^{(1)}}[i],V_{3^{(1)}}[i])}
\end{aligned}
\end{equation}
\normalsize
where step $(a)$ follows from $(V_{{2^{(2)}}}[i],V_{{3^{(1)}}}[i])=h_1(X_{{1^{(2)}}}[i],Y_{{1^{(2)}}}[i])$, $(V_{{1^{(1)}}}[i],V_{{2^{(2)}}}[i])=h_3(X_{{3^{(1)}}}[i],Y_{{3^{(1)}}}[i])$, $V_{{3^{(1)}}}[i]=g_3(X_{{3^{(1)}}}[i])$, $V_{{1^{(1)}}}[i]=g_1(X_{{1^{(1)}}}[i])$, and $(V_{{2^{(1)}}}[i],V_{{3^{(1)}}}[i])=h_1(X_{{1^{(1)}}}[i],Y_{{1^{(1)}}}[i])$ . Step $(b)$ follows, since the coding schemes of $S_{1^{(j)}}$s and $S_{2^{(j)}}$s are the same respectively, $Y_{1^{(j)}}[i]$s and $Y_{2^{(j)}}[i]$s have the same probability distribution respectively. Note that $V_{2^{(2)}}[i]$ and $V_{2^{(1)}}[i]$ also have the same probability distribution.

\section{Title of Appendix A} \label{App:AppendixA}
\label{apb29}
\noindent \textbf{Proof of part (b) for bound (\ref{ineq20}).}
\small
\begin{equation} \label{eq3434}
\begin{aligned}
&H(Y_{{\Omega}_1^c}^n)+\sum_{l=2}^{5}{H(Y_{{\Omega}_l^c\bigcap {\Omega}_{l-1}}^n|W_{\mathcal S\backslash{{\Omega}_l}},Y_{{\Omega}_{l-1}^c}^n)}
\\ &\leq H(Y_{1^{(2)}}^n)+H(Y_{2^{(2)}}^n|Y_{1^{(2)}}^n,X_{1^{(2)}}^n)+H(Y_{3^{(1)}}^n|Y_{1^{(2)}}^n,X_{1^{(2)}}^n)
\\&+H(Y_{2^{(1)}}^n|Y_{3^{1}}^n,X_{3^{1}}^n)+H(Y_{1^{(1)}}^n|Y_{2^{(1)}}^n,X_{2^{(1)}}^n,X_{3^{(1)}}^n)
\\&\overset{(a)}{\leq} \sum_{i=1}^{n}{H(Y_{1^{(2)}}[i])+H(Y_{2^{(2)}}[i]|V_{2^{(2)}}[i],V_{3^{(1)}}[i])}
\\&+ \sum_{i=1}^{n}{H(Y_{3^{(1)}}[i]|V_{1^{(2)}}[i]V_{3^{(1)}}[i])+H(Y_{2^{(1)}}[i]|V_{2^{(1)}}[i]V_{3^{(1)}}[i])}
\\&+\sum_{i=1}^{n}{H(Y_{1^{(1)}}[i]|V_{1^{(1)}}[i],V_{2^{(1)}}[i],V_{3^{(1)}}[i])}
\\&\overset{(b)}{\leq} \sum_{i=1}^{n}{H(Y_{1^{(1)}}[i])+H(Y_{2^{(1)}}[i]|V_{1^{(1)}}[i],V_{2^{(1)}}[i],V_{3^{(1)}}[i])}
\\&+ \sum_{i=1}^{n}{H(Y_{3^{(1)}}[i]|V_{1^{(1)}}[i],V_{3^{(1)}}[i])+H(Y_{2^{(1)}}[i]|V_{2^{(1)}}[i],V_{3^{(1)}}[i])}
\\&+\sum_{i=1}^{n}{H(Y_{1^{(1)}}[i]|V_{1^{(1)}}[i],V_{2^{(1)}}[i],V_{3^{(1)}}[i])}
\end{aligned}
\end{equation}
\normalsize
where $(a)$ follows from $(V_{1^{(1)}}[i],V_{3^{(1)}}[i])=h_2(X_{2^{(1)}}[i],Y_{2^{(1)}}[i])$, $(V_{1^{(2)}}[i],V_{3^{(1)}}[i])=h_2(X_{2^{(2)}}[i],Y_{2^{(2)}}[i])$, $V_{3^{(1)}}[i]=g_3(X_{3^{(1)}}[i])$, $V_{2^{(1)}}[i]=g_2(X_{2^{(1)}}[i])$, $V_{2^{(2)}}[i]=g_2(X_{2^{(2)}}[i])$, $V_{1^{(2)}}[i]=g_2(X_{1^{(2)}}[i])$, $(V_{2^{(2)}}[i],V_{3^{(1)}}[i])=h_1(X_{1^{(2)}}[i],Y_{1^{(2)}}[i])$, and $(V_{1^{(2)}}[i],V_{2^{(2)}}[i])=h_3(X_{3^{(1)}}[i],Y_{3^{(1)}}[i])$. Step $(b)$ follows, since the coding schemes of $S_{1^{(j)}}$ and $S_{2^{(j)}}$ are the same respectively, $Y_{1^{(j)}}[i]$s and $Y_{2^{(j)}}[i]$s have the same probability distribution respectively.

\section{Title of Appendix A} \label{App:AppendixA}
\label{apb30}
\noindent \textbf{Proof of part (b) for bound (\ref{ineq21}).}
\small
\begin{equation} \label{eq3434}
\begin{aligned}
&H(Y_{{\Omega}_1^c}^n)+\sum_{l=2}^{5}{H(Y_{{\Omega}_l^c\bigcap {\Omega}_{l-1}}^n|W_{\mathcal S\backslash{{\Omega}_l}},Y_{{\Omega}_{l-1}^c}^n)}
\\ &\leq H(Y_{1^{(2)}}^n)+H(Y_{3^{(1)}}^n|Y_{1^{(2)}}^n,X_{1^{(2)}}^n)
\\&+H(Y_{2^{(2)}}^n|Y_{1^{(2)}}^n,X_{1^{(2)}}^n,Y_{3^{(1)}}^n,X_{3^{(1)}}^n)+H(Y_{1^{(1)}}^n|Y_{3^{1}}^n,X_{3^{1}}^n)
\\&+H(Y_{2^{(1)}}^n|Y_{1^{(1)}}^n,X_{1^{(1)}}^n,X_{3^{(1)}}^n)
\\&\overset{(a)}{\leq} \sum_{i=1}^{n}{H(Y_{1^{(2)}}[i])+H(Y_{3^{(1)}}[i]|V_{3^{(1)}}[i])}\nonumber
\end{aligned}
\end{equation}
\begin{equation} \label{eq343413}
\begin{aligned}
\\&+ \sum_{i=1}^{n}{H(Y_{2^{(2)}}[i]|V_{1^{(1)}}[i]V_{2^{(2)}}[i]V_{3^{(1)}}[i])}
\\&+\sum_{i=1}^{n}{H(Y_{1^{(1)}}[i]|V_{1^{(1)}}[i]V_{2^{(1)}}[i]V_{3^{(1)}}[i])}
\\&+ \sum_{i=1}^{n}{H(Y_{2^{(1)}}[i]|V_{1^{(1)}}[i]V_{2^{(1)}}[i]V_{3^{(1)}}[i])}
\\&\overset{(b)}{\leq} \sum_{i=1}^{n}{H(Y_{1^{(1)}}[i])+H(Y_{3^{(1)}}[i]|V_{3^{(1)}}[i])}
\\&+ \sum_{i=1}^{n}{H(Y_{2^{(1)}}[i]|V_{1^{(1)}}[i]V_{2^{(1)}}[i]V_{3^{(1)}}[i])}
\\&+\sum_{i=1}^{n}{H(Y_{1^{(1)}}[i]|V_{1^{(1)}}[i]V_{2^{(1)}}[i]V_{3^{(1)}}[i])}
\\&+ \sum_{i=1}^{n}{H(Y_{2^{(1)}}[i]|V_{1^{(1)}}[i]V_{2^{(1)}}[i]V_{3^{(1)}}[i])}
\end{aligned}
\end{equation}
\normalsize
where $(a)$ follows from $(V_{2^{(2)}}[i],V_{3^{(1)}}[i])=h_1(X_{1^{(2)}}[i],Y_{1^{(2)}}[i])$, $V_{3^{(1)}}[i]=g_3(X_{3^{(1)}}[i])$, $V_{1^{(1)}}[i]=g_1(X_{1^{(1)}}[i])$, $(V_{1^{(1)}}[i],V_{2^{(1)}}[i])=h_3(X_{3^{(1)}}[i],Y_{3^{(1)}}[i])$, $(V_{2^{(1)}}[i],V_{3^{(1)}}[i])=h_1(X_{1^{(1)}}[i],Y_{1^{(1)}}[i])$, and $(V_{1^{(1)}}[i],V_{3^{(1)}}[i])=h_2(X_{2^{(2)}}[i],Y_{2^{(2)}}[i])$. Step $(b)$ follows, since the coding schemes of $S_{1^{(j)}}$ and $S_{2^{(j)}}$ are the same respectively, $Y_{1^{(j)}}[i]$s and $Y_{2^{(j)}}[i]$s have the same probability distribution respectively.

\section{Title of Appendix A} \label{App:AppendixA}
\label{apb31}
\noindent \textbf{Proof of part (b) for bound (\ref{ineq22}).}
\small
\begin{equation} \label{eq3434}
\begin{aligned}
&H(Y_{{\Omega}_1^c}^n)+\sum_{l=2}^{5k+2}{H(Y_{{\Omega}_l^c\bigcap {\Omega}_{l-1}}^n|W_{\mathcal S\backslash{{\Omega}_l}},Y_{{\Omega}_{l-1}^c}^n)}
\\& \leq \sum_{j}{H(Y_{1^{(2j+1)}}^n|Y_{3^{(j+1)}}^nX_{3^{(j+1)}}^n)}
\\&+ \sum_{j}{H(Y_{2^{(2j+1)}}^n|Y_{1^{(2j+1)}}^nX_{1^{(2j+1)}}^n)}
\\&+ \sum_{j}{H(Y_{3^{(j)}}^n|Y_{1^{(2j+1)}}^nX_{1^{(2j+1)}}^n)}
\\&+ \sum_{j}{H(Y_{1^{(2j)}}^n|Y_{2^{(2j+1)}}^nX_{2^{(2j+1)}}^nX_{3^{(j)}}^n)}
\\&+ \sum_{j}{H(Y_{2^{(2j)}}^n|Y_{3^{(j)}}^nX_{3^{(j)}}^nX_{1^{(2j)}}^n)}
\\& \overset{(a)}{\leq} \sum_{j}{\sum_{i=1}^{n}{H(Y_{1^{(2j+1)}}[i]|V_{1^{(2j+1)}}[i])}}
\\&+ \sum_{j}{\sum_{i=1}^{n}{H(Y_{2^{(2j+1)}}[i]|V_{2^{(2j+1)}}[i]V_{3^{(j)}}[i])}}
\\&+ \sum_{j}{\sum_{i=1}^{n}{H(Y_{3^{(j)}}[i]|V_{3^{(j)}}[i])}}
\\&+ \sum_{j}{\sum_{i=1}^{n}{H(Y_{1^{(2j)}}[i]|V_{1^{(2j)}}[i],V_{2^{(2j+1)}}[i],V_{3^{(j)}}[i])}}
\\&+\sum_{j}{\sum_{i=1}^{n}{H(Y_{2^{(2j)}}[i]|V_{1^{(2j)}}[i],V_{2^{(2j)}}[i],V_{3^{(j)}}[i])}}\nonumber
\end{aligned}
\end{equation}
\begin{equation} \label{eq343413}
\begin{aligned}
\\& \overset{(b)}{\leq} \sum_{i=1}^{n}{kH(Y_{1^{(1)}}[i]|V_{1^{(1)}}[i])+kH(Y_{2^{(1)}}[i]|V_{2^{(1)}}[i]V_{3^{(1)}}[i])}
\\&+ \sum_{i=1}^{n}{kH(Y_{3^{(1)}}[i]|V_{3^{(1)}}[i])}
\\&+ \sum_{i=1}^{n}{H(Y_{1^{(1)}}[i]|V_{1^{(1)}}[i],V_{2^{(1)}}[i],V_{3^{(1)}}[i])}
\\&+ \sum_{i=1}^{n}{kH(Y_{2^{(1)}}[i]|V_{1^{(1)}}[i],V_{2^{(1)}}[i],V_{3^{(1)}}[i])}
\end{aligned}
\end{equation}
\normalsize
where step $(a)$ follows from $(V_{2^{(2j+1)}}[i],V_{3^{(j)}}[i])=h_1(X_{1^{(2j)}}[i],Y_{1^{(2j)}}[i])$,  $(V_{1^{(2j-1)}}[i],V_{2^{(2j)}}[i])=h_3(X_{3^{(j)}}[i],Y_{3^{(j)}}[i])$, $(V_{1^{(2j)}}[i],V_{3^{(j)}}[i])=h_2(X_{2^{(2j+1)}}[i],Y_{2^{(2j+1)}}[i])$, $(V_{2^{(2j+1)}}[i],V_{3^{(j)}}[i])=h_1(X_{1^{(2j+1)}}[i],Y_{1^{(2j+1)}}[i])$, $V_{3^{(j)}}[i]=g_3(X_{3^{(j)}}[i])$, $V_{1^{(2j)}}[i]=g_1(X_{1^{(2j)}}[i])$, and $V_{2^{(2j+1)}}[i]=g_2(X_{2^{(2j+1)}}[i])$. Step $(b)$ follows, since the coding schemes of $S_{1^{(j)}}$s, $S_{2^{(j)}}$s, and $S_{3^{(j)}}$s are the same respectively, $Y_{1^{(j)}}[i]$s, $Y_{2^{(j)}}[i]$s, and $Y_{3^{(j)}}[i]$s have the same probability distribution respectively.

\section{Title of Appendix A} \label{App:AppendixA}
\label{apb32}
\noindent \textbf{Proof of part (b) for bound (\ref{ineq23}).}
\small
\begin{equation} \label{eq3434}
\begin{aligned}
&H(Y_{{\Omega}_1^c}^n)+\sum_{l=2}^{5k+2}{H(Y_{{\Omega}_l^c\bigcap {\Omega}_{l-1}}^n|W_{\mathcal S\backslash{{\Omega}_l}},Y_{{\Omega}_{l-1}^c}^n)}
\\& \leq \sum_{j}{H(Y_{2^{(2j+1)}}^n|Y_{1^{(2j+3)}}^nX_{1^{(2j+3)}}^n)}
\\&+ \sum_{j}{H(Y_{1^{(2j+1)}}^n|Y_{2^{(2j+1)}}^nX_{2^{(2j+1)}}^n)}
\\&+ \sum_{j}{H(Y_{3^{(j)}}^n|Y_{2^{(2j+1)}}^nX_{2^{(2j+1)}}^n)}
\\&+ \sum_{j}{H(Y_{1^{(2j)}}^n|Y_{3^{(j)}}^nX_{3^{(j)}}^n)}
\\&+ \sum_{j}{H(Y_{2^{(2j)}}^n|Y_{1^{(2j)}}^nX_{1^{(2j)}}^nX_{3^{(j)}}^nX_{1^{(2j+1)}}^n)}
\\& \overset{(a)}{\leq} \sum_{j}{\sum_{i=1}^{n}{H(Y_{2^{(2j+1)}}[i]|V_{2^{(2j+1)}}[i])}}
\\&+ \sum_{j}{\sum_{i=1}^{n}{H(Y_{1^{(2j+1)}}[i]|V_{1^{(2j+1)}}[i]V_{3^{(j)}}[i])}}
\\&+ \sum_{j}{\sum_{i=1}^{n}{H(Y_{3^{(j)}}[i]|V_{2^{(2j+1)}}[i]V_{3^{(j)}}[i])}}
\\&+ \sum_{j}{\sum_{i=1}^{n}{H(Y_{1^{(2j)}}[i]|V_{1^{(2j)}}[i]V_{3^{(j)}}[i])}}
\\&+\sum_{j}{\sum_{i=1}^{n}{H(Y_{2^{(2j)}}[i]|V_{1^{(2j+1)}}[i],V_{2^{(2j)}}[i],V_{3^{(j)}}[i])}}
\\& \overset{(b)}{\leq} \sum_{i=1}^{n}{kH(Y_{2^{(1)}}[i]|V_{2^{(1)}}[i])+kH(Y_{1^{(1)}}[i]|V_{1^{(1)}}[i]V_{3^{(1)}}[i])}
\\&+ \sum_{i=1}^{n}{kH(Y_{3^{(1)}}[i]|V_{2^{(1)}}[i]V_{3^{(1)}}[i])}\nonumber
\\&+ \sum_{i=1}^{n}{kH(Y_{1^{(1)}}[i]|V_{1^{(1)}}[i]V_{3^{(1)}}[i])}
\end{aligned}
\end{equation}
\begin{equation} \label{eq343413}
\begin{aligned}
\\&+ \sum_{i=1}^{n}{kH(Y_{2^{(1)}}[i]|V_{1^{(1)}}[i],V_{2^{(1)}}[i],V_{3^{(1)}}[i])}
\end{aligned}
\end{equation}
\normalsize
where step $(a)$ follows from $(V_{2^{(2j)}}[i],V_{3^{(j)}}[i])=h_1(X_{1^{(2j)}}[i],Y_{1^{(2j)}}[i])$,  $(V_{1^{(2j)}}[i],V_{2^{(2j+1)}}[i])=h_3(X_{3^{(j)}}[i],Y_{3^{(j)}}[i])$, $(V_{1^{(2j+1)}}[i],V_{3^{(j)}}[i])=h_2(X_{2^{(2j+1)}}[i],Y_{2^{(2j+1)}}[i])$, $(V_{2^{(2j-1)}}[i],V_{3^{(j)}}[i])=h_1(X_{1^{(2j+1)}}[i],Y_{1^{(2j+1)}}[i])$, $V_{3^{(j)}}[i]=g_3(X_{3^{(j)}}[i])$, $V_{1^{(2j+1)}}[i]=g_1(X_{1^{(2j+1)}}[i])$, and $V_{2^{(2j+1)}}[i]=g_2(X_{2^{(2j+1)}}[i])$. Step $(b)$ follows, since the coding schemes of $S_{1^{(j)}}$s, $S_{2^{(j)}}$s, and $S_{3^{(j)}}$s are the same respectively, $Y_{1^{(j)}}[i]$s, $Y_{2^{(j)}}[i]$s, and $Y_{3^{(j)}}[i]$s have the same probability distribution respectively.

\section{Title of Appendix A} \label{App:AppendixA}
\label{apb33}
\noindent \textbf{Proof of part (b) for bound (\ref{ineq24}).}
\small
\begin{equation} \label{eq3434}
\begin{aligned}
&H(Y_{{\Omega}_1^c}^n)+\sum_{l=2}^{6}{H(Y_{{\Omega}_l^c\bigcap {\Omega}_{l-1}}^n|W_{\mathcal S\backslash{{\Omega}_l}},Y_{{\Omega}_{l-1}^c}^n)}
\\ &\leq H(Y_{1^{(3)}}^n)+H(Y_{3^{(1)}}^n|Y_{1^{(3)}}^nX_{1^{(3)}}^n)
\\&+H(Y_{2^{(2)}}^n|Y_{1^{(3)}}^n,X_{1^{(3)}}^n,X_{3^{(1)}}^n)
\\&+H(Y_{1^{(2)}}^n|Y_{2^{2}}^n,X_{2^{2}}^n,X_{3^{1}}^n)+H(Y_{2^{(1)}}^n|Y_{3^{(1)}}^n,X_{3^{(1)}}^n)
\\&+H(Y_{1^{(1)}}^n|Y_{2^{(1)}}^n,X_{2^{(1)}}^n,X_{3^{(1)}}^n)
\\&\overset{(a)}{\leq} \sum_{i=1}^{n}{H(Y_{1^{(3)}}[i])+H(Y_{3^{(1)}}[i]|V_{3^{(1)}}[i])}
\\&+ \sum_{i=1}^{n}{H(Y_{2^{(2)}}[i]|V_{2^{(2)}}[i]V_{3^{(1)}}[i])}
\\&+\sum_{i=1}^{n}{H(Y_{1^{(2)}}[i]|V_{1^{(2)}}[i]V_{2^{(2)}}[i]V_{3^{(1)}}[i])}
\\&+ \sum_{i=1}^{n}{H(Y_{2^{(1)}}[i]|V_{1^{(1)}}[i]V_{2^{(1)}}[i]V_{3^{(1)}}[i])}
\\&+ \sum_{i=1}^{n}{H(Y_{1^{(1)}}[i]|V_{1^{(1)}}[i]V_{2^{(1)}}[i]V_{3^{(1)}}[i])
+n\epsilon_n}
\\&\overset{(b)}{\leq} \sum_{i=1}^{n}{H(Y_{1^{(1)}}[i])+H(Y_{3^{(1)}}[i]|V_{3^{(1)}}[i])}
\\&+ \sum_{i=1}^{n}{H(Y_{2^{(1)}}[i]|V_{2^{(1)}}[i]V_{3^{(1)}}[i])}
\\&+\sum_{i=1}^{n}{H(Y_{1^{(1)}}[i]|V_{1^{(1)}}[i]V_{2^{(1)}}[i]V_{3^{(1)}}[i])}
\\&+\sum_{i=1}^{n}{H(Y_{2^{(1)}}[i]|V_{1^{(1)}}[i]V_{2^{(1)}}[i]V_{3^{(1)}}[i])}
\\&+\sum_{i=1}^{n}{H(Y_{1^{(1)}}[i]|V_{1^{(1)}}[i]V_{2^{(1)}}[i]V_{3^{(1)}}[i])
+n\epsilon_n}
\end{aligned}
\end{equation}
\normalsize
where $(a)$ follows from $V_{2^{(2)}}[i]=g_2(X_{2^{(2)}}[i])$, $V_{3^{(1)}}[i]=g_3(X_{3^{(1)}}[i])$, $V_{2^{(1)}}[i]=g_2(X_{2^{(1)}}[i])$, $(V_{2^{(2)}}[i],V_{3^{(1)}}[i])=h_1(X_{1^{(3)}}[i],Y_{1^{(3)}}[i])$, $(V_{1^{(1)}}[i],V_{2^{(1)}}[i])=h_3(X_{3^{(1)}}[i],Y_{3^{(1)}}[i])$, $(V_{1^{(2)}}[i],V_{3^{(1)}}[i])=h_2(X_{2^{(2)}}[i],Y_{2^{(2)}}[i])$, $(V_{1^{(1)}}[i],V_{3^{(1)}}[i])=h_2(X_{2^{(1)}}[i],Y_{2^{(1)}}[i])$, and $(V_{1^{(1)}}[i],V_{3^{(1)}}[i])=h_2(X_{2^{(1)}}[i],Y_{2^{(1)}}[i])$. Step $(b)$ follows, since the coding schemes of $S_{1^{(j)}}$ and $S_{2^{(j)}}$ are the same respectively, $Y_{1^{(j)}}[i]$s and $Y_{2^{(j)}}[i]$s have the same probability distribution respectively.

\section{Title of Appendix A} \label{App:AppendixA}
\label{apb34}
\noindent \textbf{Proof of part (b) for bound (\ref{ineq25}).}
\small
\begin{equation} \label{eq3434}
\begin{aligned}
&H(Y_{{\Omega}_1^c}^n)+\sum_{l=2}^{6}{H(Y_{{\Omega}_l^c\bigcap {\Omega}_{l-1}}^n|W_{\mathcal S\backslash{{\Omega}_l}},Y_{{\Omega}_{l-1}^c}^n)}
\\&\leq H(Y_{1^{(3)}}^n)+H(Y_{3^{(1)}}^n|Y_{1^{(3)}}^n,X_{1^{(3)}}^n)
\\&+H(Y_{2^{(2)}}^n|Y_{1^{(3)}}^n,X_{1^{(3)}}^n,X_{3^{(1)}}^n)
\\&+H(Y_{1^{(2)}}^n|Y_{2^{2}}^n,X_{2^{2}}^n,X_{3^{1}}^n)+H(Y_{2^{(1)}}^n|Y_{3^{(1)}}^n,X_{3^{(1)}}^n)
\\&+H(Y_{1^{(1)}}^n|Y_{2^{(1)}}^n,X_{2^{(1)}}^n,X_{3^{(1)}}^n)
\\&\overset{(a)}{\leq} \sum_{i=1}^{n}{H(Y_{1^{(3)}}[i])+H(Y_{3^{(1)}}[i]|V_{1^{(3)}}[i],V_{3^{(1)}}[i])}
\\&+ \sum_{i=1}^{n}{H(Y_{2^{(2)}}[i]|V_{2^{(2)}}[i],V_{3^{(1)}}[i])}
\\&+\sum_{i=1}^{n}{H(Y_{1^{(2)}}[i]|V_{1^{(2)}}[i],V_{2^{(2)}}[i],V_{3^{(1)}}[i])}
\\&+\sum_{i=1}^{n}{H(Y_{2^{(1)}}[i]|V_{2^{(1)}}[i]V_{3^{(1)}}[i])}
\\&+\sum_{i=1}^{n}{H(Y_{1^{(1)}}[i]|V_{1^{(1)}}[i],V_{2^{(1)}}[i],V_{3^{(1)}}[i]}
\\&\overset{(b)}{\leq} \sum_{i=1}^{n}{H(Y_{1^{(1)}}[i])+H(Y_{3^{(1)}}[i]|V_{1^{(1)}}[i],V_{3^{(1)}}[i])}
\\&+\sum_{i=1}^{n}{H(Y_{2^{(1)}}[i]|V_{2^{(1)}}[i],V_{3^{(1)}}[i])}
\\&+\sum_{i=1}^{n}{H(Y_{1^{(1)}}[i]|V_{1^{(1)}}[i],V_{2^{(1)}}[i],V_{3^{(1)}}[i])}
\\&+\sum_{i=1}^{n}{H(Y_{2^{(1)}}[i]|V_{2^{(1)}}[i]V_{3^{(1)}}[i])}
\\&+\sum_{i=1}^{n}{H(Y_{1^{(1)}}[i]|V_{1^{(1)}}[i],V_{2^{(1)}}[i],V_{3^{(1)}}[i])}
\end{aligned}
\end{equation}
\normalsize
where $(a)$ follows from $V_{2^{(2)}}[i]=g_2(X_{2^{(2)}}[i])$, $V_{3^{(1)}}[i]=g_3(X_{3^{(1)}}[i])$, $V_{2^{(1)}}[i]=g_2(X_{2^{(1)}}[i])$, $(V_{2^{(2)}}[i],V_{3^{(1)}}[i])=h_1(X_{1^{(3)}}[i],Y_{1^{(3)}}[i])$, $(V_{1^{(1)}}[i],V_{2^{(1)}}[i])=h_3(X_{3^{(1)}}[i],Y_{3^{(1)}}[i])$, $(V_{1^{(2)}}[i],V_{3^{(1)}}[i])=h_2(X_{2^{(2)}}[i],Y_{2^{(2)}}[i])$, $(V_{1^{(1)}}[i],V_{3^{(1)}}[i])=h_2(X_{2^{(1)}}[i],Y_{2^{(1)}}[i])$, and $(V_{1^{(3)}}[i],V_{2^{(1)}}[i])=h_3(X_{3^{(1)}}[i],Y_{3^{(1)}}[i])$. Step $(b)$ follows, since the coding schemes of $S_{1^{(j)}}$ and $S_{2^{(j)}}$ are the same respectively, $Y_{1^{(j)}}[i]$s and $Y_{2^{(j)}}[i]$s have the same probability distribution respectively.

\section{Title of Appendix A} \label{App:AppendixA}
\label{apb35}
\noindent \textbf{Proof of part (b) for bound (\ref{ineq26}).}
\small
\begin{equation} \label{eq3434}
\begin{aligned}
&H(Y_{{\Omega}_1^c}^n)+\sum_{l=2}^{6k+2}{H(Y_{{\Omega}_l^c\bigcap {\Omega}_{l-1}}^n|W_{\mathcal S\backslash{{\Omega}_l}},Y_{{\Omega}_{l-1}^c}^n)}
\\& \leq \sum_{j}{H(Y_{1^{(3j+2)}}^n|Y_{3^{(j+1)}}^nX_{3^{(j+1)}}^n)+H(Y_{2^{(2j+1)}}^n|Y_{1^{(3j+2)}}^nX_{1^{(3j+2)}}^n)}
\\&+ \sum_{j}{H(Y_{3^{(j)}}^n|Y_{1^{(3j+2)}}^nX_{1^{(3j+2)}}^n)}
\\&+ \sum_{j}{H(Y_{1^{(3j+1)}}^n|Y_{3^{(j)}}^nX_{3^{(j)}}^nY_{2^{(2j+1)}}^nX_{2^{(2j+1)}}^n)}\nonumber
\end{aligned}
\end{equation}
\begin{equation} \label{eq343413}
\begin{aligned}
\\&+ \sum_{j}{H(Y_{2^{(2j)}}^n|Y_{1^{(3j+1)}}^nX_{1^{(3j+1)}}^nX_{3^{(j)}}^n)}
\\&+ \sum_{j}{H(Y_{1^{(3j)}}^n|Y_{2^{(2j)}}^nX_{2^{(2j)}}^nX_{3^{(j)}}^n)}
\\& \overset{(a)}{\leq} \sum_{j}{\sum_{i=1}^{n}{H(Y_{1^{(3j+2)}}[i]|V_{1^{(3j+2)}}[i])}}
\\&+ \sum_{j}{\sum_{i=1}^{n}{H(Y_{2^{(2j+1)}}[i]|V_{2^{(2j+1)}}[i]V_{3^{(j)}}[i])}}
\\&+ \sum_{j}{\sum_{i=1}^{n}{H(Y_{3^{(j)}}[i]|V_{2^{(2j+1)}}[i]V_{3^{(j)}}[i])}}
\\&+ \sum_{j}{\sum_{i=1}^{n}{H(Y_{1^{(3j+1)}}[i]|V_{1^{(3j+1)}}[i]V_{2^{(2j)}}[i]V_{3^{(j)}}[i])}}
\\&+\sum_{j}{\sum_{i=1}^{n}{H(Y_{2^{(2j)}}[i]|V_{2^{(2j)}}[i]V_{3^{(j)}}[i])}}
\\&+\sum_{j}{\sum_{i=1}^{n}{H(Y_{1^{(3j)}}[i]|V_{1^{(3j)}}[i]V_{2^{(2j)}}[i]V_{3^{(j)}}[i])}}
\\& \overset{(b)}{\leq} \sum_{i=1}^{n}{kH(Y_{1^{(1)}}[i]|V_{1^{(1)}}[i])}
\\&+ \sum_{i=1}^{n}{kH(Y_{3^{(1)}}[i]|V_{2^{(1)}}[i]V_{3^{(1)}}[i])}
\\&+ \sum_{i=1}^{n}{2kH(Y_{1^{(1)}}[i]|V_{1^{(1)}}[i]V_{2^{(1)}}[i]V_{3^{(1)}}[i])}
\\&+ \sum_{i=1}^{n}{2kH(Y_{2^{(1)}}[i]|V_{2^{(1)}}[i]V_{3^{(1)}}[i])}
\end{aligned}
\end{equation}
\normalsize
where step $(a)$ follows from $(V_{1^{(3j)}}[i],V_{3^{(j)}}[i])=h_2(X_{2^{(2j)}}[i],Y_{2^{(2j)}}[i])$,  $(V_{2^{(2j)}}[i],V_{3^{(j)}}[i])=h_1(X_{1^{(3j+1)}}[i],Y_{1^{(3j+1)}}[i])$, $(V_{1^{(3j-1)}}[i],V_{2^{(2j)}}[i])=h_3(X_{3^{(j)}}[i],Y_{3^{(j)}}[i])$, $(V_{1^{(3j+1)}}[i],V_{3^{(j)}}[i])=h_2(X_{2^{(2j+1)}}[i],Y_{2^{(2j+1)}}[i])$, $(V_{2^{(2j+1)}}[i],V_{3^{(j)}}[i])=h_1(X_{1^{(3j+2)}}[i],Y_{1^{(3j+2)}}[i])$, $V_{3^{(j)}}[i]=g_3(X_{3^{(j)}}[i])$, and $V_{2^{(2j)}}[i]=g_2(X_{2^{(2j)}}[i])$. Step $(b)$ follows, since the coding schemes of $S_{1^{(j)}}$s, $S_{2^{(j)}}$s, and $S_{3^{(j)}}$s are the same respectively, $Y_{1^{(j)}}[i]$s, $Y_{2^{(j)}}[i]$s, and $Y_{3^{(j)}}[i]$s have the same probability distribution respectively.

\section{Title of Appendix A} \label{App:AppendixA}
\label{apb36}
\noindent \textbf{Proof of part (b) for bound (\ref{ineq27}).}
\small
\begin{equation} \label{eq3434}
\begin{aligned}
&H(Y_{{\Omega}_1^c}^n)+\sum_{l=2}^{6}{H(Y_{{\Omega}_l^c\bigcap {\Omega}_{l-1}}^n|W_{\mathcal S\backslash{{\Omega}_l}},Y_{{\Omega}_{l-1}^c}^n)}
\\ &\leq H(Y_{2^{(2)}}^n)+H(Y_{1^{(3)}}^n|Y_{2^{(2)}}^n,X_{2^{(2)}}^n)+H(Y_{3^{(1)}}^n|Y_{1^{(3)}}^n,X_{1^{(3)}}^n)
\\&+H(Y_{1^{(2)}}^n|Y_{3^{1}}^n,X_{3^{1}}^n,X_{2^{2}}^n)+H(Y_{2^{(1)}}^n|Y_{3^{(1)}}^n,X_{3^{(1)}}^n)\\&+H(Y_{1^{(1)}}^n|Y_{2^{(1)}}^n,X_{2^{(1)}}^n,X_{3^{(1)}}^n)
\\&\overset{(a)}{\leq} \sum_{i=1}^{n}{H(Y_{2^{(2)}}[i])+H(Y_{1^{(3)}}[i]|V_{1^{(3)}}[i],V_{2^{(2)}}[i]V_{3^{(1)}}[i])}
\\&+\sum_{i=1}^{n}{H(Y_{3^{(1)}}[i]|V_{3^{(1)}}[i])+H(Y_{1^{(2)}}[i]|V_{1^{(2)}}[i]V_{2^{(2)}}[i]V_{3^{(1)}}[i])}
\\&+\sum_{i=1}^{n}{H(Y_{2^{(1)}}[i]|V_{2^{(1)}}[i]V_{3^{(1)}}[i])+H(Y_{1^{(1)}}[i]|V_{1^{(1)}}[i]V_{2^{(1)}}[i]V_{3^{(1)}}[i])}\nonumber
\end{aligned}
\end{equation}
\begin{equation} \label{eq343413}
\begin{aligned}
\\&\overset{(b)}{\leq} \sum_{i=1}^{n}{H(Y_{2^{(1)}}[i])+H(Y_{1^{(1)}}[i]|V_{1^{(1)}}[i]V_{2^{(1)}}[i]V_{3^{(1)}}[i])}
\\&+\sum_{i=1}^{n}{H(Y_{3^{(1)}}[i]|V_{3^{(1)}}[i])+H(Y_{1^{(1)}}[i]|V_{1^{(1)}}[i]V_{2^{(1)}}[i]V_{3^{(1)}}[i])}
\\&+\sum_{i=1}^{n}{H(Y_{2^{(1)}}[i]|V_{2^{(1)}}[i]V_{3^{(1)}}[i])+H(Y_{1^{(1)}}[i]|V_{1^{(1)}}[i]V_{2^{(1)}}[i]V_{3^{(1)}}[i])}
\end{aligned}
\end{equation}
\normalsize
where $(a)$ follows from $V_{2^{(2)}}[i]=g_2(X_{2^{(2)}}[i])$, $V_{3^{(1)}}[i]=g_3(X_{3^{(1)}}[i])$, $V_{2^{(1)}}[i]=g_2(X_{2^{(1)}}[i])$, $(V_{1^{(1)}}[i],V_{3^{(1)}}[i])=h_2(X_{2^{(1)}}[i],Y_{2^{(1)}}[i])$, $(V_{1^{(2)}}[i],V_{2^{(1)}}[i])=h_3(X_{3^{(1)}}[i],Y_{3^{(1)}}[i])$, $(V_{2^{(2)}}[i],V_{3^{(1)}}[i])=h_1(X_{1^{(3)}}[i],Y_{1^{(3)}}[i])$, and $(V_{1^{(3)}}[i],V_{3^{(1)}}[i])=h_2(X_{2^{(2)}}[i],Y_{2^{(2)}}[i])$. Step $(b)$ follows, since the coding schemes of $S_{1^{(j)}}$ and $S_{2^{(j)}}$ are the same respectively, $Y_{1^{(j)}}[i]$s and $Y_{2^{(j)}}[i]$s have the same probability distribution respectively.
\\
\section{Title of Appendix A} \label{App:AppendixA}
\label{apb37}
\noindent \textbf{Proof of part (b) for bound (\ref{ineq28}).}
\small
\begin{equation} \label{eq3434}
\begin{aligned}
&H(Y_{{\Omega}_1^c}^n)+\sum_{l=2}^{7}{H(Y_{{\Omega}_l^c\bigcap {\Omega}_{l-1}}^n|W_{\mathcal S\backslash{{\Omega}_l}},Y_{{\Omega}_{l-1}^c}^n)}
\\ &\leq H(Y_{1^{(4)}}^n)+H(Y_{2^{(2)}}^n|Y_{1^{(4)}}^n,X_{1^{(4)}}^n)+H(Y_{1^{(3)}}^n|Y_{2^{(2)}}^n,X_{2^{(2)}}^n)
\\&+H(Y_{3^{(1)}}^n|Y_{1^{3}}^n,X_{1^{3}}^n)+H(Y_{1^{(2)}}^n|Y_{3^{(1)}}^n,X_{2^{(2)}}^n,X_{3^{(1)}}^n)
\\&+H(Y_{2^{(1)}}^n|Y_{3^{(1)}}^n,X_{3^{(1)}}^n)+H(Y_{1^{(1)}}^n|Y_{2^{1}}^n,X_{2^{1}}^n,X_{3^{(1)}}^n)
\\&\overset{(a)}{\leq} \sum_{i=1}^{n}{H(Y_{1^{(4)}}[i])+H(Y_{2^{(2)}}[i]|V_{2^{(2)}}[i],V_{3^{(1)}}[i])}
\\&+\sum_{i=1}^{n}{H(Y_{1^{(3)}}[i]|V_{1^{(3)}}[i],V_{2^{(2)}}[i],V_{3^{(1)}}[i])+H(Y_{3^{(1)}}[i]|V_{3^{(1)}}[i])}
\\&+\sum_{i=1}^{n}{H(Y_{1^{(2)}}[i]|V_{1^{(2)}}[i],V_{2^{(2)}}[i],V_{3^{(1)}}[i])+H(Y_{2^{(1)}}[i]|V_{2^{(1)}}[i],V_{3^{(1)}}[i])}
\\&+\sum_{i=1}^{n}{H(Y_{1^{(1)}}[i]|V_{1^{(1)}}[i],V_{2^{(1)}}[i],V_{3^{(1)}}[i])}
\\&\overset{(b)}{\leq} \sum_{i=1}^{n}{H(Y_{1^{(1)}}[i])+2H(Y_{2^{(1)}}[i]|V_{2^{(1)}}[i],V_{3^{(1)}}[i])}
\\&+\sum_{i=1}^{n}{3H(Y_{1^{(1)}}[i]|V_{1^{(1)}}[i],V_{2^{(1)}}[i],V_{3^{(1)}}[i])+H(Y_{3^{(1)}}[i]|V_{3^{(1)}}[i])}
\end{aligned}
\end{equation}
\normalsize
where $(a)$ follows from $V_{2^{(2)}}[i]=g_2(X_{2^{(2)}}[i])$, $V_{3^{(1)}}[i]=g_3(X_{3^{(1)}}[i])$, $V_{2^{(1)}}[i]=g_2(X_{2^{(1)}}[i])$, $(V_{1^{(1)}}[i],V_{3^{(1)}}[i])=h_2(X_{2^{(1)}}[i],Y_{2^{(1)}}[i])$, $(V_{1^{(2)}}[i],V_{2^{(1)}}[i])=h_3(X_{3^{(1)}}[i],Y_{3^{(1)}}[i])$, $(V_{2^{(2)}}[i],V_{3^{(1)}}[i])=h_1(X_{1^{(3)}}[i],Y_{1^{(3)}}[i])$, $(V_{1^{(3)}}[i],V_{3^{(1)}}[i])=h_2(X_{2^{(2)}}[i],Y_{2^{(2)}}[i])$, and $(V_{2^{(2)}}[i],V_{3^{(1)}}[i])=h_1(X_{1^{(4)}}[i],Y_{1^{(4)}}[i])$. Step $(b)$ follows, since the coding schemes of $S_{1^{(j)}}$ and $S_{2^{(j)}}$ are the same respectively, $Y_{1^{(j)}}[i]$s and $Y_{2^{(j)}}[i]$s have the same probability distribution respectively.

\bibliographystyle{IEEEtran}
{\footnotesize
\bibliography{Allertonref}}
\end{document}